\numberwithin{equation}{section}
\newcommand{\ee}{ \end{equation}}
\newcommand{\be}{\begin{equation}}
\theoremstyle{plain}
\newtheorem*{thm*}{Theorem}
\theoremstyle{plain}
\newtheorem{thm}{Theorem}[section]
\newtheorem{conjec}{Conjecture}[section]
\newtheorem{lem}[thm]{Lemma}
\theoremstyle{definition}
\newtheorem{defn}[thm]{Definition}
\newtheorem*{defn*}{Definition}
\newtheorem{rem}[thm]{Remark}
\numberwithin{equation}{section}
	\definecolor{myblue}{rgb}{.72,.83,.97}
	\definecolor{myred}{rgb}{0.9, 0.44, 0.37}
\begin{document}

\begin{titlepage}
\begin{center}
\vspace{4cm}
{\Huge\bfseries Flops and Fibral Geometry \\ of E$_7$-models \\  }
\vspace{2cm}
{
\LARGE  Mboyo Esole$^{\spadesuit}$ and Sabrina Pasterski$^\dagger$\\}
\vspace{1cm}

{\large $^{\spadesuit}$ Department of Mathematics, Northeastern University}\par
{ \em 360 Huntington Avenue, Boston, MA 02115, USA}\par
 \scalebox{.95}{Email: \quad {\tt  j.esole@northeastern.edu }}\par
\vspace{.3cm}

 \noindent{ $^\dagger$  Princeton Center for Theoretical Science,}\\
{\em Jadwin Hall, Princeton, NJ 08544, USA}\\
 \scalebox{.95}{Email:\quad {\tt  sabrina.pasterski@princeton.edu}}\par 
 \vspace{2cm}
{ \bf{Abstract:}}\\
\end{center}
{\date{\today\  \currenttime}}

An E$_7$-Weierstrass  model is conjectured to have eight distinct crepant resolutions whose flop diagram is a Dynkin diagram of type E$_8$.  In previous work, we explicitly constructed four distinct resolutions, for which the flop diagram formed a D$_4$  sub-diagram.  The goal of this paper is to explore those properties of a resolved E$_7$-model which are not invariant under flops. In particular, we examine the fiber degenerations, identify the fibral divisors up to isomorphism, and study violation of flatness appearing over certain codimension-three loci in the base, where a component of the fiber grows in dimension from a rational curve to a rational surface.  For each crepant resolution, we compute the triple intersection polynomial and the linear form induced by the second Chern class, as well as the holomorphic and ordinary Euler characteristics, and the signature of each fibral divisor.  We identify the isomorphism classes of the rational surfaces that break the flatness of the fibration.   Moreover, we explicitly show that the D$_4$ flops correspond to the crepant resolutions of the orbifold given by $\mathbb{C}^3$ quotiented by the Klein four-group.

 \vfill

\noindent{Keywords: Elliptic fibrations, Crepant morphisms, Resolution of singularities, Weierstrass models}

\end{titlepage}

\tableofcontents

\newpage 

\section{Introduction and summary}

The study of  crepant resolutions  of singular Weierstrass models lies at the crossroads  of  algebraic  geometry, number theory, and string theory.
In mathematics, interest in elliptic fibrations  started with the pioneering work of Kodaira, N\'eron, Tate, Deligne, and others. Ever since,  elliptic fibrations  have appeared in a variety of situations  from algebraic geometry to number theory. In Calabi--Yau compactifications, elliptic fibrations are ubiquitous, as a large majority of known Calabi--Yau varieties are elliptically fibered. These hold a special place in birational geometry.   Meanwhile elliptic fibrations  play a key role in M-theory and  F-theory compactifications.  
In both F-theory and M-theory, elliptic fibrations offer elegant geometrizations of aspects of supersymmetric gauge theories. 
In particular,   elliptically fibrations are at the heart of the constructions of new super-conformal field theories that often have no alternative description. 
 The extended K\"ahler cones of Calabi--Yau threefolds are closely related to the Coulomb phases of  five-dimensional supersymmetric gauge theories with eight supersymmetric charges.

Simple types of elliptic fibrations are the so called $G$-models where $G$ is a simply connected compact Lie group associated with a Kodaira fiber whose dual graph is the affine version of the Dynkin diagram of the Lie algebra $\mathfrak{g}$ of $G$. 
Among the $G$-models with $G$ a compact exceptional Lie group, those not fully well understood are E$_6$ and E$_7$, since they are the only ones allowing  flops. 
Unfortunately,  not all possible minimal models corresponding to a Weierstrass model of type E$_6$ and E$_7$ are known  explicitly.  
 An E$_7$-model  is conjectured to have eight distinct crepant resolutions whose flop diagram is a Dynkin diagram of type E$_8$  (see Figure \ref{Figure:IG}).  
 Any two crepant resolutions of the same variety are related by a finite sequence of flops and have the same Euler characteristic and Hodge numbers.\footnote{The generating function for the Euler characteristic of  $G$-models, as well as the Hodge numbers for the Calabi-Yau threefold case can be found in \cite{Euler}. Additional characteristic invariants preserved by flops are given in \cite{EK.Invariant}.  } 
The goal of this paper is to explore those properties of a resolved E$_7$-model which are not invariant under flops.  
Examples include  the geometry of its fibral divisors as well as its intersection ring and the geometry of its fat fibers.  
The key results of this paper are as follows: 
\begin{enumerate}[label=\alph*)]
\item {\it Fiber degenerations of an E$_7$-model} (see Table \ref{Table:E7.Split})

We study degenerations of the generic curve of the fibral divisors using the hyperplane arrangement I($\mathbf{56}$, E$_7$). 
In this way, we avoid the box graph method used in \cite{Box} and correct a few discrepancies in the literature \cite{Diaconescu:1998cn,Box,DelZotto, Bhardwaj:2018yhy,E7}. 
In particular, for the chamber where the affine node can degenerate, we identify the correct splitting which was missing in \cite{Diaconescu:1998cn} and inaccurate in \cite{Box, DelZotto}.  We regard this result as a completion of the work of  Diaconescu and Entin  \cite{Diaconescu:1998cn}.

\item {\it D$_4$-flops of the E$_7$-model as flops of the orbifold  $\mathbb{C}^3/(\mathbb{Z}_2\times \mathbb{Z}_2)$} (see Figure \ref{Fig:C322})

In  \cite{E7}, we explicitly constructed  four of the eight conjectured E$_7$ minimal models and showed that their flops define a Dynkin diagram of type D$_4$. We now give a direct answer  to a question raised in that paper.  Namely, we show that the flops between the minimal models Y$_4$, Y$_5$,  Y$_6$, and Y$_8$ correspond to flops between the four crepant resolutions of the orbifold  $\mathbb{C}^3/(\mathbb{Z}_2\times \mathbb{Z}_2)$, which is isomorphic to 
 the binomial variety $$
\mathbb{C}[u_1,u_2,u_3,t]/(t^2- u_1 u_2 u_3 ).
$$

\item {\it Triple intersection numbers} (see Theorem \ref{thm:TripleCY3}) 

We compute the triple intersection polynomial of the fibral divisors in all chambers for which we have an explicit crepant resolution of the singularities
$$
F_m(\phi)= \int_{Y_m} (\sum_{a=0}^7 D_a \phi_a)^3, \quad m=4, 5, 6, 8. 
$$

The triple intersection depends on the chamber but not the blowups used to reach it. This data is useful for determining the matter representations which appear in $F$-theory compactifications. We consider specializations to the case of Calabi-Yau manifolds, and further to $S^2=-8$ and $g=0$, relevant to the CFT literature.

\item {\it Isomorphism classes of fibral divisors} (see Table \ref{Table:Div}, Section~\ref{isomcl})

 We identify the fibral divisors of an E$_7$-model up to isomorphism by exploiting the known crepant resolutions. In doing so, we also learn something about those chambers for which we do not have an explicit geometric construction. When two chambers are connected by a flop that does not change D$_i$, the isomorphism class of D$_i$ remains the same. This implies that we can easily move from chamber to chamber by flops and learn about the fiber geometry (modulo some empty entries).

\item  {\it Characteristic numbers of fibral divisors}  (see Theorems \ref{thm:Chern2} and \ref{thm:char.fibral})

We give the linear functions induced on $H^2(Y, \mathbb{Z})$ by the second Chern class of the minimal models Y=\{Y$_4$, Y$_5$, Y$_6$, Y$_8$\}
$$\mu:\quad H^2(Y,\mathbb{Z})\to \mathbb{Z}\quad \quad D\mapsto\int_Y D\cdot c_2 (TY)$$
as well as characteristic numbers of the fibral divisiors $D_a$ for each of these varieties.  In particular, we consider the signature $\tau(D$)  as well as the holomorphic $\chi_0(D)$ and ordinary $\chi(D)$ Euler characteristics.
These characteristic numbers provide precious information about the structure of the fibral divisors. For instance, the signature and the Euler characteristic also give information on the number of charged hypermultiplets and the number of rational curves appearing when the E$_7$ fibers degenerate.

\item{\it Fat fibers and loss of flatness}   (see Figure \ref{Fig:Q})

In each minimal model  Y$_a$ we analyze, the generic fiber C$_6$ of the fibral divisor D$_6$ specializes to a rational surface Q$_a$ over a codimension-three locus in the base, and does not give a flat fibration. 
The rational surfaces Q$_8$ and Q$_6$ are isomorphic to the Hirzebruch surfaces $\mathbb{F}_2$ and $\mathbb{F}_1$, respectively, and are related by  the usual Nagata transformation with  Q$_5$ serving as the intermediate surface.
{The rational surface Q$_5$ is obtained by blowing-up a point of the ($-1$)-curve of Q$_6\cong  \mathbb{F}_1$ or by blowing-up a point of the curve of self-intersection $2$ in $Q_8\cong\mathbb{F}_2$.}
The rational surface Q$_4$ is obtained by blowing-up the intersection of the two ($-1$)-curves of Q$_5$.

\end{enumerate}

This paper is organized as follows.  We spend Section 2 reviewing the necessarily preliminaries. We then present results a)-f) summarized above in sections 3-8, respectively.

\section{Preliminaries}

  In this section, we introduce the E$_7$ Weierstrass model, give our conventions for the Dynkin diagrams of E$_7$ and E$_8$, write out the weights for the fundamental representation $\mathbf{56}$ of E$_7$, and review 
  the structure of the hyperplane arrangement  I($\text{E}_7, \mathbf{56}$)   as analyzed in \cite{E7}.

\subsection{Defining the E$_7$-model}\label{sec:defe7}

Consider a smooth variety  $B$, a line bundle $\mathscr{L}\to B$,  and define the projective bundle 
$$\pi: X_0=\mathbb{P}_B[\mathscr{O}_B\oplus \mathscr{L}^{\otimes 2}\oplus \mathscr{L}^{\otimes 3}]\to B.$$
A Weierstrass model is the zero scheme of a section of the bundle\footnote{Here $\mathscr{O}_{X_0}(1)$ is the dual of the tautological line bundle of $X_0$ 
.} $\mathscr{O}_{X_0}(3)\otimes \pi^* \mathscr{L}^{\otimes 6}$.
We can make this more explicit by denoting the relative projective coordinates of $X_0$ as $[z:x:y]$.  Then a Weierstrass model can be written as the vanishing locus\footnote{Given a set of line bundles $\mathscr{L}_i$ with sections $f_i$ we denote their zero scheme $f_1=f_2=\cdots=f_r=0$ as $V(f_1, \ldots, f_r)$.} 
\begin{equation}\label{eq:w1}
V(y^2z-x^3- f xz^2 -g z^3),
\end{equation}
where $f$ is a section of $\mathscr{L}^{\otimes 4}$ and $g$ is  a section of $\mathscr{L}^{\otimes 6}$.
The discriminant and the $j$-invariant are 
 $$
\Delta = 4 f^3 +27 g^2, \quad j=1728 \frac{4f^3}{\Delta}.
$$
The discriminant locus $V(\Delta)$ consists of points in $B$ over which the fiber is singular.

 Let $B$ be a smooth variety and  $S=V(s)$ be a smooth prime divisor in  $B$ given by  the zero locus of a section $s$ of a line bundle $\mathscr{S}$.  
 An E$_7$-model is given by a Weierstrass model such that (see Proposition 4 of \cite{Neron} and Step 9 of Tate's algorithm)
\begin{equation}\label{eq:E7}
 y^2z = x^3 + a s^3 x z^2 + b s^5 z^3,
\end{equation}
where   $a$ is a section of $\mathscr{L}^{\otimes 4} \otimes \mathscr{S}^{-\otimes 3}$, 
and $b$ is a section of $\mathscr{L}^{\otimes 6} \otimes \mathscr{S}^{-\otimes 5}$.  Moreover, we assume that $a$ and $b$ have zero valuation along $S$ and $V(a)$ and $V(b)$ are smooth divisors in $B$ which intersect transversally. 
For this model the discriminant 
\begin{equation}
\Delta=s^9 (4a^3 + 27 b^2 s)
\end{equation}
factorizes into components $S$ and $\Delta'= V(4a^3 + 27b^2 s)$.
The generic fiber over $S$ is of Kodaira type  III$^*$ and that over $\Delta'$ is of type I$_1$. 
The divisor $\Delta'$ has cuspidal singularities at $V(a,b)$ which worsen to triple point singularities over $V(a,b,s)$. $S$ and $\Delta'$ do not intersect transversally,  but rather at the triple points  $(s,a^3)$. At the support of this intersection, we have the following degeneration: 
\begin{equation}
\Delta'\cap S = V(s,a):\quad  \text{III}^*+\text{I}_1\to   \text{II}^*.
\end{equation}

\subsection{Root system of E$_7$  and the weights of its fundamental representation $\mathbf{56}$}  

\label{sec:IE756}

The Lie algebra of type E$_7$ has dimension $133$, and Weyl group of order $2^{10}\cdot 3^4 \cdot 5 \cdot 7$ \cite[Plate VI]{Bourbaki.GLA46}.
The Cartan matrix of E$_7$ is
\begin{equation}
\scalebox{1}{$
\begin{array}{c}
\alpha_1 \\
\alpha_2 \\
\alpha_3 \\
\alpha_4 \\
\alpha_5 \\
\alpha_6 \\
\alpha_7 
\end{array}
\left(
\begin{array}{ccccccc}
 2 & -1 & 0 & 0 & 0 & 0 & 0 \\
 -1 & 2 & -1 & 0 & 0 & 0 & 0 \\
 0 & -1 & 2 & -1 & 0 & 0 & -1 \\
 0 & 0 & -1 & 2 & -1 & 0 & 0 \\
 0 & 0 & 0 & -1 & 2 & -1 & 0 \\
 0 & 0 & 0 & 0 & -1 & 2 & 0 \\
 0 & 0 & -1 & 0 & 0 & 0 & 2 \\
\end{array}
\right)$}
\end{equation}
where the $i$th row gives the coordinates of the simple root $\alpha_i$ in the basis of fundamental weights. 
As compared to Bourbaki's tables,  our ($\alpha_1,\alpha_2,\alpha_3,\alpha_4,\alpha_5,\alpha_6,\alpha_7)$  are 
denoted ($\alpha_1,\alpha_3,\alpha_4,\alpha_5,\alpha_6, \alpha_7, \alpha_2)$, respectively.

The affine  Dynkin diagrams for $\widetilde{\text{E}}_7$ and $\widetilde{\text{E}}_8$ are provided in Figure~\ref{Fig:E7} and Figure~\ref{Fig:E8}, respectively.  The Hasse diagram for the representation $\bf{56}$ of E$_7$ is given in Figure~\ref{Fig:Hasse56}.  The  affine Dynkin diagram of type $\widetilde{\text{E}}_7$ appears as the dual graph of the generic fiber over $S$ for the E$_7$-model.

\begin{figure}[H]
\begin{center}
\scalebox{.9}{
\begin{tikzpicture}
				\node[draw,circle,thick,scale=1,fill=black,label=below:{\scalebox{1.2}{ $\alpha_0$}}] (0) at (0,0){$1$};
				\node[draw,circle,thick,scale=1,label=below:{\scalebox{1.2}{$\alpha_1$}}] (1) at (1.2,0){$2$};
				\node[draw,circle,thick,scale=1,label=below:{\scalebox{1.2}{$\alpha_2$}}] (2) at (2.4,0){$3$};
				\node[draw,circle,thick,scale=1,label=below:{\scalebox{1.2}{$\alpha_3$}}] (3) at (3.6,0){$4$};
				\node[draw,circle,thick,scale=1,label=below:{\scalebox{1.2}{$\alpha_4$}}] (4) at (4.8,0){$3$};
				\node[draw,circle,thick,scale=1,label=below:{\scalebox{1.2}{$\alpha_5$}}] (5) at (6,0){$2$};
				\node[draw,circle,thick,scale=1,label=below:{\scalebox{1.2}{$\alpha_6$}}] (6) at (7.2,0){$1$};
				\node[draw,circle,thick,scale=1, label=above:{\scalebox{1.2}{$\alpha_7$}}] (7) at (3.6,1.2){$2$};
				\draw[thick] (0)--(1)--(2)--(3)--(4)--(5)--(6);
				\draw[thick]  (3)--(7);
					\end{tikzpicture}}
					\caption{Affine  Dynkin diagram of type $\widetilde{\text{E}}_7$, which reduces to the Dynkin diagram of type E$_7$ when the 
					 black node is removed.
					The numbers inside the nodes are the multiplicities of the Kodaira fiber of type III$^*$ and  the Dynkin labels of the highest root. The root $\alpha_1$ is the highest weight of the adjoint representation while $\alpha_6$ is the highest weight of the fundamental representation $\mathbf{56}$. \label{Fig:E7} 
					}
\end{center}
\end{figure}
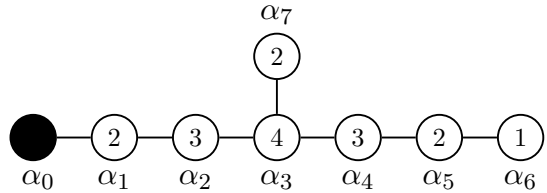

\begin{figure}[H]
\begin{center}
\scalebox{.9}{
\def\arraystretch{1.5}
\begin{tikzpicture}
				\node[draw,circle,thick,scale=1,fill=black,label=below:{\scalebox{1.2}{ $\alpha_0$}}] (0) at (0,0){$1$};
				\node[draw,circle,thick,scale=1,label=below:{\scalebox{1.2}{$\alpha_1$}}] (1) at (1.2,0){$2$};
				\node[draw,circle,thick,scale=1,label=below:{\scalebox{1.2}{$\alpha_2$}}] (2) at (2.4,0){$3$};
				\node[draw,circle,thick,scale=1,label=below:{\scalebox{1.2}{$\alpha_3$}}] (3) at (3.6,0){$4$};
				\node[draw,circle,thick,scale=1,label=below:{\scalebox{1.2}{$\alpha_4$}}] (4) at (4.8,0){$5$};
				\node[draw,circle,thick,scale=1,label=below:{\scalebox{1.2}{$\alpha_5$}}] (5) at (6,0){$6$};
				\node[draw,circle,thick,scale=1,label=below:{\scalebox{1.2}{$\alpha_6$}}] (6) at (7.2,0){$4$};
				\node[draw,circle,thick,scale=1, label=below:{\scalebox{1.2}{$\alpha_7$}}] (7) at (8.4,0){$2$};
				\node[draw,circle,thick,scale=1, label=above:{\scalebox{1.2}{$\alpha_8$}}] (8) at (6,1.2){$3$};
				\draw[thick] (0)--(1)--(2)--(3)--(4)--(5)--(6)--(7);
				\draw[thick]  (5)--(8);
					\end{tikzpicture}}
					\caption{Affine  Dynkin diagram of type $\widetilde{\text{E}}_8$, which reduces to the Dynkin diagram of type E$_8$
					when the black node is removed. 
					The numbers in the nodes are the multiplicities of the Kodaira fiber of type II$^*$. 
					\label{Fig:E8}
					}
\end{center}
\end{figure}
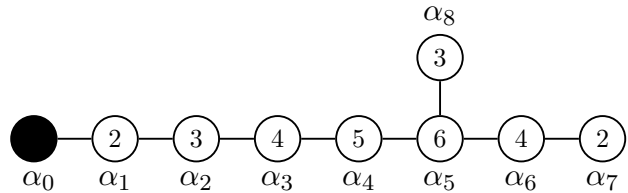

\clearpage

\def\nrx{1} 

\begin{figure}[htb]
\begin{minipage}[c]{0.35\textwidth}
\scalebox{.6}{
\begin{tikzpicture}[ 
x=1.1cm,y=1.4cm,rotate=0,transform shape, color=black]
\tikzstyle{weight}=[circle,thick,draw,minimum size=5mm,inner sep=1pt];
\tikzstyle{root}=[minimum size=0.5cm];
\tikzstyle{sign}=[circle,thick,draw,minimum size=5mm,inner sep=1pt];
\tikzstyle{positive}=[circle,thick,draw,minimum size=5mm,inner sep=1pt,fill=myred];
\tikzstyle{negative}=[circle,thick,draw,minimum size=5mm,inner sep=1pt,fill=myblue];
				\node[positive] (1) at (0,0) {$1$};
			         \node[positive] (2) at (0,-1) {$2$};
			         \node[positive] (3) at (0,-2) {$3$};
			         \node[positive] (4) at (0,-3) {$4$};
				\node[positive] (5) at (0,-4) {$5$};
				
			  \node[positive] (7) at (-1,-5) {$7$} ;         \node[positive] (6) at (1,-5) {$6$};
			         			         \node[positive] (8) at (-2,-6) {$8$} ; \node[positive] (9) at (0,-6) {$9$};
						           \node[positive] (10) at (-1,-7) {\scalebox{.8}{$10$}} ; \node[positive] (11) at (1,-7) {\scalebox{.8}{$11$}};
			           \node[positive] (12) at (0,-8) {\scalebox{.8}{$12$}} ; \node[positive] (13) at (2,-8) {\scalebox{.8}{$13$}};
			           \node[positive] (14) at (1,-9) {\scalebox{.8}{$14$}} ; \node[positive] (15) at (-1,-9) {\scalebox{.8}{$15$}}; ; \node[positive] (16) at (3,-9) {\scalebox{.8}{$16$}};
  \node[positive] (17) at (2,-10) {\scalebox{.8}{$17$}} ; \node[positive] (18) at (0,-10) {\scalebox{.8}{$18$}};
  \node[positive] (21) at (1,-11) {\scalebox{.8}{$21$}}; 
   \node[positive] (22) at (-1,-11) {\scalebox{.8}{$22$}};						         
 \node[positive] (24) at (0,-12) {\scalebox{.8}{$24$}};

   \node[sign] (19) at (4,-10) {\scalebox{.8}{$19$}};						         
  \node[sign] (20) at (3,-11) {\scalebox{.8}{$20$}} ; 
    \node[sign] (23) at (2,-12) {\scalebox{.8}{$23$}} ;

     \node[sign] (25) at (-2,-12) {\scalebox{.8}{$25$}};	
     \node[sign] (26) at (1,-13) {\scalebox{.8}{$26$}} ;   
	  \node[sign] (27) at (0,-13) {\scalebox{.8}{$27$}}; 

	  \node[sign] (28) at (-1,-13) {\scalebox{.8}{$28$}};	

        \node[sign] (29) at (1,-14) {\scalebox{.8}{$29$}} ;

    \node[sign] (30) at (0,-14) {\scalebox{.8}{$30$}} ;

	 \node[sign] (31) at (-1,-14) {\scalebox{.8}{$31$}};

    \node[sign] (32) at (2,-15) {\scalebox{.8}{$32$}} ;   
  
     \node[sign] (34) at (-2,-15) {\scalebox{.8}{$34$}};		

  \node[sign] (37) at (-3,-16) {\scalebox{.8}{$37$}};	
		
	   \node[sign] (38) at (-4,-17) {\scalebox{.8}{$38$}} ;

	\node[negative] (33) at (0,-15) {\scalebox{.8}{$33$}}; 			         
            \node[negative] (35) at (1,-16) {\scalebox{.8}{$35$}} ; \node[negative] (36) at (-1,-16) {\scalebox{.8}{$36$}}; 
           \node[negative] (39) at (0,-17) {\scalebox{.8}{$39$}} ; \node[negative] (40) at (-2,-17) {\scalebox{.8}{$40$}};	
                \node[negative] (41) at (-3,-18) {\scalebox{.8}{$41$}} ; \node[negative] (42) at (1,-18) {\scalebox{.8}{$42$}} ; \node[negative] (43) at (-1,-18) {\scalebox{.8}{$43$}};	
            
               \node[negative] (56) at (0,-27) {\scalebox{.8}{$56$}}; 
			         \node[negative] (55) at (0,-26) {\scalebox{.8}{$55$}}; 
			         \node[negative] (54) at (0,-25) {\scalebox{.8}{$54$}}; 
			         \node[negative] (53) at (0,-24) {\scalebox{.8}{$53$}}; 
				\node[negative] (52) at (0,-23) {\scalebox{.8}{$52$}}; 
			         \node[negative] (51) at (-1,-22) {\scalebox{.8}{$51$}};  \node[negative] (50) at (1,-22) {\scalebox{.8}{$50$}}; ;
			         			         \node[negative] (49) at (2,-21){\scalebox{.8}{$49$}};  \node[negative] (48) at (0,-21)  {\scalebox{.8}{$48$}}; 
						           \node[negative] (47) at (1,-20) {\scalebox{.8}{$47$}} ; \node[negative] (46) at (-1,-20) {\scalebox{.8}{$46$}};
						       \node[negative] (44) at (-2,-19) {\scalebox{.8}{$44$}};      \node[negative] (45) at (0,-19) {\scalebox{.8}{$45$}} ;

\draw[thick]  (1)--   node[root,right] {$\alpha_{6}$}  (2);  
\draw[thick] (2)-- node[root,right] {$\alpha_{5}$} (3);
\draw[thick] (3)-- node[root,right] {$\alpha_{4}$} (4);      

\draw[thick] (4)-- node[root,right] {$\alpha_{3}$} (5);

\draw[thick] (5)-- node[root,right] {$\alpha_{7}$} (6);

\draw[thick] (5)-- node[root,right] {$\alpha_{2}$} (7);

\draw[thick] (6)-- node[root,right] {$\alpha_{2}$} (9);

\draw[thick] (7)-- node[root,right] {$\alpha_{1}$} (8);

\draw[thick] (7)-- node[root,right] {$\alpha_{7}$} (9);

\draw[thick] (8)-- node[root,right] {$\alpha_{7}$} (10);

\draw[thick] (9)-- node[root,right] {$\alpha_{1}$} (10);

\draw[thick] (9)-- node[root,right] {$\alpha_{3}$} (11);

\draw[thick] (10)-- node[root,right] {$\alpha_{3}$} (12);

\draw[thick] (11)-- node[root,right] {$\alpha_{1}$} (12);

\draw[thick] (11)-- node[root,right] {$\alpha_{4}$} (13);

\draw[thick] (12)-- node[root,right] {$\alpha_{4}$} (14);

\draw[thick] (12)-- node[root,right] {$\alpha_{2}$} (15);

\draw[thick] (13)-- node[root,right] {$\alpha_{1}$} (14);

\draw[thick] (13)-- node[root,right] {$\alpha_{5}$} (16);

\draw[thick] (14)-- node[root,right] {$\alpha_{5}$} (17);

\draw[thick] (14)-- node[root,right] {$\alpha_{2}$} (18);

\draw[thick] (15)-- node[root,right] {$\alpha_{4}$} (18);

\draw[thick] (16)-- node[root,right] {$\alpha_{1}$} (17);

\draw[thick] (16)-- node[root,right] {$\alpha_{6}$} (19);

\draw[thick] (17)-- node[root,right] {$\alpha_{6}$} (20);

\draw[thick] (17)-- node[root,right] {$\alpha_{2}$} (21);

\draw[thick] (18)-- node[root,right] {$\alpha_{5}$} (21);

\draw[thick] (18)-- node[root,right] {$\alpha_{3}$} (22);

\draw[line width=1mm] (19)-- node[root,right] {$\alpha_{1}$} (20);

\draw[line width=1mm] (20)-- node[root,right] {$\alpha_{2}$} (23);

\draw[thick] (21)-- node[root,right] {$\alpha_{6}$} (23);

\draw[thick] (21)-- node[root,right] {$\alpha_{3}$} (24);

\draw[thick] (22)-- node[root,right] {$\alpha_{5}$} (24);

\draw[thick] (22)-- node[root,right] {$\alpha_{7}$} (25);

\draw[line width=1mm] (23)-- node[root,right] {$\alpha_{3}$} (26);

\draw[thick] (24)-- node[root,right] {$\alpha_{6}$} (26);

\draw[thick] (24)-- node[root,right, near end] {$\alpha_{4}$} (27);

\draw[thick] (24)-- node[root,left] {$\alpha_{7}$} (28);

\draw[thick] (25)-- node[root,left] {$\alpha_{5}$} (28);

\draw[line width=1mm] (26)-- node[root,right] {$\alpha_{4}$} (29);

\draw[line width=1mm] (26)-- node[root,below, near end] {$\alpha_{7}$} (30);

\draw[thick] (27)-- node[root,above, near start] {$\alpha_{6}$} (29);

\draw[thick] (27)-- node[root,above, near start] {$\alpha_{7}$} (31);

\draw[thick] (28)-- node[root,right] {$\alpha_{6}$} (30);

\draw[thick] (28)-- node[root,left] {$\alpha_{4}$} (31);

\draw[line width=1mm] (29)-- node[root,right] {$\alpha_{5}$} (32);

\draw[thick] (29)-- node[root,right] {$\alpha_{7}$} (33);

\draw[thick] (30)-- node[root,right, near start] {$\alpha_{4}$} (33);

\draw[thick] (31)-- node[root,left] {$\alpha_{6}$} (33);

\draw[thick] (31)-- node[root,left] {$\alpha_{3}$} (34);

\draw[thick] (32)-- node[root,right] {$\alpha_{7}$} (35);

\draw[thick] (33)-- node[root,right] {$\alpha_{5}$} (35);

\draw[thick] (33)-- node[root,right] {$\alpha_{3}$} (36);

\draw[thick] (34)-- node[root,right] {$\alpha_{6}$} (36);

\draw[thick] (34)-- node[root,right] {$\alpha_{2}$} (37);

\draw[thick] (35)-- node[root,right] {$\alpha_{3}$} (39);

\draw[thick] (36)-- node[root,right] {$\alpha_{5}$} (39);

\draw[thick] (36)-- node[root,right] {$\alpha_{2}$} (40);

\draw[thick] (37)-- node[root,right] {$\alpha_{1}$} (38);

\draw[thick] (37)-- node[root,right] {$\alpha_{6}$} (40);

\draw[thick] (38)-- node[root,right] {$\alpha_{6}$} (41);

\draw[thick] (39)-- node[root,right] {$\alpha_{4}$} (42);

\draw[thick] (39)-- node[root,right] {$\alpha_{2}$} (43);

\draw[thick] (40)-- node[root,right] {$\alpha_{1}$} (41);

\draw[thick] (40)-- node[root,right] {$\alpha_{5}$} (43);
\draw[thick] (41)-- node[root,right] {$\alpha_{5}$} (44);

\draw[thick] (42)-- node[root,right] {$\alpha_{2}$} (45);

\draw[thick] (43)-- node[root,right] {$\alpha_{1}$} (44);

\draw[thick] (43)-- node[root,right] {$\alpha_{4}$} (45);

\draw[thick] (44)-- node[root,right] {$\alpha_{4}$} (46);

\draw[thick] (45)-- node[root,right] {$\alpha_{1}$} (46);

\draw[thick] (45)-- node[root,right] {$\alpha_{3}$} (47);

\draw[thick] (46)-- node[root,right] {$\alpha_{3}$} (48);

\draw[thick] (47)-- node[root,right] {$\alpha_{1}$} (48);

\draw[thick] (47)-- node[root,right] {$\alpha_{7}$} (49);

\draw[thick] (48)-- node[root,right] {$\alpha_{7}$} (50);

\draw[thick] (48)-- node[root,right] {$\alpha_{2}$} (51);

\draw[thick] (49)-- node[root,right] {$\alpha_{1}$} (50);

\draw[thick] (50)-- node[root,right] {$\alpha_{2}$} (52);

\draw[thick] (51)-- node[root,right] {$\alpha_{7}$} (52);

\draw[thick] (52)-- node[root,right] {$\alpha_{3}$} (53);

\draw[thick] (53)-- node[root,right] {$\alpha_{4}$} (54);

\draw[thick] (54)-- node[root,right] {$\alpha_{5}$} (55);

\draw[thick] (55)-- node[root,right] {$\alpha_{6}$} (56);

					\end{tikzpicture}
					}
					\end{minipage}
					\hfill
					\begin{minipage}[c]{0.65\textwidth}
					\begin{center}
					\scalebox{.7}{
					
					\begin{tabular}{c}
					\begin{tikzpicture}[scale=.75]
			\tikzmath{\x1 = 2.8;};
				\node[draw,circle,thick,scale=1] (1) at (1.2*\x1,2){$\varpi_{19}$};
				\node[draw,circle,thick,scale=1] (2) at (2.4*\x1,2){$\varpi_{20}$};
				\node[draw,circle,thick,scale=1] (3) at (3.6*\x1,2){$\varpi_{23}$};
				\node[draw,circle,thick,scale=1] (4) at (4.8*\x1,2){$\varpi_{26}$};
				\node[draw,circle,thick,scale=1] (5) at (6*\x1,2){$\varpi_{29}$};
				\node[draw,circle,thick,scale=1] (6) at (7.2*\x1,2){$\varpi_{32}$};
				\node[draw,circle,thick,scale=1] (8) at (4.8*\x1,2*\x1){$\varpi_{30}$};
				\draw[thick, ->] (1)--node[below] {$-\alpha_{1}$}(2);
				\draw[thick, ->] (2)--node[below] {$-\alpha_2$}(3);
				\draw[thick, ->](3)--node[below] {$-\alpha_3$}(4);
				\draw[thick, ->] (4)--node[below] {$-\alpha_4$}(5);
				\draw[thick, ->] (5)--node[below] {$-\alpha_5$}(6);
				\draw[thick,->] (4)--node[right] {$-\alpha_7$}(8);
					\end{tikzpicture}
				
					\vspace{.5cm}
\\
					\\~\\
					$
  \begin{array}{c}
 \varpi_{1}\\
\varpi_{2}\\
\varpi_{3}\\
\varpi_{4}\\
\varpi_{5}\\
\varpi_{6}\\
\varpi_{7}\\
\varpi_{8}\\
\varpi_{9}\\
\varpi_{10}\\
\varpi_{11}\\
\varpi_{12}\\
\varpi_{13}\\
\varpi_{14}\\
\varpi_{15}\\
\varpi_{16}\\
\varpi_{17}\\
\varpi_{18}\\
\varpi_{19}\\
\varpi_{20}\\
\varpi_{21}\\
\varpi_{22}\\
\varpi_{23}\\
\varpi_{24}\\
\varpi_{25}\\
\varpi_{26}\\
\varpi_{27}\\
\varpi_{28}
\end{array}
 \left[
\begin{array}{ccccccc}
 0 & 0 & 0 & 0 & 0 & 1 & 0 \\
 0 & 0 & 0 & 0 & 1 & -1 & 0 \\
 0 & 0 & 0 & 1 & -1 & 0 & 0 \\
 0 & 0 & 1 & -1 & 0 & 0 & 0 \\
 0 & 1 & -1 & 0 & 0 & 0 & 1 \\
 0 & 1 & 0 & 0 & 0 & 0 & -1 \\
 1 & -1 & 0 & 0 & 0 & 0 & 1 \\
 -1 & 0 & 0 & 0 & 0 & 0 & 1 \\
 1 & -1 & 1 & 0 & 0 & 0 & -1 \\
 -1 & 0 & 1 & 0 & 0 & 0 & -1 \\
 1 & 0 & -1 & 1 & 0 & 0 & 0 \\
 -1 & 1 & -1 & 1 & 0 & 0 & 0 \\
 1 & 0 & 0 & -1 & 1 & 0 & 0 \\
 -1 & 1 & 0 & -1 & 1 & 0 & 0 \\
 0 & -1 & 0 & 1 & 0 & 0 & 0 \\
 1 & 0 & 0 & 0 & -1 & 1 & 0 \\
 -1 & 1 & 0 & 0 & -1 & 1 & 0 \\
 0 & -1 & 1 & -1 & 1 & 0 & 0 \\
 1 & 0 & 0 & 0 & 0 & -1 & 0 \\
 -1 & 1 & 0 & 0 & 0 & -1 & 0 \\
 0 & -1 & 1 & 0 & -1 & 1 & 0 \\
 0 & 0 & -1 & 0 & 1 & 0 & 1 \\
 0 & -1 & 1 & 0 & 0 & -1 & 0 \\
 0 & 0 & -1 & 1 & -1 & 1 & 1 \\
 0 & 0 & 0 & 0 & 1 & 0 & -1 \\
 0 & 0 & -1 & 1 & 0 & -1 & 1 \\
 0 & 0 & 0 & -1 & 0 & 1 & 1 \\
 0 & 0 & 0 & 1 & -1 & 1 & -1 \\
\end{array}
\right]
\quad 
 \begin{array}{c}
 \varpi_{56}\\
\varpi_{55}\\
\varpi_{54}\\
\varpi_{53}\\
\varpi_{52}\\
\varpi_{51}\\
\varpi_{50}\\
\varpi_{49}\\
\varpi_{48}\\
\varpi_{47}\\
\varpi_{46}\\
\varpi_{45}\\
\varpi_{44}\\
\varpi_{43}\\
\varpi_{42}\\
\varpi_{41}\\
\varpi_{40}\\
\varpi_{39}\\
\varpi_{38}\\
\varpi_{37}\\
\varpi_{36}\\
\varpi_{35}\\
\varpi_{34}\\
\varpi_{33}\\
\varpi_{32}\\
\varpi_{31}\\
\varpi_{30}\\
\varpi_{29}\\
\end{array}
\left[
\begin{array}{ccccccc}

 0 & 0 & 0 & 0 & 0 & -1 & 0 \\

 0 & 0 & 0 & 0 & -1 & 1 & 0 \\

 0 & 0 & 0 & -1 & 1 & 0 & 0 \\

 0 & 0 & -1 & 1 & 0 & 0 & 0 \\

 0 & -1 & 1 & 0 & 0 & 0 & -1 \\

 0 & -1 & 0 & 0 & 0 & 0 & 1 \\

 -1 & 1 & 0 & 0 & 0 & 0 & -1 \\

 1 & 0 & 0 & 0 & 0 & 0 & -1 \\

 -1 & 1 & -1 & 0 & 0 & 0 & 1 \\

 1 & 0 & -1 & 0 & 0 & 0 & 1 \\

 -1 & 0 & 1 & -1 & 0 & 0 & 0 \\

 1 & -1 & 1 & -1 & 0 & 0 & 0 \\

 -1 & 0 & 0 & 1 & -1 & 0 & 0 \\

 1 & -1 & 0 & 1 & -1 & 0 & 0 \\

 0 & 1 & 0 & -1 & 0 & 0 & 0 \\

 -1 & 0 & 0 & 0 & 1 & -1 & 0 \\

 1 & -1 & 0 & 0 & 1 & -1 & 0 \\

 0 & 1 & -1 & 1 & -1 & 0 & 0 \\

 -1 & 0 & 0 & 0 & 0 & 1 & 0 \\

 1 & -1 & 0 & 0 & 0 & 1 & 0 \\

 0 & 1 & -1 & 0 & 1 & -1 & 0 \\

 0 & 0 & 1 & 0 & -1 & 0 & -1 \\

 0 & 1 & -1 & 0 & 0 & 1 & 0 \\

 0 & 0 & 1 & -1 & 1 & -1 & -1 \\

 0 & 0 & 0 & 0 & -1 & 0 & 1 \\

 0 & 0 & 1 & -1 & 0 & 1 & -1 \\

 0 & 0 & 0 & 1 & 0 & -1 & -1 \\

 0 & 0 & 0 & -1 & 1 & -1 & 1 \\
\end{array}
\right]
$
\\
\\
\end{tabular}
}
\end{center}

\caption{{\it Left:} Hasse diagram for the weights, $\varpi$, of the representation $\bf{56}$ of E$_7$.   A blue (resp. red) node corresponds to a weight for which $\langle \varpi,\phi \rangle$ is always strictly negative (resp. strictly positive). White nodes  correspond to weights such that the form $\langle \varpi,\phi \rangle$ can be either positive or negative. 
 Each chamber of   I($\bf{56}$, E$_7$) is uniquely determined by the signs taken by the white nodes  (see Section \eqref{sec:IE756}). {\it Top right:}  Up to an overall sign, there are seven  weights of the representation $\mathbf{56}$ of E$_7$ which intersect the interior of the dual fundamental Weyl chamber. 
The partial order of weights corresponds to a decorated Dynkin diagram of type E$_7$.   We write $\varpi_i \xrightarrow{-\alpha_\ell} \varpi_j$ to indicate that  $\varpi_i-\alpha_\ell= \varpi_j$. {\it Bottom right:} Weights $\varpi_i$ of the representation $\bf{56}$ expressed in the basis of fundamental weights of E$_7$.
  \label{Fig:SV}
 \label{Fig:Hasse56}}

\end{minipage}
						\end{figure}
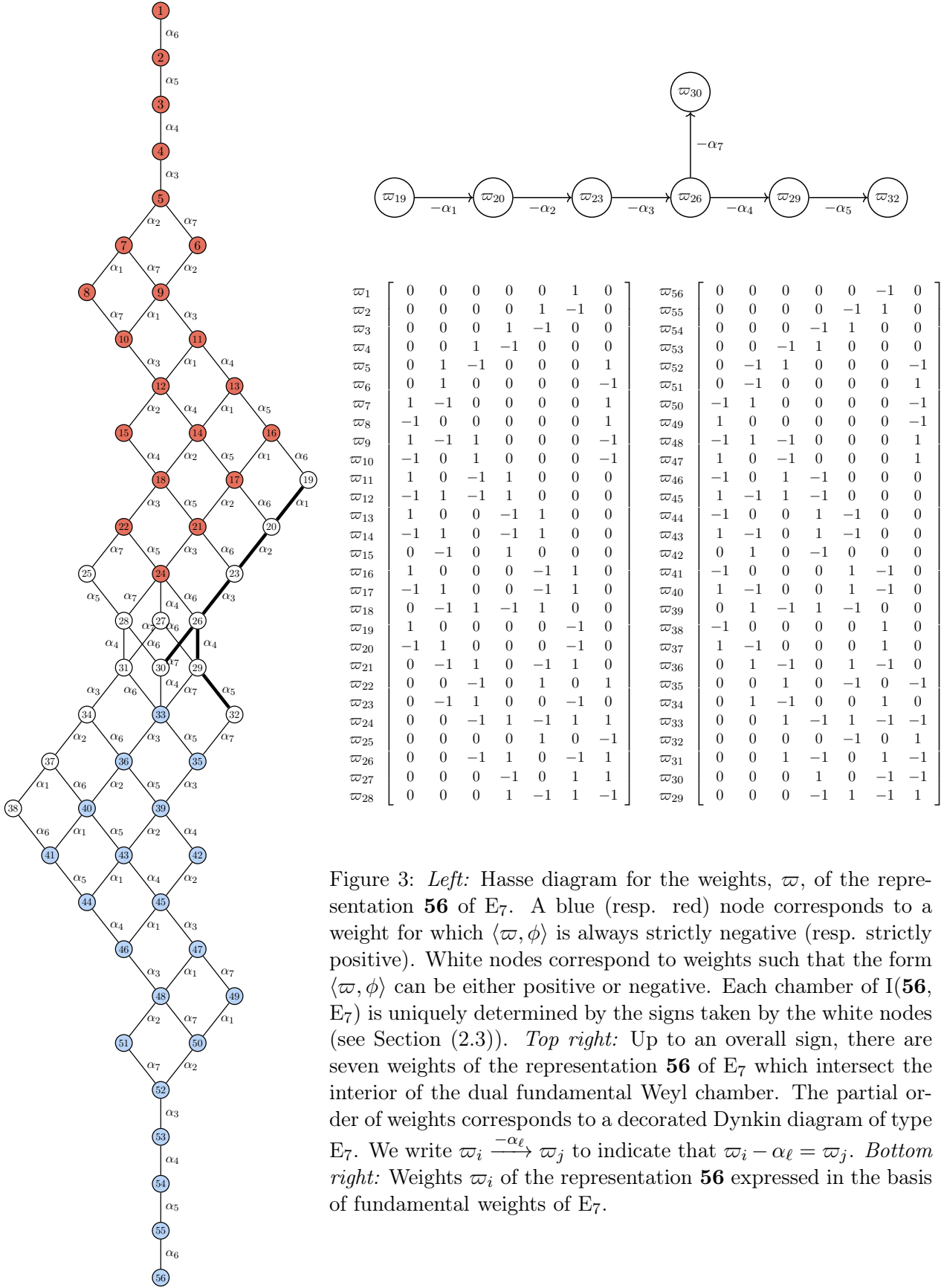

\clearpage

\subsection{Chamber structure of the hyperplane I(E$_7$,$\mathbf{56}$) \label{sec:IE756}}
In this section, we review the structure of the hyperplane arrangement I(E$_7$,$\mathbf{56})$, following the presentation of \cite{E7}. 

 \begin{defn} For a choice of positive simple roots $\alpha_i$, the {\em open dual fundamental Weyl chamber} is the cone of coroots $\phi$ such that:
$\langle \alpha_i, \phi\rangle>0~~ \mathrm{for} ~~i=1,\ldots, r$. \end{defn}
 Given a Lie algebra $\mathfrak{g}$ of rank $r$ and a representation $\bf{R}$ of $\mathfrak{g}$, 
 the set of weights of  $\bf{R}$ is a poset with the usual ordering relation:
$$\varpi_a\preceq \varpi_b\iff \varpi_a-\varpi_b \text{  is a sum of positive roots}.$$ 
The kernel of a weight $\varpi$ is the hyperplane $\varpi^\bot:=\{\phi |\langle \varpi,\phi\rangle =0\}$ in the space of coroots.

\begin{defn}
A weight $\varpi$ of representation $\bf{R}$ is {\em extremal} if the hyperplane $\varpi^\bot$ intersects the interior 
of the dual fundamental Weyl chamber of $\mathfrak{g}$. 
\end{defn}

If we restrict the ambient space to the open dual fundamental Weyl chamber, the only weights giving 
 hyperplanes which intersect this space are the extremal weights, by definition. 
The correspondence between weights and perpendicular hyperplanes is not one-to-one so long as two weights can be parallel.
We therefore make a choice of extremal weight for each hyperplane, fix an order $(\varpi_1,\cdots, \varpi_q)$, and define  a {\em sign vector} $v(\phi)$ whose $k$th entry is  $$v_k(\phi)=\text{Sign}( \langle \varpi_k,\phi\rangle)\quad \text{ where}\quad \text{Sign}(x)=\begin{cases} 
-1\quad if \quad x<0\\
\  0 \quad if \ \quad x=0\\
\  1 \quad if \ \quad x>0.
 \end{cases}$$
 
A simple way to tell if a weight is  extremal  is to write it in the basis of simple roots and use the following theorem.
\begin{thm}
A weight is extremal if and only if at least two of its coefficients in the basis of simple roots have different signs.  
\end{thm}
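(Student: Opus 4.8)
The plan is to turn the geometric condition ``$\varpi^\bot$ meets the interior of the dual fundamental Weyl chamber'' into a purely elementary statement about the signs a linear functional takes on an open orthant, and then read that statement off directly from the coefficients of $\varpi$.

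First I would use that the simple roots $\alpha_1,\dots,\alpha_r$ form a basis of the real weight space, so the evaluation map $\phi \mapsto (x_1,\dots,x_r)$ with $x_i := \langle \alpha_i,\phi\rangle$ is a linear isomorphism from the (real span of the) coroot space onto $\mathbb{R}^r$. Under this isomorphism the open dual fundamental Weyl chamber, defined by $\langle \alpha_i,\phi\rangle>0$ for all $i$, is carried \emph{onto} the open positive orthant $\mathbb{R}_{>0}^r$. Writing $\varpi=\sum_i c_i\alpha_i$ in the simple-root basis, the pairing becomes $\langle \varpi,\phi\rangle=\sum_i c_i\,x_i =: \ell(x)$, a linear functional on $\mathbb{R}^r$. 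Hence $\varpi$ is extremal if and only if $\ell$ vanishes at some point of $\mathbb{R}_{>0}^r$, and the theorem reduces to: $\ell$ has a zero in the open positive orthant iff some $c_i$ is strictly positive and some $c_j$ is strictly negative (which is the precise meaning of ``two coefficients of different signs'').

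Next I would prove this reduced equivalence. For the forward (contrapositive) direction, if all $c_i\ge 0$ and $\varpi\neq 0$, then $\ell(x)=\sum_i c_i x_i>0$ for every $x\in\mathbb{R}_{>0}^r$, so $\ell$ never vanishes there; the all-$\le 0$ case is symmetric. For the converse, suppose $c_p>0$ and $c_q<0$. Taking the points $u=\mathbf{1}+Ne_p$ and $w=\mathbf{1}+Ne_q$ in $\mathbb{R}_{>0}^r$ for large $N$ gives $\ell(u)>0$ and $\ell(w)<0$; since the orthant is convex, the segment from $u$ to $w$ stays inside it, and continuity of $\ell$ along this segment forces a zero in the interior. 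Combining the two directions yields the claimed characterization.

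The step deserving the most care is the structural one: confirming that the evaluation map sends the interior of the chamber \emph{onto} the full open positive orthant, not merely into it, so that no relation among the $x_i$ beyond positivity survives. This rests only on the linear independence of the simple roots and the $r$-dimensionality of the coroot space, and it is what guarantees that the signs of the $c_i$ are the sole obstruction. I would also flag two minor points: the coefficients $c_i$ need not be integers (the $\mathbf{56}$ does not lie in the root lattice), but only their signs matter, so the argument is unaffected; and the degenerate weight $\varpi=0$ is excluded since every weight of $\mathbf{56}$ is nonzero.
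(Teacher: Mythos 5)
Your argument is correct and complete: the reduction to the sign pattern of a linear functional on the open positive orthant, the strict-positivity argument for the "only if" direction, and the convexity/intermediate-value argument for the "if" direction are all sound, and you rightly flag the only genuine caveats (surjectivity of the evaluation map onto the orthant, non-integrality of the coefficients, and exclusion of the zero weight). The paper states this theorem without proof, so there is nothing to compare against; your elementary argument fills that gap correctly.
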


\begin{defn}
An {\em open chamber} of the hyperplane arrangement  I($\mathfrak{g},\mathbf{R}$) is a  connected component of  the dual open Weyl chamber minus the union of the hyperplanes $\varpi^\bot_m$. 
 \end{defn}
 Each open chamber is uniquely determined by the entries of the sign vector, which take the values $\pm 1$, and are constant within each open chamber.
 
 In particular, the partial order for the weights that are interior walls is (see Figure \ref{Fig:SV}): 
\begin{equation}
\varpi_{19}\succ\varpi_{20}\succ\varpi_{23}\succ\varpi_{26}\succ\varpi_{29}\succ\varpi_{32}, \quad \varpi_{26}\succ\varpi_{30}.
\end{equation}
Our choice of sign vector for the hyperplane arrangement I$(E_7, \mathbf{56})$ is as follows:\footnote{Each  weight of the representation $\mathbf{56}$ has norm square $3/2$ and has scalar product $\pm 1/2$ with any other weight of $\mathbf{56}$. 
Our choice of signs for the entries of the sign vector is such that the  highest weight $\boxed{0\  0 \  0 \  0\  0\  1\  0}$ has a sign $(-1,-1,-1,-1,-1,-1,-1)$.} 
\begin{equation}\label{eq:sgn}
\phi\mapsto ( \langle \varpi_{19}, \phi\rangle , \langle\varpi_{20} , \phi  \rangle, \langle\varpi_{23}, \phi  \rangle, 
\langle \varpi_{26}, \phi \rangle , 
\langle \varpi_{29}, \phi \rangle,
\langle \varpi_{32}, \phi \rangle, \langle\varpi_{30}, \phi \rangle ),
\end{equation}
which expands to 
\begin{equation}\label{eq.SV}
\begin{aligned}
v(\phi)= & \text{Sign}(\phi_1-\phi_6, 
-\phi_1+\phi_2-\phi_6,
-\phi_2+\phi_3-\phi_6, \\
& \   -\phi_3+\phi_4-\phi_6+\phi_7,
-\phi_4+\phi_5-\phi_6+\phi_7,
-\phi_5+\phi_7,
\phi_4-\phi_6-\phi_7
). 
\end{aligned}
\end{equation}

\begin{defn} 
Two chambers $\Pi_1$ and $\Pi_2$ are said to be {\em incident} if they share a common wall $\varpi^\bot_k$, in which case their sign vectors differ only in their $k^{th}$ component. 
\end{defn}
The incidence matrix of the chambers has a dual graph which gives the geography of chambers of the hyperplane arrangement I($\mathfrak{g},\mathbf{R}$).  The incidence graph for the hyperplane arrangement I$(E_7, \mathbf{56})$ is given in Figure~\ref{Figure:IG}.

\begin{thm}
The hyperplane arrangement  I$(E_7, \mathbf{56})$ has eight chambers, each of which is simplicial.  The adjacency graph of the chambers is isomorphic to the Dynkin diagram  of type E$_8$. 
\end{thm}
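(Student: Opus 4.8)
The plan is to work entirely inside the open dual fundamental Weyl chamber and to parametrize the chambers of I$(E_7,\mathbf{56})$ by their sign vectors $v(\phi)=(v_1,\dots,v_7)$, whose components record the signs of the pairings with the seven extremal weights $\varpi_{19},\varpi_{20},\varpi_{23},\varpi_{26},\varpi_{29},\varpi_{32},\varpi_{30}$ ordered as in \eqref{eq:sgn}; by the setup this assignment is a bijection between open chambers and realizable sign vectors, so the whole problem becomes one of counting and connecting sign vectors. First I would exploit the partial order on the extremal weights displayed in Figure~\ref{Fig:SV}. Since $\varpi_a\succ\varpi_b$ forces $\langle\varpi_a-\varpi_b,\phi\rangle>0$ throughout the open chamber (the difference being a nonnegative, nonzero combination of simple roots), the chain $\varpi_{19}\succ\varpi_{20}\succ\varpi_{23}\succ\varpi_{26}\succ\varpi_{29}\succ\varpi_{32}$ forces the first six entries to be monotone, i.e.\ of the form $(\underbrace{+\cdots+}_{t}\,\underbrace{-\cdots-}_{6-t})$ for a single threshold $t\in\{0,\dots,6\}$, so $v_4=+$ iff $t\ge4$ and $v_5=+$ iff $t\ge5$. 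The branch relation $\varpi_{26}\succ\varpi_{30}$ similarly forces $v_7=+\Rightarrow v_4=+$, i.e.\ $v_7=+$ requires $t\ge4$.

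The observation that pins the count is the linear identity
\[
\varpi_{29}+\varpi_{30}=-\alpha_6,
\]
which one reads off from the weight table of Figure~\ref{Fig:SV} (equivalently, the fifth and seventh entries of \eqref{eq.SV} sum to $\phi_5-2\phi_6=-\langle\alpha_6,\phi\rangle$). Because $\langle\alpha_6,\phi\rangle>0$ on the open chamber, $\langle\varpi_{29},\phi\rangle$ and $\langle\varpi_{30},\phi\rangle$ can never be simultaneously positive; hence $v_7=+\Rightarrow v_5=-$, i.e.\ $v_7=+$ requires $t\le4$. Combined with the branch constraint this forces $t=4$ exactly whenever $v_7=+$. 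Thus there are precisely seven admissible sign vectors with $v_7=-$ (one per threshold $t=0,\dots,6$) and exactly one with $v_7=+$ (namely $t=4$), giving at most eight chambers. To complete the count I would exhibit an explicit coweight $\phi$ realizing each of these eight sign vectors—equivalently, verify feasibility of the reduced inequality systems by a Farkas argument—so that all eight are genuinely nonempty.

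For simpliciality, I would determine, for each of the eight chambers, the non-redundant members among its defining inequalities: the seven Weyl-wall conditions $\langle\alpha_i,\phi\rangle>0$ together with the seven signed conditions $v_k\langle\varpi_k,\phi\rangle>0$. The chain and branch relations collapse almost all of the extremal-weight inequalities—on a $v_7=-$ chamber only the one or two ``threshold'' conditions around position $t$ survive, the remaining ones being implied—and one checks that exactly seven inequalities remain with linearly independent normals, i.e.\ that each chamber is a cone on seven extreme rays. This bookkeeping, tracking which facets of the simplicial Weyl chamber persist or are replaced as one cuts by the hyperplanes $\varpi_k^\bot$, is the main obstacle; it is here that the special, non-generic position of these hyperplanes relative to the fundamental coweight rays is essential, since a hyperplane in general position would split a simplicial cone into two non-simplicial pieces.

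Finally, the adjacency graph follows combinatorially from the eight sign vectors. Two chambers are incident precisely when their sign vectors differ in a single entry (and share a genuine codimension-one wall, which I would confirm pair by pair). The seven $v_7=-$ chambers, ordered by $t=0,1,\dots,6$, differ consecutively in exactly one coordinate and so form a path on seven nodes; the unique $v_7=+$ chamber differs from the $t=4,\,v_7=-$ chamber only in the seventh entry and from every other chamber in at least two entries, so it attaches as a pendant to the fifth node of that path. The resulting tree has three arms of lengths $4$, $2$, and $1$ meeting at that node, which is exactly the Dynkin diagram of type E$_8$.
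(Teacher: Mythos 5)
Your proposal is correct and takes essentially the same route as the paper (which sketches the argument here and defers details to \cite{E7}): the chain $\varpi_{19}\succ\varpi_{20}\succ\cdots\succ\varpi_{32}$, the branch relation $\varpi_{26}\succ\varpi_{30}$, and the identity $\varpi_{29}+\varpi_{30}=-\alpha_6$ are exactly the paper's two rules, yielding the seven threshold chambers with $v_7=-$ plus the single $t=4$, $v_7=+$ chamber, and hence the E$_8$-shaped adjacency tree. The additional bookkeeping you outline for nonemptiness, simpliciality, and the pair-by-pair wall verification is precisely the content the paper asserts without writing out, so there is no gap in your approach.
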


Explicitly our sign vector in equation \eqref{eq:sgn} obeys the following rules:
\begin{enumerate}
\item The  negative sign  flows as the arrows of Figure \ref{Fig:SV}. 
\item 
The forms $\langle \varpi_{30}, \phi\rangle$ and $\langle\varpi_{29}, \phi\rangle$ cannot both be positive at the same time.
\end{enumerate}
For example, if  $\langle \varpi_{19}, \phi \rangle$ is negative, the same is true of all the $\langle\varpi_{i}, \phi \rangle$ with $i=\{20,23,26,29,32,30\}$. The second rule arises from the fact that  $\varpi_{30}+\varpi_{29}=-\alpha_6$ and  $\langle \alpha_6, \phi\rangle >0$, since we are restricted to the interior of the  dual  fundamental Weyl chamber.
To define a chamber, we just need to name which one of the $\langle \varpi_i, \phi \rangle$ is the first negative one with respect to the order given above. For the case where both $\langle \varpi_{26}, \phi \rangle$ and  $\langle \varpi_{30}, \phi \rangle$  are positive, 
then  $\langle \varpi_{19}, \phi \rangle$,  $\langle \varpi_{20}, \phi \rangle$, and  $\langle \varpi_{23}, \phi \rangle$ are all positive.  Since $\langle \varpi_{30}, \phi \rangle$ is positive, $\langle \varpi_{29}, \phi \rangle$ is necessarily negative, which forces $\langle \varpi_{32}, \phi\rangle$ to also be negative. 
There are exactly  eight possibilities satisfying these two rules.  They are listed in Figure \ref{Figure:Ch}.

~\\

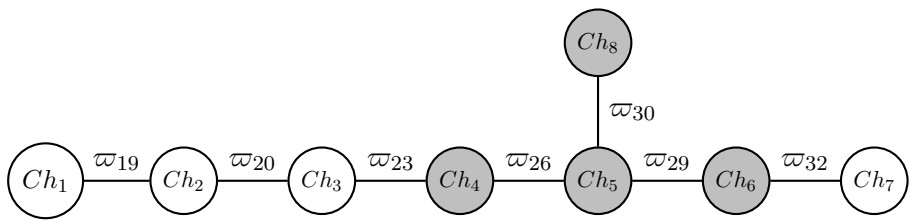
\begin{figure}[H]
{	

\begin{center}
\begin{tikzpicture}[scale=.85]
 \node[draw,circle,thick,scale=.9] (0) at (0,0){$Ch_1$};
				\node[draw,circle,thick,scale=.8] (1) at (1.2*1.8,0){$Ch_2$};
				\node[draw,circle,thick,scale=.8] (2) at (2.4*1.8,0){$Ch_3$};
				\node[draw,circle,thick,scale=.8, fill=lightgray] (3) at (3.6*1.8,0){$Ch_4$};
				\node[draw,circle,thick,scale=.8,fill=lightgray] (4) at (4.8*1.8,0){$Ch_5$};
				\node[draw,circle,thick,scale=.8,fill=lightgray] (5) at (6*1.8,0){$Ch_6$};
				\node[draw,circle,thick,scale=.8] (6) at (7.2*1.8,0){$Ch_7$};
				\node[draw,circle,thick,scale=.8,fill=lightgray] (7) at (4.8*1.8,1.2*1.8){$Ch_8$};
				\draw[thick] (0)-- node[above] {$\varpi_{19}$} (1)--node[above] {$\varpi_{20}$} (2)--node[above] {$\varpi_{23}$} (3)--node[above] {$\varpi_{26}$} (4)--node[above] {$\varpi_{29}$} (5)--node[above] {$\varpi_{32}$} (6);
				\draw[thick]  (4)--   node[right] {$\varpi_{30}$}  (7);
					\end{tikzpicture}
					
					\end{center}
					}

 \caption{Incidence graph of the chambers of the hyperplane arrangement I$(E_7, \mathbf{56})$. 
   A weight $\varpi$ between two nodes indicates that the corresponding  chambers  are separated by the hyperplane $\varpi^\bot$:  for example, one goes from Ch$_1$ to Ch$_2$ by crossing the hyperplane $\varpi_{19}^\bot$.  The colored chambers forming a subgraph of type D$_4$ are those corresponding to the nef-cone of the crepant resolutions constructed by  explicit blowups in  \cite{E7}. 
\label{Figure:IG}}
\end{figure}

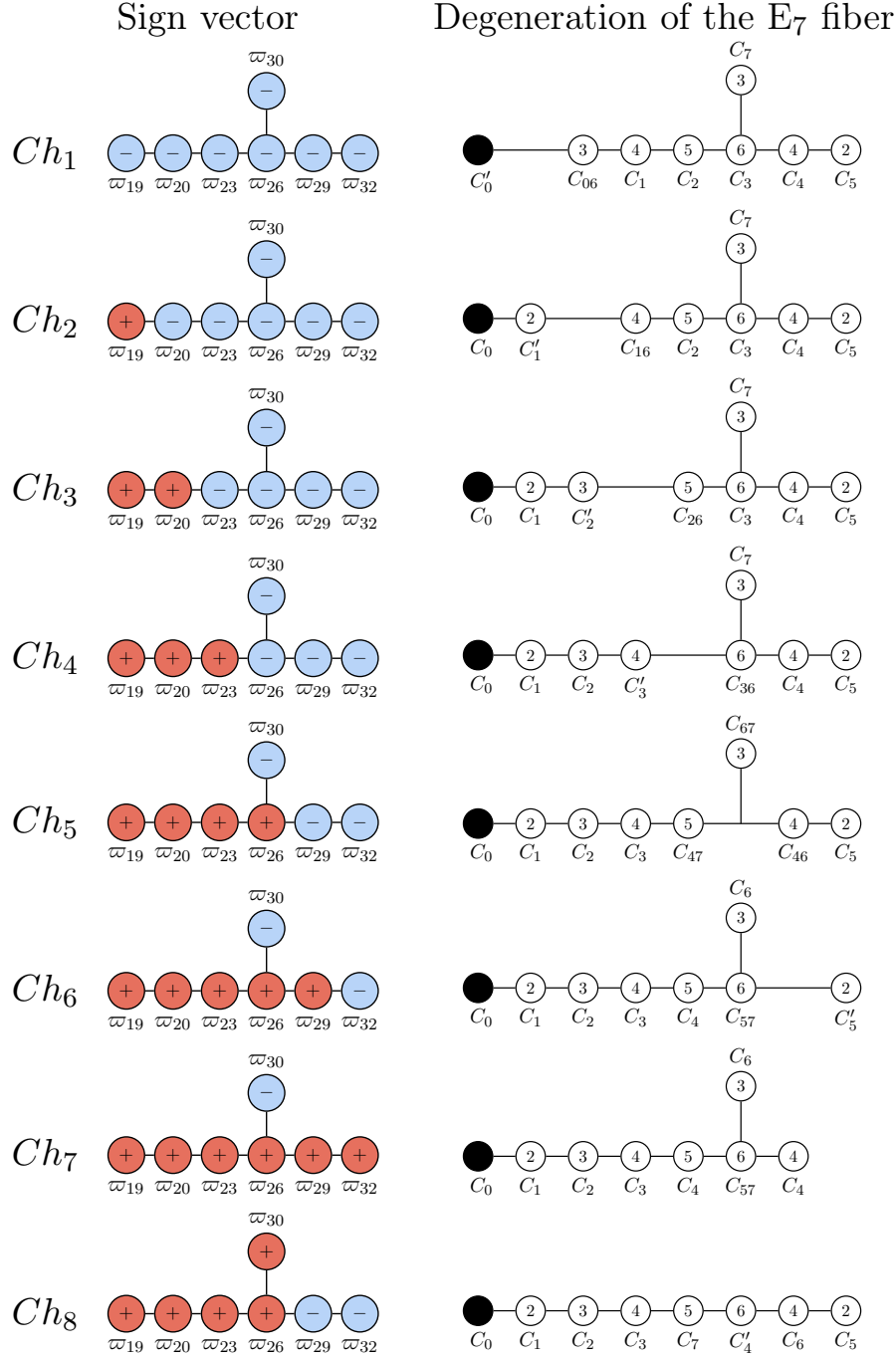
\begin{figure}[!p]
\begin{center}
\scalebox{.65}{
\begin{tabular}{l l}
\quad  \quad\quad\quad\scalebox{2}{\quad Sign vector} & \scalebox{2}{Degeneration of the E$_7$ fiber} \\
\begin{tikzpicture}[scale=.8]
				\node at (-.4,0){\scalebox{2}{$Ch_1$\quad }};
				\node[fill=myblue, draw,circle,thick,scale=1,label=below:{\scalebox{1.2}{$\varpi_{19}$}}] (1) at (1.2,0){$-$};
				\node[fill=myblue,draw,circle,thick,scale=1,label=below:{\scalebox{1.2}{$\varpi_{20}$}}] (2) at (2.4,0){$-$};
				\node[fill=myblue,draw,circle,thick,scale=1,label=below:{\scalebox{1.2}{$\varpi_{23}$}}] (3) at (3.6,0){$-$};
				\node[fill=myblue,draw,circle,thick,scale=1,label=below:{\scalebox{1.2}{$\varpi_{26}$}}] (4) at (4.8,0){$-$};
				\node[fill=myblue,draw,circle,thick,scale=1,label=below:{\scalebox{1.2}{$\varpi_{29}$}}] (5) at (6,0){$-$};
				\node[fill=myblue,draw,circle,thick,scale=1,label=below:{\scalebox{1.2}{$\varpi_{32}$}}] (6) at (7.2,0){$-$};
				\node[fill=myblue,draw,circle,thick,scale=1, label=above:{\scalebox{1.2}{$\varpi_{30}$}}] (8) at (4.8,1.6){$-$};
				\draw[thick] (1)--(2)--(3)--(4)--(5)--(6);
				\draw[thick]  (4)--(8);
					\end{tikzpicture}& 	\quad 
					
					\scalebox{.9}{
\begin{tikzpicture}
				\node[draw,circle,thick,scale=1,fill=black,label=below:{\scalebox{1.2}{ $C'_0$}}] (0) at (0,0){$1$};
				\node[draw,circle,thick,scale=1,label=below:{\scalebox{1.2}{$C_{06}$}}] (2) at (2.4,0){$3$};
				\node[draw,circle,thick,scale=1,label=below:{\scalebox{1.2}{$C_1$}}] (3) at (3.6,0){$4$};
				\node[draw,circle,thick,scale=1,label=below:{\scalebox{1.2}{$C_2$}}] (4) at (4.8,0){$5$};
				\node[draw,circle,thick,scale=1,label=below:{\scalebox{1.2}{$C_3$}}] (5) at (6,0){$6$};
				\node[draw,circle,thick,scale=1,label=below:{\scalebox{1.2}{$C_4$}}] (6) at (7.2,0){$4$};
				\node[draw,circle,thick,scale=1, label=below:{\scalebox{1.2}{$C_5$}}] (7) at (8.4,0){$2$};
				\node[draw,circle,thick,scale=1, label=above:{\scalebox{1.2}{$C_7$}}] (8) at (6,1.6){$3$};
				\draw[thick] (0)--(2)--(3)--(4)--(5)--(6)--(7);
				\draw[thick]  (5)--(8);
					\end{tikzpicture}}

					\\
					
					\begin{tikzpicture}[scale=.8]
				\node at (-.4,0){\scalebox{2}{$Ch_2$\quad }};
				\node[fill=myred, draw,circle,thick,scale=1,label=below:{\scalebox{1.2}{$\varpi_{19}$}}] (1) at (1.2,0){$+$};
				\node[fill=myblue,draw,circle,thick,scale=1,label=below:{\scalebox{1.2}{$\varpi_{20}$}}] (2) at (2.4,0){$-$};
				\node[fill=myblue,draw,circle,thick,scale=1,label=below:{\scalebox{1.2}{$\varpi_{23}$}}] (3) at (3.6,0){$-$};
				\node[fill=myblue,draw,circle,thick,scale=1,label=below:{\scalebox{1.2}{$\varpi_{26}$}}] (4) at (4.8,0){$-$};
				\node[fill=myblue,draw,circle,thick,scale=1,label=below:{\scalebox{1.2}{$\varpi_{29}$}}] (5) at (6,0){$-$};
				\node[fill=myblue,draw,circle,thick,scale=1,label=below:{\scalebox{1.2}{$\varpi_{32}$}}] (6) at (7.2,0){$-$};
				\node[fill=myblue,draw,circle,thick,scale=1, label=above:{\scalebox{1.2}{$\varpi_{30}$}}] (8) at (4.8,1.6){$-$};
				\draw[thick] (1)--(2)--(3)--(4)--(5)--(6);
				\draw[thick]  (4)--(8);
					\end{tikzpicture}& 	\quad
					
					\scalebox{.9}{
\begin{tikzpicture}
				\node[draw,circle,thick,scale=1,fill=black,label=below:{\scalebox{1.2}{ $C_0$}}] (0) at (0,0){$1$};
				\node[draw, circle,thick,scale=1, label=below:{\scalebox{1.2}{$C'_{1}$}}] (1) at (1.2,0){$2$};
			\node[draw,circle,thick,scale=1,label=below:{\scalebox{1.2}{$C_{16}$}}] (3) at (3.6,0){$4$};
				\node[draw,circle,thick,scale=1,label=below:{\scalebox{1.2}{$C_2$}}] (4) at (4.8,0){$5$};
				\node[draw,circle,thick,scale=1,label=below:{\scalebox{1.2}{$C_3$}}] (5) at (6,0){$6$};
				\node[draw,circle,thick,scale=1,label=below:{\scalebox{1.2}{$C_4$}}] (6) at (7.2,0){$4$};
				\node[draw,circle,thick,scale=1, label=below:{\scalebox{1.2}{$C_5$}}] (7) at (8.4,0){$2$};
				\node[draw,circle,thick,scale=1, label=above:{\scalebox{1.2}{$C_7$}}] (8) at (6,1.6){$3$};
				\draw[thick] (0)--(1)--(3)--(4)--(5)--(6)--(7);
				\draw[thick]  (5)--(8);					\end{tikzpicture}}
					\\

										\begin{tikzpicture}[scale=.8]
			\node at (-.4,0){\scalebox{2}{$Ch_3$\quad }};
				\node[fill=myred, draw,circle,thick,scale=1,label=below:{\scalebox{1.2}{$\varpi_{19}$}}] (1) at (1.2,0){$+$};
				\node[fill=myred,draw,circle,thick,scale=1,label=below:{\scalebox{1.2}{$\varpi_{20}$}}] (2) at (2.4,0){$+$};
				\node[fill=myblue,draw,circle,thick,scale=1,label=below:{\scalebox{1.2}{$\varpi_{23}$}}] (3) at (3.6,0){$-$};
				\node[fill=myblue,draw,circle,thick,scale=1,label=below:{\scalebox{1.2}{$\varpi_{26}$}}] (4) at (4.8,0){$-$};
				\node[fill=myblue,draw,circle,thick,scale=1,label=below:{\scalebox{1.2}{$\varpi_{29}$}}] (5) at (6,0){$-$};
				\node[fill=myblue,draw,circle,thick,scale=1,label=below:{\scalebox{1.2}{$\varpi_{32}$}}] (6) at (7.2,0){$-$};
				\node[fill=myblue,draw,circle,thick,scale=1, label=above:{\scalebox{1.2}{$\varpi_{30}$}}] (8) at (4.8,1.6){$-$};
				\draw[thick] (1)--(2)--(3)--(4)--(5)--(6);
				\draw[thick]  (4)--(8);
					\end{tikzpicture}& 	\quad
					\scalebox{.9}{
\begin{tikzpicture}
				\node[draw,circle,thick,scale=1,fill=black,label=below:{\scalebox{1.2}{ $C_0$}}] (0) at (0,0){$1$};
				\node[draw,circle,thick,scale=1,label=below:{\scalebox{1.2}{$C_{1}$}}] (2) at (1.2,0){$2$};
				\node[draw,circle,thick,scale=1,label=below:{\scalebox{1.2}{$C'_{2}$}}] (3) at (2.4,0){$3$};
				\node[draw,circle,thick,scale=1,label=below:{\scalebox{1.2}{$C_{26}$}}] (4) at (4.8,0){$5$};
				\node[draw,circle,thick,scale=1,label=below:{\scalebox{1.2}{$C_3$}}] (5) at (6,0){$6$};
				\node[draw,circle,thick,scale=1,label=below:{\scalebox{1.2}{$C_4$}}] (6) at (7.2,0){$4$};
				\node[draw,circle,thick,scale=1, label=below:{\scalebox{1.2}{$C_5$}}] (7) at (8.4,0){$2$};
				\node[draw,circle,thick,scale=1, label=above:{\scalebox{1.2}{$C_7$}}] (8) at (6,1.6){$3$};
				\draw[thick] (0)--(2)--(3)--(4)--(5)--(6)--(7);
				\draw[thick]  (5)--(8);
					\end{tikzpicture}}

					\\

								\begin{tikzpicture}[scale=.8]
			\node at (-.4,0){\scalebox{2}{$Ch_4$\quad }};
				\node[fill=myred, draw,circle,thick,scale=1,label=below:{\scalebox{1.2}{$\varpi_{19}$}}] (1) at (1.2,0){$+$};
				\node[fill=myred,draw,circle,thick,scale=1,label=below:{\scalebox{1.2}{$\varpi_{20}$}}] (2) at (2.4,0){$+$};
				\node[fill=myred,draw,circle,thick,scale=1,label=below:{\scalebox{1.2}{$\varpi_{23}$}}] (3) at (3.6,0){$+$};
				\node[fill=myblue,draw,circle,thick,scale=1,label=below:{\scalebox{1.2}{$\varpi_{26}$}}] (4) at (4.8,0){$-$};
				\node[fill=myblue,draw,circle,thick,scale=1,label=below:{\scalebox{1.2}{$\varpi_{29}$}}] (5) at (6,0){$-$};
				\node[fill=myblue,draw,circle,thick,scale=1,label=below:{\scalebox{1.2}{$\varpi_{32}$}}] (6) at (7.2,0){$-$};
				\node[fill=myblue,draw,circle,thick,scale=1, label=above:{\scalebox{1.2}{$\varpi_{30}$}}] (8) at (4.8,1.6){$-$};
				\draw[thick] (1)--(2)--(3)--(4)--(5)--(6);
				\draw[thick]  (4)--(8);
					\end{tikzpicture}& \quad	
					\scalebox{.9}{
\begin{tikzpicture}
				\node[draw,circle,thick,scale=1,fill=black,label=below:{\scalebox{1.2}{ $C_0$}}] (0) at (0,0){$1$};
				\node[draw,circle,thick,scale=1,label=below:{\scalebox{1.2}{$C_{1}$}}] (2) at (1.2,0){$2$};
				\node[draw,circle,thick,scale=1,label=below:{\scalebox{1.2}{$C_{2}$}}] (3) at (2.4,0){$3$};
				\node[draw,circle,thick,scale=1,label=below:{\scalebox{1.2}{$C'_{3}$}}] (4) at (3.6,0){$4$};
				\node[draw,circle,thick,scale=1,label=below:{\scalebox{1.2}{$C_{36}$}}] (5) at (6,0){$6$};
				\node[draw,circle,thick,scale=1,label=below:{\scalebox{1.2}{$C_4$}}] (6) at (7.2,0){$4$};
				\node[draw,circle,thick,scale=1, label=below:{\scalebox{1.2}{$C_5$}}] (7) at (8.4,0){$2$};
				\node[draw,circle,thick,scale=1, label=above:{\scalebox{1.2}{$C_7$}}] (8) at (6,1.6){$3$};
				\draw[thick] (0)--(2)--(3)--(4)--(5)--(6)--(7);
				\draw[thick]  (5)--(8);
					\end{tikzpicture}}
					\\

					\begin{tikzpicture}[scale=.8]
			\node at (-.4,0){\scalebox{2}{$Ch_5$\quad }};
				\node[fill=myred, draw,circle,thick,scale=1,label=below:{\scalebox{1.2}{$\varpi_{19}$}}] (1) at (1.2,0){$+$};
				\node[fill=myred,draw,circle,thick,scale=1,label=below:{\scalebox{1.2}{$\varpi_{20}$}}] (2) at (2.4,0){$+$};
				\node[fill=myred,draw,circle,thick,scale=1,label=below:{\scalebox{1.2}{$\varpi_{23}$}}] (3) at (3.6,0){$+$};
				\node[fill=myred,draw,circle,thick,scale=1,label=below:{\scalebox{1.2}{$\varpi_{26}$}}] (4) at (4.8,0){$+$};
				\node[fill=myblue,draw,circle,thick,scale=1,label=below:{\scalebox{1.2}{$\varpi_{29}$}}] (5) at (6,0){$-$};
				\node[fill=myblue,draw,circle,thick,scale=1,label=below:{\scalebox{1.2}{$\varpi_{32}$}}] (6) at (7.2,0){$-$};
				\node[fill=myblue,draw,circle,thick,scale=1, label=above:{\scalebox{1.2}{$\varpi_{30}$}}] (8) at (4.8,1.6){$-$};
				\draw[thick] (1)--(2)--(3)--(4)--(5)--(6);
				\draw[thick]  (4)--(8);
					\end{tikzpicture}	&\quad	
					\scalebox{.9}{
\begin{tikzpicture}
				\node[draw,circle,thick,scale=1,fill=black,label=below:{\scalebox{1.2}{ $C_0$}}] (0) at (0,0){$1$};
				\node[draw,circle,thick,scale=1,label=below:{\scalebox{1.2}{$C_{1}$}}] (2) at (1.2,0){$2$};
				\node[draw,circle,thick,scale=1,label=below:{\scalebox{1.2}{$C_{2}$}}] (3) at (2.4,0){$3$};
				\node[draw,circle,thick,scale=1,label=below:{\scalebox{1.2}{$C_{3}$}}] (4) at (3.6,0){$4$};
				\node[draw,circle,thick,scale=1,label=below:{\scalebox{1.2}{$C_{47}$}}] (5) at (4.8,0){$5$};
				\node[draw,circle,thick,scale=1,label=below:{\scalebox{1.2}{$C_{46}$}}] (6) at (7.2,0){$4$};
				\node[draw,circle,thick,scale=1, label=below:{\scalebox{1.2}{$C_5$}}] (7) at (8.4,0){$2$};
				\node[draw,circle,thick,scale=1, label=above:{\scalebox{1.2}{$C_{67}$}}] (8) at (6,1.6){$3$};
				\draw[thick] (0)--(2)--(3)--(4)--(5)--(6)--(7);
				\draw[thick]  (6,0)--(8);
					\end{tikzpicture}}

\\				\begin{tikzpicture}[scale=.8]
				\node at (-.4,0){\scalebox{2}{$Ch_6$\quad }};
				\node[fill=myred, draw,circle,thick,scale=1,label=below:{\scalebox{1.2}{$\varpi_{19}$}}] (1) at (1.2,0){$+$};
				\node[fill=myred,draw,circle,thick,scale=1,label=below:{\scalebox{1.2}{$\varpi_{20}$}}] (2) at (2.4,0){$+$};
				\node[fill=myred,draw,circle,thick,scale=1,label=below:{\scalebox{1.2}{$\varpi_{23}$}}] (3) at (3.6,0){$+$};
				\node[fill=myred,draw,circle,thick,scale=1,label=below:{\scalebox{1.2}{$\varpi_{26}$}}] (4) at (4.8,0){$+$};
				\node[fill=myred,draw,circle,thick,scale=1,label=below:{\scalebox{1.2}{$\varpi_{29}$}}] (5) at (6,0){$+$};
				\node[fill=myblue,draw,circle,thick,scale=1,label=below:{\scalebox{1.2}{$\varpi_{32}$}}] (6) at (7.2,0){$-$};
				\node[fill=myblue,draw,circle,thick,scale=1, label=above:{\scalebox{1.2}{$\varpi_{30}$}}] (8) at (4.8,1.6){$-$};
				\draw[thick] (1)--(2)--(3)--(4)--(5)--(6);
				\draw[thick]  (4)--(8);
					\end{tikzpicture}	& \quad	
					\scalebox{.9}{
\begin{tikzpicture}
				\node[draw,circle,thick,scale=1,fill=black,label=below:{\scalebox{1.2}{ $C_0$}}] (0) at (0,0){$1$};
				\node[draw,circle,thick,scale=1,label=below:{\scalebox{1.2}{$C_{1}$}}] (2) at (1.2,0){$2$};
				\node[draw,circle,thick,scale=1,label=below:{\scalebox{1.2}{$C_{2}$}}] (3) at (2.4,0){$3$};
				\node[draw,circle,thick,scale=1,label=below:{\scalebox{1.2}{$C_{3}$}}] (4) at (3.6,0){$4$};
				\node[draw,circle,thick,scale=1,label=below:{\scalebox{1.2}{$C_{4}$}}] (5) at (4.8,0){$5$};
				\node[draw,circle,thick,scale=1,label=below:{\scalebox{1.2}{$C_{57}$}}] (6) at (6,0){$6$};
				\node[draw,circle,thick,scale=1, label=below:{\scalebox{1.2}{$C'_5$}}] (7) at (8.4,0){$2$};
				\node[draw,circle,thick,scale=1, label=above:{\scalebox{1.2}{$C_6$}}] (8) at (6,1.6){$3$};
				\draw[thick] (0)--(2)--(3)--(4)--(5)--(6)--(7);
				\draw[thick]  (6)--(8);
					\end{tikzpicture}}
					
					\\
					
					\begin{tikzpicture}[scale=.8]
			\node at (-.4,0){\scalebox{2}{$Ch_7$\quad }};
				\node[fill=myred, draw,circle,thick,scale=1,label=below:{\scalebox{1.2}{$\varpi_{19}$}}] (1) at (1.2,0){$+$};
				\node[fill=myred,draw,circle,thick,scale=1,label=below:{\scalebox{1.2}{$\varpi_{20}$}}] (2) at (2.4,0){$+$};
				\node[fill=myred,draw,circle,thick,scale=1,label=below:{\scalebox{1.2}{$\varpi_{23}$}}] (3) at (3.6,0){$+$};
				\node[fill=myred,draw,circle,thick,scale=1,label=below:{\scalebox{1.2}{$\varpi_{26}$}}] (4) at (4.8,0){$+$};
				\node[fill=myred,draw,circle,thick,scale=1,label=below:{\scalebox{1.2}{$\varpi_{29}$}}] (5) at (6,0){$+$};
				\node[fill=myred,draw,circle,thick,scale=1,label=below:{\scalebox{1.2}{$\varpi_{32}$}}] (6) at (7.2,0){$+$};
				\node[fill=myblue,draw,circle,thick,scale=1, label=above:{\scalebox{1.2}{$\varpi_{30}$}}] (8) at (4.8,1.6){$-$};
				\draw[thick] (1)--(2)--(3)--(4)--(5)--(6);
				\draw[thick]  (4)--(8);
					\end{tikzpicture}	&\quad	
					\scalebox{.9}{
\begin{tikzpicture}
				\node[draw,circle,thick,scale=1,fill=black,label=below:{\scalebox{1.2}{ $C_0$}}] (0) at (0,0){$1$};
				\node[draw,circle,thick,scale=1,label=below:{\scalebox{1.2}{$C_{1}$}}] (2) at (1.2,0){$2$};
				\node[draw,circle,thick,scale=1,label=below:{\scalebox{1.2}{$C_{2}$}}] (3) at (2.4,0){$3$};
				\node[draw,circle,thick,scale=1,label=below:{\scalebox{1.2}{$C_{3}$}}] (4) at (3.6,0){$4$};
				\node[draw,circle,thick,scale=1,label=below:{\scalebox{1.2}{$C_{4}$}}] (5) at (4.8,0){$5$};
				\node[draw,circle,thick,scale=1,label=below:{\scalebox{1.2}{$C_{57}$}}] (6) at (6,0){$6$};
				\node[draw,circle,thick,scale=1,label=below:{\scalebox{1.2}{$C_4$}}] (7) at (7.2,0){$4$};
				\node[draw,circle,thick,scale=1, label=above:{\scalebox{1.2}{$C_6$}}] (8) at (6,1.6){$3$};
				\draw[thick] (0)--(2)--(3)--(4)--(5)--(6)--(7);
				\draw[thick]  (6)--(8);
					\end{tikzpicture}}

					\\

					\begin{tikzpicture}[scale=.8]
			\node at (-.4,0){\scalebox{2}{$Ch_8$\quad }};
				\node[fill=myred, draw,circle,thick,scale=1,label=below:{\scalebox{1.2}{$\varpi_{19}$}}] (1) at (1.2,0){$+$};
				\node[fill=myred,draw,circle,thick,scale=1,label=below:{\scalebox{1.2}{$\varpi_{20}$}}] (2) at (2.4,0){$+$};
				\node[fill=myred,draw,circle,thick,scale=1,label=below:{\scalebox{1.2}{$\varpi_{23}$}}] (3) at (3.6,0){$+$};
				\node[fill=myred,draw,circle,thick,scale=1,label=below:{\scalebox{1.2}{$\varpi_{26}$}}] (4) at (4.8,0){$+$};
				\node[fill=myblue,draw,circle,thick,scale=1,label=below:{\scalebox{1.2}{$\varpi_{29}$}}] (5) at (6,0){$-$};
				\node[fill=myblue,draw,circle,thick,scale=1,label=below:{\scalebox{1.2}{$\varpi_{32}$}}] (6) at (7.2,0){$-$};
				\node[fill=myred,draw,circle,thick,scale=1, label=above:{\scalebox{1.2}{$\varpi_{30}$}}] (8) at (4.8,1.6){$+$};
				\draw[thick] (1)--(2)--(3)--(4)--(5)--(6);
				\draw[thick]  (4)--(8);
					\node (9) at (8.4,0){};
					\end{tikzpicture} 
					&\quad
					\scalebox{.9}{
\begin{tikzpicture}
				\node[draw,circle,thick,scale=1,fill=black,label=below:{\scalebox{1.2}{ $C_0$}}] (0) at (0,0){$1$};
				\node[draw,circle,thick,scale=1,label=below:{\scalebox{1.2}{$C_{1}$}}] (2) at (1.2,0){$2$};
				\node[draw,circle,thick,scale=1,label=below:{\scalebox{1.2}{$C_{2}$}}] (3) at (2.4,0){$3$};
				\node[draw,circle,thick,scale=1,label=below:{\scalebox{1.2}{$C_{3}$}}] (4) at (3.6,0){$4$};
				\node[draw,circle,thick,scale=1,label=below:{\scalebox{1.2}{$C_{7}$}}] (5) at (4.8,0){$5$};
				\node[draw,circle,thick,scale=1,label=below:{\scalebox{1.2}{$C'_{4}$}}] (6) at (6,0){$6$};
				\node[draw,circle,thick,scale=1,label=below:{\scalebox{1.2}{$C_6$}}] (7) at (7.2,0){$4$};
				\node[draw,circle,thick,scale=1, label=below:{\scalebox{1.2}{$C_5$}}] (8) at (8.4,0){$2$};
				\draw[thick] (0)--(2)--(3)--(4)--(5)--(6)--(7)--(8);
			
					\end{tikzpicture}}
\\

					\end{tabular}}
					\end{center}
					\caption{The eight chambers of I(E$_7$, $\bf{56})$. Each chamber is uniquely defined by the signs taken by the seven linear functions $\langle \varpi_i, \phi\rangle$ for $i=\{19, 20,23,26,29,32,30\}$, which together define a sign vector for the hyperplane arrangement. 
					{The left column gives the entries of the sign vector for each chamber. The right column gives the singular fibers observed or expected over V($s,a$). 
					In Chamber $i$, the singular fiber over V$(s,a)$ is expected to have as a dual graph the  affine $\widetilde{\text{E}}_8$ Dynkin diagram with the node $i$ contracted to a point \cite{Box}. The singular fibers are observed directly in an explicit crepant resolution in chambers 4,5,6,8 in \cite{E7} and need to be confirmed geometrically in Chambers 1, 2, 3, and 7. 
					}
					\label{Figure:Ch}}

\end{figure}

\section{Fiber degenerations of an E$_7$-model}\label{sec:chcomp}

Figure \ref{Figure:Ch} also summarizes the structure of the singular fiber over $V(s,a)$ in each chamber. In this section, we explore this degeneration of the fiber III$^*$ (with dual graph the affine $\text{E}_7$ Dynkin diagram) to an incomplete II$^*$ (with dual graph E$_8$).  We begin with the physical motivation for this computation, as well as a a few more definitions needed to explain our strategy.

In a five-dimensional supersymmetric gauge theory with gauge algebra $\mathfrak{g}$ and hypermultiplets transforming in the representation $\bf{R}$ of $\mathfrak{g}$, each chamber of I($\mathfrak{g},\mathbf{R})$  corresponds to a unique Coulomb phase of the Coulomb of the theory. 
Such a gauge theory can be obtained by a compactification of M-theory on an elliptic fibration with associated Lie algebra $\mathfrak{g}$ and representation $\bf{R}$. 

The fibral divisors D$_i$ of the elliptic fibration correspond to the roots $\alpha_i$ of $\mathfrak{g}$.  
In codimension-two, the generic  curve C$_i$ of D$_i$ can degenerate into a collection of rational curves. 
Each of these rational curves has intersections defining a weight $\varpi$, which will be an extremal weight of I($\mathfrak{g},\mathbf{R})$. 
Each crepant resolution of the underlying Weierstrass model, $Y$, corresponds to a relative  minimal model over $Y$.  
Each of these relative minimal models corresponds to a unique chamber of the hyperplane arrangement I($\mathfrak{g},\mathbf{R})$. 
The extremal weights depend on the minimal model.

\begin{defn}
{Given a curve $C$, its associated weight with respect to the fibral divisor D$_i$ is the intersection number  $-D_i \cdot C$.   To any curve $C$, we can associate a weight vector  $\varpi(C)$ with components $\varpi(C)_i= -D_i \cdot C$.
 }
\end{defn}

\begin{rem}
The decomposition of the curve $C_i$ corresponding root $\alpha_i$ in a chamber $\Pi$ with face $\varpi_m^\bot$ are deduced using the linear relations connecting the extremal weights $\varpi_m$.
\end{rem}

\begin{rem}\label{rem11}
 The intersection with $D_0$ can be deduced by linearity, using  $D_0\cong - \sum(m_iD_i)$, where $m_i$ are the Dynkin coefficients of the highest root of $\mathfrak{g}$ and $D_i$ is the fibral divisor corresponding to the root $\alpha_i$.  
If a curve $C$ has negative intersection number with $D_0$, this implies that $C$ is contained in $D_0$.  In this case, $C_0$ will split with $C$ being one of the components. 
\end{rem}

 Figure \ref{Fig:Hasse56}, showed the Hasse diagram of the representation $\bf{56}$, with a clear identification of the simple root between any two adjacent weights.  We summarize the relevant data for the extremal weights here:
\begin{equation}
	\begin{aligned}\begin{array}{llll}
\varpi_{19}&\quad (1, 1, 1, \frac{1}{2}, 0, \text{-}\frac{1}{2}, \frac{1}{2})\quad&\boxed{\ 1\ \ 0\ 0 \ 0 \ 0  \ $-1$ \ \ 0 } &\\
\varpi_{20}=\varpi_{19}-\alpha_1&\quad (0, 1, 1, \frac{1}{2}, 0, \text{-}\frac{1}{2}, \frac{1}{2}) \quad&\boxed{$-1$ \  \  1 \ 0 \ 0  \ 0 \ $-1$ \ \  0 }  & \\
\varpi_{23}=\varpi_{19}-\alpha_1-\alpha_2&\quad (0, 0, 1, \frac{1}{2}, 0, \text{-}\frac{1}{2}, \frac{1}{2}) \quad&\boxed{\ 0 \  $-1$ \ 1 \ 0  \ 0 \ $-1$ \ \  0 } & \\
\varpi_{26}= \varpi_{19}-\alpha_1-\alpha_2-\alpha_3& \quad(0, 0, 0, \frac{1}{2}, 0, \text{-}\frac{1}{2}, \frac{1}{2}) \quad&\boxed{\  0 \ \  0 \ $-1$\ 1   \ 0 $\ -1$ \   1}& \\
\varpi_{29}=\varpi_{19}-\alpha_1-\alpha_2-\alpha_3-\alpha_4& \quad(0, 0, 0, \text{-}\frac{1}{2}, 0, \text{-}\frac{1}{2}, \frac{1}{2}) \quad&\boxed{\  0 \  \ 0 \ 0 \   $-1$  \ $1$ \ $-1$ \ $1$ }& \\
\varpi_{32}=\varpi_{19}-\alpha_1-\alpha_2-\alpha_3-\alpha_4-\alpha_5& \quad(0, 0, 0, \text{-}\frac{1}{2}, \text{-}1, \text{-}\frac{1}{2}, \frac{1}{2}) \quad&\boxed{\ 0 \   \ 0 \ 0 \ 0 \  $-1$\   0\   \ $1$ } &\\
\varpi_{30}=\varpi_{19}-\alpha_1-\alpha_2-\alpha_3-\alpha_7 & \quad(0, 0, 0, \frac{1}{2}, 0,\text{-}\frac{1}{2},\text{-}\frac{1}{2}) \quad&\boxed{\ 0\  \ 0 \ 0 \ $1$  \ 0 \ $-1$ \ $-1$ }&\end{array}\end{aligned}\label{Eq:translation}
	\end{equation}
including their expressions in both the basis of simple roots and the basis of fundamental weights
These two bases are used for different purposes in the analysis of the chambers.

\begin{table}[h!]
\begin{center}
\scalebox{.95}{

\begin{tabular}{| c|  l |l | l |}
\hline
Chambers &\quad Conditions  & \quad\quad\quad \quad\quad Splitting curves & Weights \\
\hline 
Ch$_1$ &   $
\begin{array}{l}
\phi_1-\phi_6<0
\end{array}
$& 
$
\begin{array}{l}
C_0\to C'_0+C_{06}\\
C_6 \to 2C_{06}+2C_1+2C_{2}+2C_3+C_4+C_7
\end{array}
$
 &$
 \begin{array}{l}
  C'_0\to -\varpi_1\\
 C_{06}\to -\varpi_{19}
 \end{array}$
 \\
\hline
Ch$_2$ &
$
\begin{array}{l}
\phi_1-\phi_6>0\\
-\phi_1+\phi_2-\phi_6<0
\end{array}
$

& 
$
\begin{array}{l}
C_1\to C'_{1}+C_{16} \\
C_6 \to 2C_{16}+2C_{2}+2C_3+C_4+C_7\\
\end{array}
$
& 
$
\begin{array}{l}
C'_1\   \to \varpi_{19}\\
C_{16}\to -\varpi_{20}\\
\end{array}
$

\\
\hline
Ch$_3$ &  $
\begin{array}{l}
-\phi_1+\phi_2-\phi_6>0\\
-\phi_2+\phi_3-\phi_6<0\\
\end{array}
$& 
$
\begin{array}{l}
C_2\to C'_{2}+C_{26} \\
C_6 \to 2C_{26}+2C_{3}+C_4+C_7\\
\end{array}
$
& 
$
\begin{array}{l}
C'_2\  \to \varpi_{20}\\
C_{26}\to -\varpi_{23}\\
\end{array}
$
\\
\hline
Ch$_4$ &   $
\begin{array}{l}
-\phi_2+\phi_3-\phi_6>0\\
-\phi_3+\phi_4-\phi_6+\phi_7<0\\
\end{array}
$&$
\begin{array}{l}
C_3\to C'_{3}+C_{36} \\
C_6 \to  2C_{36}+C_{4}+C_7\\
\end{array}
$ & 
$
\begin{array}{l}
C_{3}' \   \to   \varpi_{23}\\
C_{36} \to   -\varpi_{26}\\
\end{array}

$
\\
\hline
Ch$_5$ &
$
\begin{array}{l}
-\phi_3+\phi_4-\phi_6+\phi_7>0\\
\phi_4-\phi_6-\phi_7<0\\
-\phi_4+\phi_5-\phi_6+\phi_7<0
\end{array}
$
&
$
\begin{array}{l}
C_4\to C_{46}+C_{47} \\
C_6 \to C_{46}+C_{67}\\
C_{7}\to C_{47}+C_{67}
\end{array}
$
 & 
 $
\begin{array}{l}
 C_{47}\to  \varpi_{26} \\
 C_{67}\to  -\varpi_{30}\\
 C_{46}\to -\varpi_{29} \\
\end{array} 
 $
 
 \\
 \hline 
Ch$_6$ &
$
\begin{array}{l}
-\phi_4+\phi_5-\phi_6+\phi_7>0\\
-\phi_5+\phi_7<0
\end{array}
$
& $\begin{array}{l}
 C_5\to C_5'+ C_{57}\\
  C_7\to  C_{4}+ C_{6}+2C_{57}
\end{array}$&
$
\begin{array}{l}
C_{57}\to   \varpi_{29}\\
C_5' \ \to  - \varpi_{32}\\
 \end{array}
$
 \\
\hline
Ch$_7$  &  $
\begin{array}{l}
-\phi_4+\phi_5-\phi_6+\phi_7>0\\
-\phi_5+\phi_7>0
\end{array}
$
&
\   \   $C_7\to 2C_7'+ C_{4}+ 2C_{5}+C_{6}$
 &\  \   $C_7'\to -\varpi_{32}$ \\
\hline 
Ch$_8$ &\    $\phi_4-\phi_6-\phi_7>0$&\  \    $C_4  \to2 C'_{4}+   C_{6}+C_{7} $&   \ \   $C_4'\to \varpi_{30}$\\
\hline
\end{tabular}
}
\end{center}
\caption{
Chambers and fiber degenerations of  an E$_7$-model.
The chambers are defined with respect to the interior walls $\varpi^\bot_m$ for $m=19, 20,23,26,29,30$. 
These inequalities are imposed on the interior of the  dual fundamental Weyl chamber $\langle \alpha_i, \phi \rangle >0$ $i=1,2,3,4,5,6,7$.  
All the weights appearing in the right column are weights of the representation $\bf{56}$. 
In chambers Ch$_4$, Ch$_5$, Ch$_6$, and Ch$_8$, the weights are also  obtained geometrically by studying the splitting of curves after a resolution of singularities \cite{E7}. 
 \label{Table:E7.Split} }
\end{table}

~\\

 Our algorithm for determining the fiber degeneration in each chamber consists of the following steps:
 \begin{itemize}
 \item Identifying extremal weights by noting the interior walls of each chamber. 
 This will be some subset of the weights appearing in the sign vector ($\varpi_m$  for $m\in\{19, 20,23,26,29,30\}$), which can be read off of Figure \ref{Figure:IG}.
 
\item  Expressing these extremal weights in the basis of simple roots  (see equation \eqref{Eq:translation}) to identify the degeneration of the components of the generic fiber into rational effective curves.

\item Expressing extremal weights in the basis of fundamental weights (see equation \eqref{Eq:translation}) to get the intersection numbers of the corresponding curve with the fibral divisors. As explained in Remark \ref{rem11},
intersection with D$_0$ can be computed using linearity.  Explicitly,  
for any curve $C$, we have  
\begin{equation}\label{Eq.D0}
D_0 \cdot C=- (2D_1+3 D_2 + 4 D_3 + 3 D_4+2D_5+D_6+2D_7)\cdot C.
\end{equation}
\end{itemize}
The above method is applied to each chamber in Appendix~\ref{sec:fibdegen} and our results are summarized in Table~\ref{Table:E7.Split} and illustrated in Figure~\ref{Figure:DivShape}.  
{ We  confirm the analysis of \cite{Diaconescu:1998cn} and correct few inaccuracies in \cite{Box} such as the splitting rules for the curve C$_6$  in Chamber 1. }

\clearpage

\begin{figure}[!p]
\begin{center}
\scalebox{.7}{
\begin{tabular}{l l}

\scalebox{1.1}{
\begin{tikzpicture}
		\node at (-1,0)		  {\scalebox{2}{C$_0\to$\quad}};
				\node at (-4,0){\scalebox{2}{Ch$_1$\quad }};
				\node[draw,circle,thick,scale=1,label=below:{\scalebox{1.2}{$C'_{0}$}}] (1) at (1.2,0){$1$};
				\node[draw,circle,thick,scale=1,label=below:{\scalebox{1.2}{$C_{06}$}}] (2) at (2.4,0){$1$};
						\draw[thick] (1)--(2);
					\end{tikzpicture}}& 	\quad \quad\quad\quad
					
					\scalebox{1.1}{
\begin{tikzpicture}
\node at (-1,0)   {\scalebox{2}{C$_6\to$\quad}};
				\node[draw,circle,thick,scale=1,label=above:{\scalebox{1.2}{ $C_4$}}] (1b) at (1.2,1){$1$};
				\node[draw,circle,thick,scale=1,label=below:{\scalebox{1.2}{$C_{7}$}}] (1a) at (0,0){$1$};
				\node[draw,circle,thick,scale=1,label=below:{\scalebox{1.2}{$C_3$}}] (2) at (1.2,0){$2$};
				\node[draw,circle,thick,scale=1,label=below:{\scalebox{1.2}{$C_2$}}] (3) at (2.4,0){$2$};
				\node[draw,circle,thick,scale=1,label=below:{\scalebox{1.2}{$C_1$}}] (4) at (3.6,0){$2$};
				\node[draw,circle,thick,scale=1,label=below:{\scalebox{1.2}{$C_{06}$}}] (5) at (4.8,0){$2$};
				\draw[thick]  (1b)--(2);
				\draw[thick]  (1a)--(2)--(3)--(4)--(5);
					\end{tikzpicture}}
					\\
					&
					
					\\

\scalebox{1.1}{
\begin{tikzpicture}
		\node at (-.5,0)		  {\scalebox{2}{C$_1\to$\quad}};
				\node at (-4,0){\scalebox{2}{Ch$_2$\quad }};
				\node[draw,circle,thick,scale=1,label=below:{\scalebox{1.2}{$C'_{1}$}}] (1) at (1.2,0){$1$};
				\node[draw,circle,thick,scale=1,label=below:{\scalebox{1.2}{$C_{16}$}}] (2) at (2.4,0){$1$};
						\draw[thick] (1)--(2);
					\end{tikzpicture}}& 	\quad \quad\quad\quad
					
					\scalebox{1.1}{
\begin{tikzpicture}

\node at (-1,0)   {\scalebox{2}{C$_6\to$\quad}};
				\node[draw,circle,thick,scale=1,label=above:{\scalebox{1.2}{ $C_4$}}] (1b) at (1.2,1){$1$};
				\node[draw,circle,thick,scale=1,label=below:{\scalebox{1.2}{$C_{7}$}}] (1a) at (0,0){$1$};
				\node[draw,circle,thick,scale=1,label=below:{\scalebox{1.2}{$C_3$}}] (2) at (1.2,0){$2$};
				\node[draw,circle,thick,scale=1,label=below:{\scalebox{1.2}{$C_2$}}] (3) at (2.4,0){$2$};
				\node[draw,circle,thick,scale=1,label=below:{\scalebox{1.2}{$C_{16}$}}] (4) at (3.6,0){$2$};
				\draw[thick]  (1b)--(2);
				\draw[thick]  (1a)--(2)--(3)--(4);
					\end{tikzpicture}}
					\\
					&
					
					\\

\scalebox{1.1}{
\begin{tikzpicture}
		\node at (-.5,0)		  {\scalebox{2}{C$_2\to$\quad}};
				\node at (-4,0){\scalebox{2}{Ch$_3$\quad }};
				\node[draw,circle,thick,scale=1,label=below:{\scalebox{1.2}{$C'_{2}$}}] (1) at (1.2,0){$1$};
				\node[draw,circle,thick,scale=1,label=below:{\scalebox{1.2}{$C_{26}$}}] (2) at (2.4,0){$1$};
						\draw[thick] (1)--(2);
					\end{tikzpicture}}& 	\quad \quad\quad\quad
					
					\scalebox{1.1}{
\begin{tikzpicture}
\node at (-1,0)   {\scalebox{2}{C$_6\to$\quad}};
				\node[draw,circle,thick,scale=1,label=above:{\scalebox{1.2}{ $C_4$}}] (1b) at (1.2,1){$1$};
				\node[draw,circle,thick,scale=1,label=below:{\scalebox{1.2}{$C_{7}$}}] (1a) at (0,0){$1$};
				\node[draw,circle,thick,scale=1,label=below:{\scalebox{1.2}{$C_3$}}] (2) at (1.2,0){$2$};
				\node[draw,circle,thick,scale=1,label=below:{\scalebox{1.2}{$C_{26}$}}] (3) at (2.4,0){$2$};
				\draw[thick]  (1b)--(2);
				\draw[thick]  (1a)--(2)--(3);
					\end{tikzpicture}}
					\\
					&
					
					\\

\scalebox{1.1}{
\begin{tikzpicture}
		\node at (-.5,0)		  {\scalebox{2}{C$_3\to$\quad}};
				\node at (-4,0){\scalebox{2}{$\text{Ch}_4$\quad }};
				\node[draw,circle,thick,scale=1,label=below:{\scalebox{1.2}{$C'_{3}$}}] (1) at (1.2,0){$1$};
				\node[draw,circle,thick,scale=1,label=below:{\scalebox{1.2}{$C_{36}$}}] (2) at (2.4,0){$1$};
						\draw[thick] (1)--(2);
					\end{tikzpicture}}& 	\quad \quad\quad \quad
					
					\scalebox{1.1}{
\begin{tikzpicture}
\node at (-1,0)   {\scalebox{2}{C$_6\to$\quad}};
				\node[draw,circle,thick,scale=1,label=above:{\scalebox{1.2}{ $C_4$}}] (1b) at (1.2,1){$1$};
				\node[draw,circle,thick,scale=1,label=below:{\scalebox{1.2}{$C_{7}$}}] (1a) at (0,0){$1$};
				\node[draw,circle,thick,scale=1,label=below:{\scalebox{1.2}{$C_{36}$}}] (2) at (1.2,0){$2$};
				\draw[thick]  (1b)--(2);
				\draw[thick]  (1a)--(2);
					\end{tikzpicture}}
					
					\\
					&
					\\
					&
					\\

\scalebox{1.1}{
\begin{tikzpicture}
		\node at (-.5,0)		  {\scalebox{2}{C$_4\to$\quad}};
				\node at (-4,0){\scalebox{2}{$\text{Ch}_5$\quad }};
				\node[draw,circle,thick,scale=1,label=below:{\scalebox{1.2}{$C_{46}$}}] (1) at (1.2,0){$1$};
				\node[draw,circle,thick,scale=1,label=below:{\scalebox{1.2}{$C_{47}$}}] (2) at (2.4,0){$1$};
						\draw[thick] (1)--(2);
					\end{tikzpicture}}& 	\quad \quad\quad \quad
					
					\scalebox{1.1}{
\begin{tikzpicture}
\node at (-1,0)   {\scalebox{2}{C$_6\to$\quad}};
				\node[draw,circle,thick,scale=1,label=below:{\scalebox{1.2}{$C_{46}$}}] (1a) at (0,0){$1$};
				\node[draw,circle,thick,scale=1,label=below:{\scalebox{1.2}{$C_{67}$}}] (2) at (1.2,0){$1$};
				\draw[thick]  (1a)--(2);
					\end{tikzpicture}}
										\quad \quad\quad \quad
					
					\scalebox{1.1}{
\begin{tikzpicture}
\node at (-1,0)   {\scalebox{2}{C$_7\to$\quad}};
				\node[draw,circle,thick,scale=1,label=below:{\scalebox{1.2}{$C_{47}$}}] (1a) at (0,0){$1$};
				\node[draw,circle,thick,scale=1,label=below:{\scalebox{1.2}{$C_{67}$}}] (2) at (1.2,0){$1$};
				\draw[thick]  (1a)--(2);
					\end{tikzpicture}}
					
					\\					
					
					& \\
					& \\
					&\\

\scalebox{1.1}{
\begin{tikzpicture}
		\node at (-.5,0)		  {\scalebox{2}{C$_5\to$\quad}};
				\node at (-4,0){\scalebox{2}{$\text{Ch}_6$\quad }};
				\node[draw,circle,thick,scale=1,label=below:{\scalebox{1.2}{$C'_{5}$}}] (1) at (1.2,0){$1$};
				\node[draw,circle,thick,scale=1,label=below:{\scalebox{1.2}{$C_{57}$}}] (2) at (2.4,0){$1$};
						\draw[thick] (1)--(2);
					\end{tikzpicture}}& 	\quad \quad\quad \quad
					
					\scalebox{1.1}{
\begin{tikzpicture}
\node at (-1,0)   {\scalebox{2}{C$_7\to$\quad}};
				\node[draw,circle,thick,scale=1,label=below:{\scalebox{1.2}{$C_{4}$}}] (1a) at (0,0){$1$};
				\node[draw,circle,thick,scale=1,label=below:{\scalebox{1.2}{$C_{57}$}}] (2) at (1.2,0){$2$};
				\node[draw,circle,thick,scale=1,label=below:{\scalebox{1.2}{$C_{6}$}}] (3) at (2.4,0){$1$};
				\draw[thick]  (1a)--(2)--(3);
					\end{tikzpicture}}\\
					
						\\
					& \\

\scalebox{1.1}{
\begin{tikzpicture}
		\node at (-.5,0)		  {\scalebox{2}{C$_7\to$\quad}};
				\node at (-4,0){\scalebox{2}{$\text{Ch}_7$\quad }};
				\node[draw,circle,thick,scale=1,label=below:{\scalebox{1.2}{$C_{4}$}}] (1) at (1.2,0){$1$};
				\node[draw,circle,thick,scale=1,label=below:{\scalebox{1.2}{$C_{5}$}}] (2) at (2.4,0){$2$};
				\node[draw,circle,thick,scale=1,label=below:{\scalebox{1.2}{$C'_{7}$}}] (3) at (3.6,0){$2$};
				\node[draw,circle,thick,scale=1,label=above:{\scalebox{1.2}{$C_{6}$}}] (4) at (2.4,1){$1$};
						\draw[thick] (1)--(2)--(3)
						;
						\draw[thick] (2)--(4);
					\end{tikzpicture}}& 	
					\\
						
					& \\
						&\\

\scalebox{1.1}{
\begin{tikzpicture}
		\node at (-.5,0)		  {\scalebox{2}{C$_4\to$\quad}};
				\node at (-4,0){\scalebox{2}{$\text{Ch}_8$\quad }};
				\node[draw,circle,thick,scale=1,label=below:{\scalebox{1.2}{$C_{7}$}}] (1) at (1.2,0){$1$};
				\node[draw,circle,thick,scale=1,label=below:{\scalebox{1.2}{$C'_{4}$}}] (2) at (2.4,0){$2$};
								\node[draw,circle,thick,scale=1,label=below:{\scalebox{1.2}{$C_6$}}] (3) at (3.6,0){$1$};
						\draw[thick] (1)--(2)--(3);
					\end{tikzpicture}}& 		
					\end{tabular}}
					\end{center}
					\caption{
				In each chamber, the decomposition of the III$^*$ fiber is only possible if some of the nodes degenerate. 
				We give the decomposition for all fibers over V$(s,a)$. For more information, see  Appendix \ref{sec:fibdegen}. 								\label{Figure:DivShape}}

\end{figure}
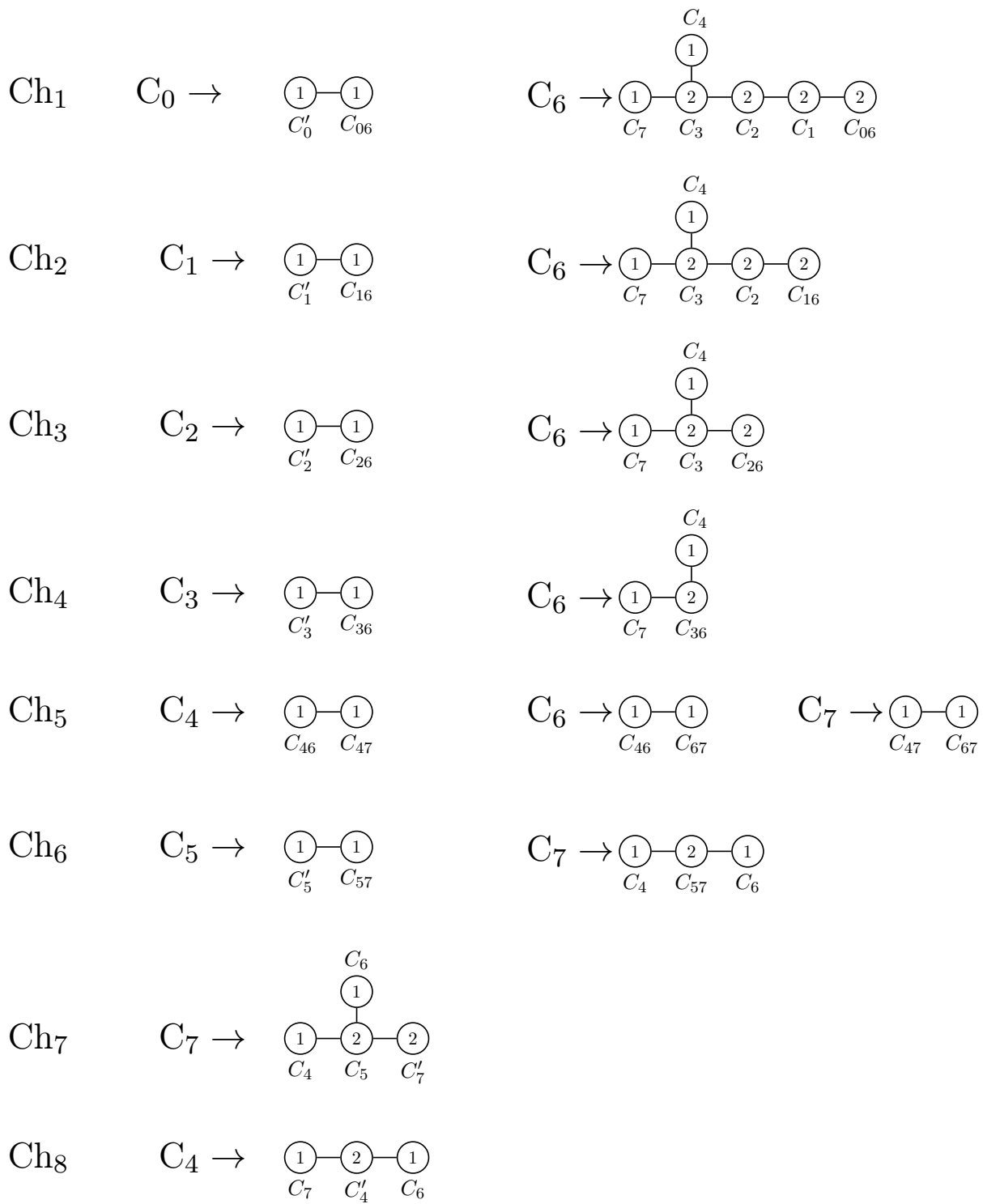

\clearpage

~\\
\begin{table}[H]
\begin{center}
\begin{tabular}{|c|c|c|c|c|c|c|c|}
\hline
Ch$_1$ &  Ch$_2$ & Ch$_3$ & Ch$_4$ & Ch$_5$ & Ch$_6$ & Ch$_7$ & Ch$_8$ \\
\hline
VIII             & VII                & VI               & V                 &    IV             & II               & III                  & I \\
\hline
\end{tabular}
\end{center}
\caption{\label{Table:Dictionary}
Dictionary between our conventions and those of reference \cite{Diaconescu:1998cn}.
 }
\end{table}

In  \cite[Section 4]{Diaconescu:1998cn},  Diaconescu and Entin  identified the Coulomb chambers of an E$_7$-gauge theory with matter transforming  in the representation $\bf{56}$.\footnote{See the inequalities  listed in equations (4.6), (4.10), (4.11), (4.16), (4.19), and (4.21) of \cite[Section 4]{Diaconescu:1998cn}, which are reproduced here in Table  \ref{Table:E7.Split}. }
The dictionary between our conventions (Ch$_i$) for the chambers of I(E$_7$, $\bf{56}$) and those of \cite{Diaconescu:1998cn} (I$-$VIII) is given in Table \ref{Table:Dictionary}.

Reference \cite{Diaconescu:1998cn}  also  gave the degeneration of the components of the fiber III$^*$, however, the description of chamber Ch$_1$ was incomplete as  the  splitting of the affine node was not specified. For Chamber 1 we find (see Appendix \ref{sec:fibdegen} for details) 
\begin{equation}
\text{Degeneration in Chamber 1:}\quad 
\begin{cases}
C_0\to C'_0 +C_{06}, \\
C_6\to 2C_{06}+2C_1+2C_2+2C_3+C_4+C_7.
\end{cases}
\end{equation}

Our results can be compared with other findings in the literature.  In \cite{Box}, the chambers of an E$_7$-model with matter in the representation $\mathbf{56}$ were re-analyzed. In particular, the splitting for the affine node in Chamber 1 was explicitly discussed. Unfortunately, the description of Chamber 1 in \cite{Box} has two important inaccuracies (see  Section \ref{Sec:Ch1}): the splitting of the fiber C$_6$  is incorrect as written as it misses the  component C$_4$,  and the weight of the node of appearing in the degeneration of C$_0$ is also inaccurate. The curve representing the zero node  has weights $[0,0,0,0,0,1,0]$ in the basis of fundamental weights and therefore corresponds to  the highest weight of the representation $\mathbf{56}$. In the notation of \cite{Box}, that should be $L_7$ and not the weight (7).

Reference \cite{DelZotto} examined the geometry of E$_7$-models for which the divisor $S$ supporting the fiber of type III$^*$ is assumed to be a smooth rational curve of self-intersection $-8$ or $-7$. For a ($-8$)-curve, the authors of \cite{DelZotto} conclude that the fibral divisors are Hirzebruch surfaces that intersect transversally. In particular, the fiber III$^*$ does not degenerate to a more singular fiber. However when the curve $S$ is of self-intersection $-7$, there are necessarily singular fibers that carry the weights of the representation $\mathbf{56}$. In our notation, the claim of \cite{DelZotto} is that only one fibral divisor is not Hirzebruch and this divisor  is $D_0$ or $D_6$ and that only one new extremal curve appears in the degeneration. 
However,  the analysis of \cite{DelZotto} does not agree with  any of the chambers of an  E$_7$-model with matter in the representation $\mathbf{56}$ and is in contradiction with both \cite{Diaconescu:1998cn} and \cite{Box}.

~\\

\pagebreak

\section{D$_4$-flops of the E$_7$-model as flops of the orbifold  $\mathbb{C}^3/(\mathbb{Z}_2\times \mathbb{Z}_2)$} \label{dirpf}

In this section, we prove that the flops between Y$_4$, Y$_5$, Y$_6$, and Y$_8$ can be understood as the D$_4$ flops of the crepant resolution of the binomial variety $t^2-u_1 u_2 u_3=0$.
Denote the blowup $X_{i+1}\to X_i$ along the ideal $(f_1,f_2,\ldots,f_n)$ with exceptional divisor $E$ as:
$$\begin{tikzcd}[column sep=2.4cm]X_i \arrow[leftarrow]{r} {\displaystyle (f_1,\ldots, f_n|E)}  & X_{i+1}\end{tikzcd},$$
where $X_0$ is the projective bundle in which the Weierstrass model is defined. 
 We consider the following tree of  blowups:
\vspace{-1.5em}
\begin{equation}\label{eq:finalbl}
\scalebox{.8}{
\begin{tikzpicture}[baseline= (a).base]
\node[scale=1] (a) at (0,0) {
\begin{tikzcd}[column sep=1.7cm, ampersand replacement=\&]
\& \& \& \& \&       \text{X}_5'' \arrow[leftarrow]{r}[above]{\displaystyle (y,e_4|e_6)} \&   \text{X}_6'' \arrow[leftarrow]{r}[above]{\displaystyle (e_4,e_6|e_7)}  \&  \text{X}_{7}'' 
\\
\& \& \& \& \& \& \&  \text{X}_{7}^+ 
\\
X_0 \arrow[leftarrow]{r} {\displaystyle (x,y,s|e_1)} \& X_1 \arrow[leftarrow]{r} {\displaystyle (x,y,e_1|e_2)} \&  X_2\arrow[leftarrow]{r} {\displaystyle (y,e_1|e_3)} \&  X_3  \arrow[leftarrow]{r} {\displaystyle (e_2,e_3|e_4)} \&  
X_4 
\arrow[leftarrow,sloped]{ruu}[above]{\displaystyle (y,e_2|e_5)}
\arrow[leftarrow]{r} {\displaystyle (y,e_4|e_5)} \&  X_5  \arrow[leftarrow]{r} {\displaystyle (y,e_2|e_6)}  \arrow[leftarrow, sloped]{ddr}[below]{\displaystyle (e_2,e_5|e_6)}\&  X_6 
\arrow[leftarrow,sloped]{ru}[above]{\displaystyle (e_2,e_5|e_7)}  
\arrow[leftarrow,sloped]{rd}[below]{\displaystyle (e_2,e_6|e_7)}
\& \\
\& \& \& \& \& \& \& \text{X}_7^-\\
\& \& \& \& \& \&\text{X}_6' \arrow[leftarrow]{r}[above]{\displaystyle (y,e_2|e_7)} \& \text{X}_7'
\end{tikzcd} 
}
;
\end{tikzpicture}}
\vspace{-.5em}
\end{equation}
where Y$_4$, Y$_5$, Y$_6$, and Y$_8$ correspond, respectively, to the proper transforms  X$_7''$, X$_7^+$, X$_7^-$, and  X$_7'$. 
They each stem from a crepant resolution of the partial resolution X$_4$: 
\begin{equation}
\widetilde{Y}:e_3  y^2 -e_1 e_2 e_4 (a e_1 e_3 s^3 x+b e_1^2 e_3^2 e_4  s^5+e_2 x^3)=0
\end{equation}
with projective coordinates: 
\begin{equation}
[e_1e_2^2e_3e_4^3x:e_1e_2^2e_3^2e_4^4y:z][e_2e_4x :e_2e_3e_4^2y :s] [x: e_3e_4y : e_1e_3e_4]  [y:e_1][e_2:e_3].
\end{equation}
The  singularities are at $e_1e_3\neq 0$. 
In that patch, $\widetilde{Y}$ has the singularities of the binomial variety 
\begin{equation}
\mathbb{C}[u_1,u_2,u_3,t]/(t^2- u_1 u_2 u_3 ),
\end{equation}
which is isomorphic to the orbifold $\mathbb{C}^3/(\mathbb{Z}_2\times \mathbb{Z}_2)$. Here the discrete group $\mathbb{Z}_2\times \mathbb{Z}_2$ is generated by $(u_1, u_2, u_3)\to (-u_1, u_2, -u_3)$ and $(u_1, u_2, u_3)\to (u_1,- u_2, -u_3)$. This variety is known to have four crepant resolutions whose flops form a D$_4$ Dynkin diagram as shown in Figure \ref{Fig:C322}. 

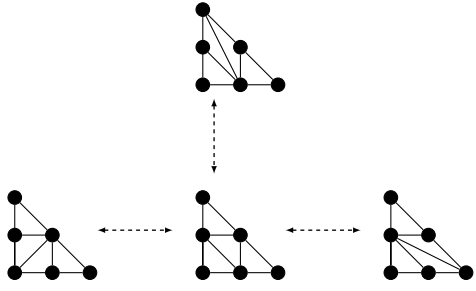
\begin{figure}[H]
\begin{center}
\vspace{-1em}
\scalebox{.5}{
\begin{tikzpicture}
			\draw[dashed, latex-latex, line width=1 pt] (1.2,0)--(3.2,0);  \draw[dashed, latex-latex, line width=1 pt]  (6.2,0)--(8.2,0); \draw[dashed, latex-latex, line width=1 pt] (4.3,3.5)--(4.3,1.5);
					
					\node (1) at (0,0)	{\begin{tikzpicture}
				\node[draw,circle,thick,scale=1,fill=black,label=above:{}] (6) at (0,2){};
				\node[draw,circle,thick,scale=1,fill=black,label=above:{}] (1) at (0,0){};
				\node[draw,circle,thick,scale=1,fill=black,label=above:{}] (2) at (1,0){};
				\node[draw,circle,thick,scale=1,fill=black,label=above:{}] (3) at (2,0){};
				\node[draw,circle,thick,scale=1,fill=black,label=above:{}] (5) at (0,1){};
				\node[draw,circle,thick,scale=1,fill=black,label=above:{}] (4) at (1,1){};
				\draw[thick] (1) to (2) to (3) to (4) to (5)  to (1) to (6) to (4);
				\draw[thick] (1) to (4) to (2);
			\end{tikzpicture}};
					\node (2) at (5,0)	{\begin{tikzpicture}
				\node[draw,circle,thick,scale=1,fill=black,label=above:{}] (6) at (0,2){};
				\node[draw,circle,thick,scale=1,fill=black,label=above:{}] (1) at (0,0){};
				\node[draw,circle,thick,scale=1,fill=black,label=above:{}] (2) at (1,0){};
				\node[draw,circle,thick,scale=1,fill=black,label=above:{}] (3) at (2,0){};
				\node[draw,circle,thick,scale=1,fill=black,label=above:{}] (5) at (0,1){};
				\node[draw,circle,thick,scale=1,fill=black,label=above:{}] (4) at (1,1){};
				\draw[thick] (1) to (2) to (3) to (4) to (5)  to (1) to (6) to (4);
				\draw[thick] (2) to (4);   \draw[thick] (5) to (2);
			\end{tikzpicture}};
			\node (3) at (10,0) 
			{
			\begin{tikzpicture}
				\node[draw,circle,thick,scale=1,fill=black,label=above:{}] (6) at (0,2){};
				\node[draw,circle,thick,scale=1,fill=black,label=above:{}] (1) at (0,0){};
				\node[draw,circle,thick,scale=1,fill=black,label=above:{}] (2) at (1,0){};
				\node[draw,circle,thick,scale=1,fill=black,label=above:{}] (3) at (2,0){};
				\node[draw,circle,thick,scale=1,fill=black,label=above:{}] (5) at (0,1){};
				\node[draw,circle,thick,scale=1,fill=black,label=above:{}] (4) at (1,1){};
				\draw[thick] (1) to (2) to (3) to (4) to (5)  to (1) to (6) to (4);
				\draw[thick] (5) to (3);
				\draw[thick] (5) to (2);
			\end{tikzpicture}};
			\node (4) at (5,5)	{\begin{tikzpicture}
				\node[draw,circle,thick,scale=1,fill=black,label=above:{}] (6) at (0,2){};
				\node[draw,circle,thick,scale=1,fill=black,label=above:{}] (1) at (0,0){};
				\node[draw,circle,thick,scale=1,fill=black,label=above:{}] (2) at (1,0){};
				\node[draw,circle,thick,scale=1,fill=black,label=above:{}] (3) at (2,0){};
				\node[draw,circle,thick,scale=1,fill=black,label=above:{}] (5) at (0,1){};
				\node[draw,circle,thick,scale=1,fill=black,label=above:{}] (4) at (1,1){};
				\draw[thick] (1) to (2);
				\draw[thick] (2) to (3);
				\draw[thick]  (3) to (4);
				\draw[thick]  (4) to (2);
				
							\draw[thick]  (1) to (6) to (4);
				\draw[thick] (2) to (5);   \draw[thick] (6) to (2);
			\end{tikzpicture}};
			\end{tikzpicture}}
			\end{center}
			\caption{Flops between the four crepant resolutions of the singularity $\mathbb{C}[u_1,u_2,u_3,t]/(t^2- u_1 u_2 u_3 )$. 
			\label{Fig:C322}
			 }
			\end{figure}
			
\vspace{-1em}
This provides a direct proof of the $D_4$ flop structure between the chambers examined in~\cite{E7}. There we considered a different set of blowups to resolve the four shaded chambers in Figure~\ref{Figure:IG}. To understand the flops between Y$_4$, Y$_5$, and Y$_8$, we considered
\begin{equation}\label{Ch458}
\begin{tikzpicture}[baseline= (a).base]
\node[scale=.75] (a) at (0,0) {
\begin{tikzcd}[column sep=1.7cm, ampersand replacement=\&]
\& \& \& \& \& \& \&  \text{X}_{7}^+ 
\\
X_0 \arrow[leftarrow]{r} {\displaystyle (x,y,s|e_1)} \& X_1 \arrow[leftarrow]{r} {\displaystyle (x,y,e_1|e_2)} \&  X_2\arrow[leftarrow]{r} {\displaystyle (y,e_1,e_2|e_3)} \&  X_3  \arrow[leftarrow]{r} {\displaystyle (y,e_2|e_4)} \&  
X_4 \arrow[leftarrow]{r} {\displaystyle (e_2,e_4|e_5)} \&  X_5  \arrow[leftarrow]{r} {\displaystyle (y,e_3|e_6)}  \arrow[leftarrow]{ddr}[left]{\displaystyle (e_3,e_4|e_6)}\&  X_6 
\arrow[leftarrow,sloped]{ru}[above]{\displaystyle (e_3,e_6|e_7)}  
\arrow[leftarrow,sloped]{rd}[below]{\displaystyle (e_3,e_4|e_7)}
\& \\
\& \& \& \& \& \& \& \text{X}_7^-\\
\& \& \& \& \& \&\text{X}_6' \arrow[leftarrow]{r}[above]{\displaystyle (y,e_3|e_7)} \& \text{X}_7'
\end{tikzcd} 
};
\end{tikzpicture}
\end{equation}
where Y$_4$, Y$_5$, and Y$_8$ are the proper transforms of $X_7^+$, $X_7^-$, and $X_7'$,  respectively.
To understand the flops between Y$_4$, Y$_5$, and Y$_6$, we considered 
\begin{equation}\label{Ch456}
\begin{tikzpicture}[baseline= (a).base]
\node[scale=.75] (a) at (0,0) {
\begin{tikzcd}[column sep=1.8cm, ampersand replacement=\&]
 \& \& \& \& \& \&  \text{X}_{6}^+ 
\\
X_0 \arrow[leftarrow]{r} {\displaystyle (x,y,s|e_1)} \& X_1
 \arrow[leftarrow]{r} {\displaystyle (y,e_1|e_2)} 
  \&  X_2
  \arrow[leftarrow]{r} {\displaystyle (x,y,e_2|e_3)} 
  \&  X_3 
   \arrow[leftarrow]{r} {\displaystyle (x,e_2,e_3|e_4)} \& X_4 \arrow[leftarrow]{r} {\displaystyle (e_2,e_4|e_5)}
   \arrow[leftarrow,sloped]{rdd}[below] {\displaystyle (e_2,e_3|e_5)}
    \&  X_5
\arrow[leftarrow,sloped]{ru}[above]{\displaystyle (e_2,e_5|e_6)}  
\arrow[leftarrow,sloped]{rd}[below]{\displaystyle (e_2,e_3|e_6)}
\& \\ 
\& \& \& \& \& \& \text{X}_6^-\\
 \& \& \& \& \&\text{X}_5' \arrow[leftarrow]{r}[above]{\displaystyle (e_2,e_4|e_6)} \& \text{X}_6'
\end{tikzcd} 
};
\end{tikzpicture}
\end{equation}
where 
Y$_4$, Y$_5$, and  Y$_6$  are the proper transforms of $X_6^+$ , $X_6^-$, and $X'_6$, respectively.  In addition to allowing an explicit proof of the D$_4$-flops of the E$_7$-model, the new sequence of blowups~(\ref{eq:finalbl}) will simplify our triple intersection computations in Section~\ref{tripint}.\footnote{\label{addblfootnt}
While blowups~(\ref{Ch456}) and~(\ref{Ch456}) were sufficient to resolve the respective varieties and study the flop structure in~\cite{E7}, if one were to continue to use them to compute triple intersection numbers via the techniques laid out in Section~\ref{Sec:Intersection}, one would need to append the following additional blowups
\begin{equation}\label{addbl}
\begin{tikzcd}[column sep=1.4cm, ampersand replacement=\&]
X^+_7 \arrow[leftarrow]{r} {\displaystyle (e_1,e_6|e_8)} \& X^+_8
\end{tikzcd} ,~~~
\begin{tikzcd}[column sep=1.4cm, ampersand replacement=\&]
X^-_7 \arrow[leftarrow]{r} {\displaystyle (y,e_1|e_8)} \& X^-_8
\end{tikzcd} ,~~~
\begin{tikzcd}[column sep=1.4cm, ampersand replacement=\&]
X'_6 \arrow[leftarrow]{r} {\displaystyle (x,e_2|e_7)} \& X''_7
\end{tikzcd} ,~~~
\begin{tikzcd}[column sep=1.4cm, ampersand replacement=\&]
X'_7 \arrow[leftarrow]{r} {\displaystyle (y,e_1|e_8)} \& X'_8
\end{tikzcd} 
\end{equation}
to compute triple intersections in $Ch_4$, $Ch_5$, $Ch_6$, and $Ch_8$, respectively.
This is safe to do since blowing-up a variety along a smooth locus is an isomorphism. Such additional blowups can be interpreted as auxiliary blowups requested by excess intersections.  Upon doing so we can use the exceptional divisor as a clean Cartier divisor.
}

\section{Triple intersection numbers}\label{tripint}

\label{Sec:Intersection}

We begin with some pushforward theorems needed to perform our triple intersection computations.

\begin{defn}[Resolution of singularities]
A resolution of singularities of a variety $Y$ is a proper birational morphism $\varphi:\widetilde{Y}\longrightarrow Y$  such that  
$\widetilde{Y}$ is nonsingular
and  $\varphi$ is an isomorphism away  from the singular  locus of $Y$. 
\end{defn}

\begin{defn}[Crepant birational map]
A  birational map $\varphi:\widetilde{Y}\to Y$ between two algebraic varieties with  $\mathbb{Q}$-Cartier canonical classes is said to be {\em crepant} if it preserves the canonical class. 
\end{defn}

\begin{thm}[Aluffi, {
{\cite[Lemma 1.3]{Aluffi_CBU}}}]
\label{Thm:AluffiCBU}
Let $Z\subset X$ be the  complete intersection  of $d$ nonsingular hypersurfaces $Z_1$, \ldots, $Z_d$ meeting transversally in $X$.  Let  $f: \widetilde{X}\longrightarrow X$ be the blowup of $X$ centered at $Z$. We denote the exceptional divisor of $f$  by $E$. The total Chern class of $\widetilde{X}$ is then:
$$
c( T{\widetilde{X}})=(1+E) \left(\prod_{i=1}^d  \frac{1+f^* Z_i-E}{1+ f^* Z_i}\right)  f^* c(TX).
$$
\end{thm}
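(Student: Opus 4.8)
The plan is to realize the blowup as the zero locus of a section of a vector bundle on a projective bundle over $X$, and then to apply adjunction. Since $Z=Z_1\cap\cdots\cap Z_d$ is a transversal complete intersection, I would write $Z_i=V(s_i)$ with $s_i$ a section of $\mathscr{L}_i:=\mathscr{O}_X(Z_i)$ and set $\mathscr{E}:=\bigoplus_{i=1}^d\mathscr{L}_i$. Then $s:=(s_1,\dots,s_d)$ is a regular section of $\mathscr{E}$ with zero scheme $Z$, and transversality gives $N_{Z/X}\cong\mathscr{E}|_Z$. First I would form the projective bundle $\pi:P:=\mathbb{P}(\mathscr{E})\to X$ of lines in the fibers of $\mathscr{E}$, with tautological sub-line-bundle $\mathscr{O}_P(-1)\hookrightarrow\pi^*\mathscr{E}$ and universal quotient $\mathscr{Q}_P:=\pi^*\mathscr{E}/\mathscr{O}_P(-1)$ of rank $d-1$.

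The key geometric step is to identify $\widetilde{X}=\Bl_Z X$ with the zero locus $V(\bar{s})\subset P$ of the section $\bar{s}\in\Gamma(P,\mathscr{Q}_P)$ obtained by composing $\pi^*s$ with the surjection $\pi^*\mathscr{E}\twoheadrightarrow\mathscr{Q}_P$. Set-theoretically $V(\bar{s})=\{(x,\ell):s(x)\in\ell\}$, which is a single point over each $x\notin Z$ (the line spanned by $s(x)$) and the entire fiber $\mathbb{P}(\mathscr{E}_x)\cong\mathbb{P}^{d-1}$ over each $x\in Z$; this is exactly the graph closure defining the blowup of the ideal $(s_1,\dots,s_d)=\mathscr{I}_Z$, with exceptional divisor $E=\mathbb{P}(\mathscr{E}|_Z)=\mathbb{P}(N_{Z/X})$. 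A dimension count shows $V(\bar{s})$ has pure dimension $\dim X$, so $\bar{s}$ is a regular section of $\mathscr{Q}_P$ and hence $N_{\widetilde{X}/P}\cong\mathscr{Q}_P|_{\widetilde{X}}$. Along the way I would record the relation $\mathscr{O}_P(-1)|_{\widetilde{X}}\cong\mathscr{O}_{\widetilde{X}}(E)$: the composite $\pi^*s|_{\widetilde{X}}$ factors through $\mathscr{O}_P(-1)|_{\widetilde{X}}$ and defines a section vanishing precisely on $E$, whence $h|_{\widetilde{X}}=-E$, where $h:=c_1(\mathscr{O}_P(1))$.

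With these identifications the remaining computation is mechanical. By adjunction $c(T\widetilde{X})=c(TP)|_{\widetilde{X}}/c(\mathscr{Q}_P)|_{\widetilde{X}}$. The relative Euler sequence $0\to\mathscr{O}_P\to\mathscr{O}_P(1)\otimes\pi^*\mathscr{E}\to T_{P/X}\to 0$, together with $0\to T_{P/X}\to TP\to\pi^*TX\to 0$, gives $c(TP)=\pi^*c(TX)\prod_i(1+h+\pi^*Z_i)$, while the tautological sequence $0\to\mathscr{O}_P(-1)\to\pi^*\mathscr{E}\to\mathscr{Q}_P\to 0$ gives $c(\mathscr{Q}_P)=\prod_i(1+\pi^*Z_i)/(1-h)$. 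Substituting, restricting to $\widetilde{X}$, and using $h|_{\widetilde{X}}=-E$ and $\pi^*Z_i|_{\widetilde{X}}=f^*Z_i$, the factor $1-h$ becomes $1+E$ and each $1+h+\pi^*Z_i$ becomes $1+f^*Z_i-E$, yielding $c(T\widetilde{X})=(1+E)\prod_i\frac{1+f^*Z_i-E}{1+f^*Z_i}\,f^*c(TX)$, as claimed.

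I expect the main obstacle to be the key geometric step: rigorously proving $\Bl_Z X\cong V(\bar{s})$ as schemes (not merely as sets) and that $\bar{s}$ is a regular section. I would handle this through the universal property of the blowup, exhibiting $V(\bar{s})\to X$ as terminal among morphisms pulling $\mathscr{I}_Z$ back to an invertible ideal, or equivalently by checking on the standard affine charts of $\mathbb{P}(\mathscr{E})$ that $V(\bar{s})$ reproduces the Rees-algebra description of $\Bl_Z X$; transversality is exactly what guarantees the expected codimension and therefore the regularity of $\bar{s}$ that makes adjunction applicable. Everything after this point is routine bookkeeping with the Euler and tautological sequences.
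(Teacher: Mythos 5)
Your argument is correct: the realization of $\Bl_Z X$ as the zero scheme of a regular section of the universal quotient bundle on $\mathbb{P}\bigl(\bigoplus_i\mathscr{O}_X(Z_i)\bigr)$, the identification $\mathscr{O}_P(-1)|_{\widetilde{X}}\cong\mathscr{O}_{\widetilde{X}}(E)$, and the adjunction/Euler-sequence bookkeeping all check out, and the chart verification you propose for regularity of $\bar{s}$ is the right way to close the one delicate step. The paper itself gives no proof of this statement, importing it from Aluffi's Lemma 1.3, and your argument is essentially the standard one from that reference, so there is nothing to compare beyond noting agreement.
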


\begin{thm}[Esole--Jefferson--Kang,  see  {\cite{Euler}}] \label{Thm:Push}
    Let the nonsingular variety $Z\subset X$ be a complete intersection of $d$ nonsingular hypersurfaces $Z_1$, \ldots, $Z_d$ meeting transversally in $X$. Let $E$ be the class of the exceptional divisor of the blowup $f:\widetilde{X}\longrightarrow X$ centered 
at $Z$.
 Let $\widetilde{Q}(t)=\sum_a f^* Q_a t^a$ be a formal power series with $Q_a\in A_*(X)$.
 We define the associated formal power series  ${Q}(t)=\sum_a Q_a t^a$, whose coefficients pullback to the coefficients of $\widetilde{Q}(t)$. 
 Then the pushforward $f_*\widetilde{Q}(E)$ is
 $$
  f_*  \widetilde{Q}(E) =  \sum_{\ell=1}^d {Q}(Z_\ell) M_\ell, \quad \text{where} \quad  M_\ell=\prod_{\substack{m=1\\
 m\neq \ell}}^d  \frac{Z_m}{ Z_m-Z_\ell }.
 $$ 
\end{thm}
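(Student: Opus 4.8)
The plan is to reduce the claim to the single-power identity $f_*E^a=\sum_\ell Z_\ell^{\,a}M_\ell$ and then resum against the coefficients $Q_a$. Writing $\widetilde Q(E)=\sum_a (f^*Q_a)\,E^a$ and applying the projection formula $f_*\big((f^*Q_a)\,E^a\big)=Q_a\,f_*E^a$, linearity gives $f_*\widetilde Q(E)=\sum_a Q_a\,f_*E^a$. Thus it suffices to establish, for every $a\ge 0$,
\[
f_* E^a=\sum_{\ell=1}^d Z_\ell^{\,a}\,\prod_{\substack{m=1\\ m\ne\ell}}^d \frac{Z_m}{Z_m-Z_\ell},
\]
after which summing over $a$ against $Q_a$ produces exactly $\sum_\ell Q(Z_\ell)M_\ell$.

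The geometric input is the structure of the exceptional divisor. Since the $Z_i$ are smooth and meet transversally, the conormal sheaf of $Z$ splits as $\bigoplus_i \mathcal O_Z(-Z_i)$, so $E=\mathbb P(N_{Z/X})\xrightarrow{\pi}Z$ is the projectivization of the split bundle $N_{Z/X}=\bigoplus_i i^*\mathcal O_X(Z_i)$, whose Chern roots are $x_i:=i^*Z_i$; this is the same splitting underlying Theorem~\ref{Thm:AluffiCBU}. The self-intersection of $E$ is governed by $\mathcal O_{\widetilde X}(E)|_E=\mathcal O_{\mathbb P(N)}(-1)$, i.e.\ $E|_E=-\xi$ with $\xi=c_1\big(\mathcal O_{\mathbb P(N)}(1)\big)$, and the projective-bundle relation $\prod_i(\xi+x_i)=0$ holds on $E$ (equivalently, the proper transforms of the $Z_i$ become disjoint, $\prod_i(f^*Z_i-E)=0$). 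Because $f|_E=i\circ\pi$, for $a\ge 1$ I can push forward in two stages,
\[
f_* E^a=i_*\,\pi_*\big((E|_E)^{a-1}\big)=(-1)^{a-1}\,i_*\,\pi_*\big(\xi^{a-1}\big),
\]
where the fibre integrals $\pi_*(\xi^{\,d-1+k})=s_k(N_{Z/X})$ are the Segre classes of the split bundle, with total class $s(N_{Z/X})=\prod_i(1+x_i)^{-1}$.

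The core step is the resummation. Expanding the Segre generating series by partial fractions,
\[
\prod_{i=1}^d \frac{1}{1-x_i t}=\sum_{\ell=1}^d \Big(\prod_{m\ne\ell}\frac{x_\ell}{x_\ell-x_m}\Big)\frac{1}{1-x_\ell t},
\]
produces the complete-homogeneous-symmetric (Lagrange-interpolation) expressions for $\pi_*(\xi^{a-1})=s_{a-d}(N_{Z/X})$, and after the sign $(-1)^{a-1}$ coming from $E|_E=-\xi$ is combined with the signs in the Segre expansion, all factors of $-1$ cancel to leave
\[
f_* E^a=i_*\Big(\sum_{\ell=1}^d \frac{x_\ell^{\,a-1}}{\prod_{m\ne\ell}(x_m-x_\ell)}\Big),\qquad a\ge 1.
\]
Finally, the projection formula $i_*(i^*\beta)=\beta\cdot i_*(1_Z)$ together with $i_*(1_Z)=\prod_m Z_m$ (the class of the transversal complete intersection) converts each $x_\ell=i^*Z_\ell$ back to $Z_\ell$ and absorbs the factor $\prod_m Z_m$, turning the $\ell$-th summand into $Z_\ell^{\,a}M_\ell$. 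The remaining case $a=0$, namely $\sum_\ell M_\ell=1=f_*(1)$, follows by Lagrange-interpolating the constant function $1$ at the nodes $Z_1,\dots,Z_d$ and evaluating at $t=0$.

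The main obstacle is not any single computation but the bookkeeping around the formal denominators $Z_m-Z_\ell$: the individual terms $M_\ell$ and $Q(Z_\ell)$ only make sense in a formal localization, and one must argue that the symmetric combination $\sum_\ell Q(Z_\ell)M_\ell$ is a genuine class in $A_*(X)$, with the apparent poles cancelling. This is equivalent to checking that the negative-degree Segre contributions vanish, matching $f_*E^a=0$ for $1\le a\le d-1$, together with the $a=0$ normalization above. The other hazard is keeping the projective-bundle sign conventions consistent (the choice of $\mathbb P$, the identity $E|_E=-\xi$, and the indexing of the Segre classes); once these are fixed compatibly the factors of $-1$ cancel as indicated, and the only remaining ingredients are the projection formula and the standard pushforward for a projectivized split bundle.
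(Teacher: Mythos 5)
The paper does not prove Theorem \ref{Thm:Push}; it is quoted from the reference \cite{Euler}, and your argument is essentially the proof given there: reduce to $f_*E^a$ by the projection formula, identify $E=\mathbb{P}(N_{Z/X})$ with split normal bundle, push forward via Segre classes, and resum by partial fractions. Your sign bookkeeping is consistent --- both sides equal $(-1)^{d-1}h_{a-d}(Z_1,\dots,Z_d)\prod_m Z_m$ for $a\ge 1$, which reproduces the special cases $f_*E=0$, $f_*E^2=-Z_1Z_2$, etc.\ of Theorem \ref{Thm:Lem34} --- so the proposal is correct and matches the intended proof.
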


\begin{thm}[{See  \cite{Euler} and  \cite{AE1,AE2,Fullwood:SVW,EKY}}]\label{Thm:PushH}
Let $\mathscr{L}$ be a line bundle over a variety $B$ and $\pi: X_0=\mathbb{P}[\mathscr{O}_B\oplus\mathscr{L}^{\otimes 2} \oplus \mathscr{L}^{\otimes 3}]\longrightarrow B$ a projective bundle over $B$. 
 Let $\widetilde{Q}(t)=\sum_a \pi^* Q_a t^a$ be a formal power series in  $t$ such that $Q_a\in A_*(B)$. Define the auxiliary power series $Q(t)=\sum_a Q_a t^a$. 
Then 
$$
\pi_* \widetilde{Q}(H)=-2\left. \frac{{Q}(H)}{H^2}\right|_{H=-2L}+3\left. \frac{{Q}(H)}{H^2}\right|_{H=-3L}  +\frac{Q(0)}{6 L^2},
$$
 where  $L=c_1(\mathscr{L})$ and $H=c_1(\mathscr{O}_{X_0}(1))$ is the first Chern class of the dual of the tautological line bundle of  $ \pi:X_0=\mathbb{P}(\mathscr{O}_B \oplus\mathscr{L}^{\otimes 2} \oplus\mathscr{L}^{\otimes 3})\rightarrow B$.
\end{thm}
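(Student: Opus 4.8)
\emph{Proof strategy.} The plan is to identify the Chow ring of the projective bundle and then reduce the statement to a Lagrange-interpolation (partial-fraction) identity. Write $\mathscr{E}=\mathscr{O}_B\oplus\mathscr{L}^{\otimes 2}\oplus\mathscr{L}^{\otimes 3}$, whose Chern roots are $0$, $2L$, and $3L$. The tautological inclusion $\mathscr{O}_{X_0}(-1)\hookrightarrow\pi^*\mathscr{E}$ is nowhere vanishing, hence defines a nowhere-zero section of $\pi^*\mathscr{E}\otimes\mathscr{O}_{X_0}(1)$; its top Chern class must vanish, which gives the fundamental relation
$$ H(H+2L)(H+3L)=0 \quad\text{in } A_*(X_0). $$
Consequently $A_*(X_0)=A_*(B)[H]/\bigl(R(H)\bigr)$ with $R(H)=H(H+2L)(H+3L)$, so $A_*(X_0)$ is a free $A_*(B)$-module with basis $1,H,H^2$, and every $\widetilde{Q}(H)$ is already an $A_*(B)$-linear combination of these three classes.

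Next I would record the three elementary pushforwards for the $\mathbb{P}^2$-bundle $\pi$, namely $\pi_*1=0$, $\pi_*H=0$, and $\pi_*H^2=1$; the first two vanish for fiber-dimension reasons and the last is the integral of the point class over a fiber. Because $\pi_*$ is $A_*(B)$-linear, these three values determine $\pi_*$ on all of $A_*(X_0)$. I would then establish the residue formula
$$ \pi_*f(H)=\sum_{i}\frac{f(-a_i)}{\prod_{j\neq i}(a_j-a_i)},\qquad a_0=0,\ a_1=2L,\ a_2=3L, $$
valid for any $f$, by checking it on the basis $f=1,H,H^2$: the right-hand side is unchanged upon modifying $f$ by a multiple of $R$ (each summand is evaluated at a root of $R$), is $A_*(B)$-linear, and, writing $x_i=-a_i$, the Lagrange identity $\sum_i x_i^{k}/\prod_{j\neq i}(x_i-x_j)$ equals $0$ for $k=0,1$ and $1$ for $k=2$, matching $\pi_*1,\pi_*H,\pi_*H^2$ exactly.

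Finally I would specialize. The derivatives of $R$ at its three roots are $R'(0)=6L^2$, $R'(-2L)=-2L^2$, and $R'(-3L)=3L^2$, so applying the residue formula to $\widetilde{Q}(H)=\sum_a\pi^*Q_a\,H^a$ and using the projection formula $\pi_*(\pi^*Q_a\cdot H^a)=Q_a\,\pi_*H^a$ yields
$$ \pi_*\widetilde{Q}(H)=\frac{Q(0)}{6L^2}+\frac{Q(-2L)}{-2L^2}+\frac{Q(-3L)}{3L^2}. $$
Rewriting $Q(-2L)/(-2L^2)=-2\,Q(H)/H^2|_{H=-2L}$ and $Q(-3L)/(3L^2)=3\,Q(H)/H^2|_{H=-3L}$ produces the claimed expression.

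\emph{Main obstacle.} The main difficulty is bookkeeping of the apparent poles in $L$ rather than any deep geometric input: the right-hand side is written with explicit $L^{-2}$ factors, even though the genuine pushforward $\pi_*\widetilde{Q}(H)$ lies in $A_*(B)$ with no denominators. One must argue that, after expanding $Q$ and collecting terms, all negative powers of $L$ cancel. This cancellation is precisely what the basis check guarantees, since for $f=1$ and $f=H$ the pole terms sum to zero while for $f=H^2$ they combine to the pole-free value $1$; by $A_*(B)$-linearity the poles then cancel for every $Q$. A secondary point requiring care is the sign convention in the projective-bundle relation: one must use that $H$ is the class of the \emph{dual} of the tautological bundle, so that the Chern roots enter as $H+2L$ and $H+3L$ rather than $H-2L$ and $H-3L$, as flipping this would move the evaluation points and invalidate the formula.
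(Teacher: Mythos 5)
The paper does not prove Theorem \ref{Thm:PushH} internally; it imports it by citation from \cite{Euler} and related references, so there is no in-paper argument to compare against step by step. Your derivation is correct and is essentially the standard one used in those sources: the Grothendieck relation $H(H+2L)(H+3L)=0$ from the nowhere-vanishing section of $\pi^*\mathscr{E}\otimes\mathscr{O}_{X_0}(1)$, the Segre-class values $\pi_*1=\pi_*H=0$, $\pi_*H^2=1$, and Lagrange interpolation at the roots $0,-2L,-3L$ with $R'(0)=6L^2$, $R'(-2L)=-2L^2$, $R'(-3L)=3L^2$, together with the correct observations that the apparent $L^{-2}$ poles cancel as a polynomial identity and that the dual-tautological convention fixes the evaluation points at $-2L$ and $-3L$.
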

The above theorems are enough for most applications of intersection theory to elliptic fibrations.  
Since our blowups involve regular sequences (c.f. Fulton) of length two or three, we can use:
\begin{thm}[{See \cite[Lemma 3.4]{Euler} }]\label{Thm:Lem34}
For a blowup $f$ with center $(Z_1, Z_2)$, exceptional divisor $E$
\begin{align}\nonumber
f_* E=0, \quad f_* E^2=-Z_1 Z_2, \quad f_*E^3= -(Z_1 +Z_2)Z_1 Z_2, \quad f_* E^4=-(Z_1^2+Z_2^2+ Z_1 Z_2) Z_1 Z_2.
\end{align}
For a blowup $f$ with the complete intersection $(Z_1, Z_2, Z_3)$ as its center and exceptional divisor $E$
\begin{align}\nonumber
f_* E=0, \quad f_* E^2=0, \quad f_*E^3= Z_1 Z_2 Z_3, \quad f_* E^4=(Z_1+Z_2) Z_1 Z_2 Z_3.
\end{align}
\end{thm}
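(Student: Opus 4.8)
The plan is to obtain every identity as an immediate specialization of the Esole--Jefferson--Kang pushforward formula (Theorem~\ref{Thm:Push}), which already applies to a blowup $f$ centered at a transverse complete intersection $(Z_1,\ldots,Z_d)$. First I would take the generating series to be the single monomial $Q(t)=t^{n}$, so that $Q_n=1\in A_*(X)$ while all other coefficients vanish; then $\widetilde Q(E)=E^{n}$ and the theorem gives, with no further geometry,
$$
f_*E^{n}=\sum_{\ell=1}^{d} Z_\ell^{\,n}\,M_\ell,\qquad M_\ell=\prod_{\substack{m=1\\ m\neq\ell}}^{d}\frac{Z_m}{Z_m-Z_\ell}.
$$
At this point the statement has been reduced to evaluating a finite sum of rational expressions in the classes $Z_1,\ldots,Z_d$ for $d=2,3$ and $n=1,2,3,4$, which I would treat as a purely algebraic identity.

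Second, I would pull the common factor $\prod_m Z_m$ out of the $M_\ell$ and rewrite the sum as a divided difference of the monomial $t\mapsto t^{\,n-1}$:
$$
f_*E^{n}=(-1)^{d-1}\Big(\prod_{m=1}^{d}Z_m\Big)\sum_{\ell=1}^{d}\frac{Z_\ell^{\,n-1}}{\prod_{m\neq\ell}(Z_\ell-Z_m)}.
$$
The inner sum is the classical divided-difference evaluation $\sum_{\ell}Z_\ell^{\,k}\big/\prod_{m\neq\ell}(Z_\ell-Z_m)=h_{\,k-d+1}(Z_1,\ldots,Z_d)$, where $h_j$ is the complete homogeneous symmetric polynomial with the convention $h_0=1$ and $h_j=0$ for $j<0$. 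This collapses the whole computation into the single closed form
$$
f_*E^{n}=(-1)^{d-1}\Big(\prod_{m=1}^{d}Z_m\Big)\,h_{\,n-d}(Z_1,\ldots,Z_d).
$$
Reading off $d=2$ with $n=1,2,3,4$ gives $h_{-1},h_0,h_1,h_2$, i.e. $0$, $-Z_1Z_2$, $-(Z_1+Z_2)Z_1Z_2$, and $-(Z_1^2+Z_1Z_2+Z_2^2)Z_1Z_2$; reading off $d=3$ gives the two vanishings $f_*E=f_*E^2=0$ (from $h_{-2}=h_{-1}=0$), together with $f_*E^3=Z_1Z_2Z_3$ and $f_*E^4=Z_1Z_2Z_3\,h_1(Z)$, the fully symmetric degree-one multiple of $Z_1Z_2Z_3$.

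I expect the computation itself to be routine; the only points deserving care are the \emph{vanishing} results and, more generally, the fact that the manifestly rational sum $\sum_\ell Z_\ell^n M_\ell$ is actually polynomial in the $Z_i$. I would justify polynomiality structurally rather than by brute force: since $f_*$ is a well-defined map $A_*(\widetilde X)\to A_*(X)$, its output is a genuine class, so all apparent poles must cancel and the divided-difference bookkeeping only fixes the surviving coefficients. As an independent check --- and an alternative proof that bypasses Theorem~\ref{Thm:Push} entirely --- I would use that $E=\mathbb{P}(N)$ for the normal bundle $N=\bigoplus_i \mathcal{O}_Z(Z_i)$ of the transverse center $Z$, so that $\pi_*\big((-E|_E)^{\,j}\big)$ equals the Segre class $s_{\,j-d+1}(N)$, which is again the complete homogeneous symmetric polynomial $h_{\,j-d+1}(Z_1,\ldots,Z_d)$; composing with $Z\hookrightarrow X$ reproduces the same four formulas in each case. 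The symmetry of every answer in $Z_1,\ldots,Z_d$ serves as a convenient consistency check throughout.
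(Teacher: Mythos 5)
Your derivation is correct and is essentially the only natural one: the paper itself gives no proof of Theorem~\ref{Thm:Lem34} (it is quoted from \cite[Lemma 3.4]{Euler}), and the intended justification is precisely your first step, namely specializing Theorem~\ref{Thm:Push} to $Q(t)=t^n$. Your repackaging of the resulting sum as $f_*E^{n}=(-1)^{d-1}\bigl(\prod_{m}Z_m\bigr)h_{n-d}(Z_1,\dots,Z_d)$ via the divided-difference identity is a genuine (if modest) improvement: it handles all eight cases, and all $n$ and $d$, uniformly, and it makes the two vanishing statements and the polynomiality of the answer transparent rather than accidental. It also has a useful consequence you should state explicitly: your formula shows that the last identity in the codimension-three case, as printed, is a typo. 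Since the blowup center is unchanged under permuting the $Z_i$, the answer must be symmetric, and indeed $f_*E^4=h_1(Z)\,Z_1Z_2Z_3=(Z_1+Z_2+Z_3)Z_1Z_2Z_3$, not $(Z_1+Z_2)Z_1Z_2Z_3$; a numerical substitution such as $(Z_1,Z_2,Z_3)=(1,2,3)$ in Theorem~\ref{Thm:Push} gives $36$ versus the printed $18$. One small caution on your independent Segre-class check: with $N=\bigoplus_i\mathscr{O}_Z(Z_i)$ one has $s(N)=\prod_i(1+Z_i)^{-1}$, so $s_j(N)=(-1)^j h_j(Z)$ rather than $h_j(Z)$; the extra sign combines with the $(-1)^{n-1}$ from $(E|_E)^{n-1}=(-\xi)^{n-1}$ to give the overall $(-1)^{d-1}$, so the check does close, but only once that sign is tracked.
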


We are now ready to compute the triple intersection numbers of the fibral divisors  for the cases for which an explicit crepant resolution is available.
The triple intersection polynomial is by definition 
\begin{equation}
F= \int (\sum_{a=0}^7D_a \phi_a)^3 [Y]=\int_B \pi_* f_{1*} f_{2*} f_{3*} f_{4*} f_{5*} f_{6*} f_{7*} f_{8*} \Big[\big({\sum_{a=0}^7D_a \phi_a}\big)^3 [Y]\Big],
\end{equation}
where $f_i$ is the $i$-th blowup and $\pi: X_0=\mathbb{P}[\mathscr{O}_B\oplus\mathscr{L}^{\otimes 2}\oplus \mathscr{L}^{\otimes 3}]\to B$ is the map defining the projective bundle. 
Using the pushforward theorems discussed above, the triple intersection polynomial can be expressed in terms of intersection numbers in the base $B$.

The intersection polynomials are computed in Appendix~\ref{intpoly} below.  We consider the tree of  blowups~(\ref{eq:finalbl}) and
each of the triple intersections is computed  by successively applying Theorem \ref{Thm:Lem34} or Theorem \ref{Thm:Push} to pushforward the intersection computation to X$_0$ and finally using Theorem \ref{Thm:PushH} to pushforward to the base.\footnote{ 
The only challenge arises when it is not obvious how to express the fibral divisor as a neat Cartier divisor. In some cases, when it is only defined as a complete intersection $g_1=g_2=0$, we can use an excess intersection formula. Equivalently, one can just perform another blowup with center $(g_1,g_2)$ (see footnote~\ref{addblfootnt}).}

 In the Calabi--Yau threefold case, we have 
\begin{equation}
L=-K, \quad S\cdot L= 2-2g +S^2.
\end{equation}
The matter representations are the adjoint and the $\mathbf{56}$.  The number of hypermultiplets charged under these are functions of the genus and self-intersection number of $S$: 
\begin{equation}
n_A=g, \quad n_F=  \frac{1}{2}  V(a)\cdot S  =4(1-g)+\frac{1}{2}S^2.
\end{equation}
The non-negativity of $n_A$ and $n_F$  implies a bound on the self-intersection of $S$: 
\begin{equation}
S^2\geq -8(1-g).
\end{equation}
\begin{rem}
There are no adjoint hypermultiplets when $g=0$ and no matter in the representation $\mathbf{56}$ when $n_F=0$, which means $S^2=-8(1-g)$. 
In particular, when $g=0$ and $S^2=-8$, we find that all four of the triple intersection polynomials (see Theorem~\ref{thm:TripleCY3}) reduce to
\end{rem}
\begin{equation}
\begin{aligned}
F(\phi)= &\phantom{+}   
8 (\phi _0^3 +  \phi _1^3+\phi _2^3+ \phi _3^3 +\phi _4^3+ \phi _5^3+\phi _6^3+ \phi _7^3) 
-6  (-2\phi _0^2 \phi _1+3\phi _0\phi _1^2 )
\\
&-6  (\phi _1^2 \phi _2+ \phi _3^2 \phi _2+\phi _3^2\phi _4+\phi _3^2\phi _7- \phi _4 \phi _5^2+2\phi_1 \phi_2^2+2 \phi _4^2 \phi _5-2\phi _5  \phi _6^2+3\phi _5^2 \phi _6 )\\
~\\
\end{aligned}
\end{equation}

\begin{rem}[Odd self-intersection and half-hypermultiplets]
When $S^2$ is odd, $n_F$ is a half-integer.  This is possible since the representation $\mathbf{56}$ is pseudo-real and thus allows half-hypermultiplets. 
In particular, if $g=0$ and  $S^2=-7$, we have one half-hypermultiplet at $V(a)\cap S$ since $n_F=\frac{1}{2}$. 
\end{rem}

\begin{thm}  \label{thm:TripleCY3}

For a Calabi--Yau threefold $Y$ defined as the crepant resolution of an E$_7$ Weierstrass model and corresponding to the chamber $i$ ($i=4,5,6,8$), the cubic intersection polynomials reduce to $F_i(\phi)$ and are as follows: 

\begin{equation}
\begin{aligned}
F_4(\phi)= &\phantom{+}   
8(1-g) (\phi _0^3 +  \phi _1^3+\phi _2^3+\phi _4^3+ \phi _5^3+ \phi _7^3)-S^2 \phi _3^3    -2 (4-4g+S^2)\phi _6^3 \\
&-3(4-4g+S^2) \phi _0^2 \phi _1+ 3 (2-2g+S^2)\phi _0\phi _1^2 +
3 (6 - 6 g + S^2) \phi _1^2 \phi _2+3(4 - 4 g + S^2)  \phi_1 \phi_2^2\\
& -3(8 - 8 g + S^2)( \phi _3 \phi _2^2 +\phi_3^2 \phi_6+\phi_3 \phi_6^2+2  \phi_4^2 \phi_6+2\phi_6\phi_7^2)\\
& +3(6 - 6 g + S^2)\phi _3^2 \phi _2-6(11 - 11 g + S^2) \phi _5^2 \phi _6 +6 (10 - 10 g + S^2) \phi _5  \phi _6^2\\
&+6 (-1 + g)  ( \phi _3^2\phi _4+\phi _3^2\phi _7+2 \phi _4^2 \phi _5- \phi _4 \phi _5^2)\\
& +6(8 - 8 g + S^2)( \phi_3 \phi_4 \phi_6+\phi_4 \phi_5 \phi_6 +\phi_3 \phi_6 \phi_7)
\end{aligned}
\end{equation}

\begin{align}
\begin{aligned}
F_5(\phi)=&8(1-g)( \phi _0^3 + \phi _1^3 +\phi_2^3+ \phi _3^3+ \phi _5^3)-S^2 (\phi _4^3+ \phi _6^3 +\phi _7^3)\\
&
-3(8 - 8 g + S^2)( \phi _3 \phi _2^2   -\phi _3 \phi _4^2+ \phi _4 \phi _6^2+ \phi _4 \phi _7^2+ \phi _6 \phi _7^2+ \phi _4^2 \phi _6+ \phi _6^2 \phi _7-\phi _3 \phi _7^2 +\phi _7 \phi _4^2)\\
&-3 (4 - 4 g + S^2) \phi _1 \phi _0^2 +3(2 - 2 g + S^2)\phi _1^2 \phi _0+3(6 - 6 g + S^2) \phi _2 \phi _3^2- \phi _1^2 \phi _2)
\\
&+ 3(10 - 10 g + S^2)( 2 \phi _5 \phi _6^2 - \phi _3^2 \phi _4 - \phi _3^2 \phi _7 )+3
(4 - 4 g + S^2) \phi _1 \phi _2^2 -6 (11 - 11 g + S^2)\phi _5^2 \phi _6
\\
&  +6 (-1 + g)(2 \phi _4^2 \phi _5 - \phi _4 \phi _5^2)
 + 6 (8 - 8 g + S^2) (\phi _4 \phi _5 \phi _6+\phi _3 \phi _4 \phi _7+\phi _4 \phi _6 \phi _7)
\end{aligned}
\end{align}

\begin{equation}
\begin{aligned}
F_6(\phi) = &\phantom{+}     
8(1-g)( \phi _0^3+ \phi _1^3 +\phi _2^3 + \phi _3^3 + \phi _4^3 + \phi _6^3)-S^2 \phi _5^3-2 (4 - 4 g + S^2)\phi _7^3\\
& +6(8 - 8 g + S^2) ( \phi _3 \phi _4 \phi _7+ \phi _4 \phi _5 \phi _7 + \phi _5 \phi _6 \phi _7)+ 
3 (2 - 2 g + S^2) \phi _0 \phi _1^2\\
& -3 (8 - 8 g + S^2) (\phi _3 \phi _2^2-\phi _3 \phi _4^2-\phi _3 \phi _7^2+\phi _5 \phi _7^2+2 \phi _4^2 \phi _7+\phi _5^2 \phi _7+2 \phi _6^2 \phi _7) \\
&- 3(4 - 4 g + S^2) (\phi _1 \phi _0^2- \phi _1 \phi _2^2)
+3(6 - 6 g + S^2)(\phi _2 \phi _3^2- \phi _1^2 \phi _2)-3(14 - 14 g + S^2)\phi _5^2 \phi _6\\
& +3(10 - 10 g + S^2)(\phi _4 \phi _5^2- \phi _3^2 \phi _7 - \phi _3^2 \phi _4)
+3(12 - 12 g + S^2) (\phi _5 \phi _6^2- \phi _4^2 \phi _5)
\end{aligned}
\end{equation}

\begin{align}
\begin{aligned}
F_8(\phi)=&\    8(1-g)(\phi _0^3+\phi _1^3+\phi _2^3+ \phi _3^3 +\phi _5^3 + \phi _6^3+ \phi _7^3) -2 (4 - 4 g + S^2)\phi _4^3  \\
&+3(2 - 2 g + S^2) \phi _0 \phi _1^2 -3 (4 - 4 g + S^2) \phi _0^2  \phi _1 \\
&+3 (4 - 4 g + S^2) \phi _1 \phi _2^2-3 (6 - 6 g + S^2)\phi _1^2 \phi _2  +3 (6 - 6 g + S^2) \phi _2 \phi _3^2-3 (8 - 8 g + S^2) \phi _2^2 \phi _3\\
&+6(10 - 10 g + S^2)\phi _5 \phi _6^2 -6(11 - 11 g + S^2)  \phi _5^2 \phi _6 -3 (10 - 10 g + S^2)  \phi _3^2 \phi _7\\
&+6(1-g)\phi _4 \phi _5^2 -3(10 - 10 g + S^2)\phi _3^2 \phi _4 +12(-1+g)\phi _4^2 \phi _5\\
&-6(8 - 8 g + S^2))\phi _4 \left( \phi _6^2 -\phi _5 \phi _6+\phi _7^2\right)+ 3(8 - 8 g + S^2) \phi _3 ( \phi _4+\phi _7)^2.
\end{aligned}
\end{align}

\end{thm}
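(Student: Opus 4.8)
The plan is to reduce the cubic intersection numbers to pushforwards on the base $B$, compute them for one convenient resolution per chamber, and then specialize to the Calabi--Yau case. As recorded around Figure~\ref{Figure:IG}, the triple intersection polynomial depends only on the chamber and not on the particular crepant resolution used to reach it, so for each $i\in\{4,5,6,8\}$ I am free to work on the branch of the blowup tree~\eqref{eq:finalbl} whose proper transform realizes $Y_i$ (namely $\mathrm{X}_7''$, $\mathrm{X}_7^+$, $\mathrm{X}_7^-$, $\mathrm{X}_7'$), appending the auxiliary blowups of~\eqref{addbl} whenever a fibral divisor is presented only as a complete intersection rather than as a clean Cartier divisor.

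First I would fix, in each chamber, the class of every fibral divisor $D_a$ ($a=0,\dots,7$) in the Chow ring of the smooth ambient space obtained at the end of the chosen blowup branch, writing each $D_a$ as a $\mathbb{Z}$-linear combination of the exceptional classes $E_j=[e_j]$ and of pullbacks from $X_0$ and $B$ (such as $H$, $L=c_1(\mathscr{L})$, and $S$). The affine node $D_0$ is then recovered by linearity through the highest-root relation $D_0=-(2D_1+3D_2+4D_3+3D_4+2D_5+D_6+2D_7)$ recorded in~\eqref{Eq.D0}. I would likewise record the class of the proper transform $[Y_i]$ as the total transform of the Weierstrass hypersurface corrected by the multiplicities with which each blowup center lies on it; crepancy forces these corrections, and reproducing the Kodaira III$^*$ intersection pattern among neighbouring $D_a$ provides a running consistency check.

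The mechanical core is expanding $\big(\sum_{a=0}^7 D_a\phi_a\big)^3\cdot[Y_i]$ and pushing it forward one blowup at a time. Since every center in~\eqref{eq:finalbl} and~\eqref{addbl} is a transversal complete intersection of two or three smooth hypersurfaces, I would apply Theorem~\ref{Thm:Lem34} repeatedly to eliminate the exceptional classes $E_8,E_7,\dots,E_1$ in turn, with Theorem~\ref{Thm:Push} handling the mixed terms that are not pure powers of a single exceptional class, reducing each monomial to a class on $X_0$. A final application of Theorem~\ref{Thm:PushH} pushes the result from the projective bundle $X_0$ down to $B$. Because all fibral divisors are supported over the curve $S$, every surviving monomial localizes on $S$ and therefore reduces to an integer combination of $S^2$ and $S\cdot L$, together with the degrees $S\cdot V(a)=(4L-3S)\cdot S$ and $S\cdot V(b)=(6L-5S)\cdot S$; in particular no independent base invariant such as $L^2$ can appear.

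Finally I would impose the Calabi--Yau condition $L=-K_B$ and use adjunction on $S\subset B$ in the form $S\cdot L=2-2g+S^2$, whence $S\cdot V(a)=8-8g+S^2$ and $S\cdot V(b)=12-12g+S^2$, and substitute these into the base-level intersection numbers to obtain the stated $F_4,F_5,F_6,F_8$. A strong end-to-end check is that setting $g=0$ and $S^2=-8$, so that $n_F=0$ and no codimension-two degeneration occurs, must collapse all four polynomials to the single common expression recorded in the Remark preceding the statement. The main obstacle I anticipate is the first step: correctly identifying the Cartier class of each fibral divisor in each chamber — in particular those $D_a$ that split across a flop and those cut out only as a complete intersection, where the auxiliary blowups of~\eqref{addbl} and an excess-intersection argument are needed to turn the divisor into a single exceptional class before the pushforward machinery applies cleanly.
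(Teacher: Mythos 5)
Your proposal follows essentially the same route as the paper: the authors likewise read off the classes of the fibral divisors $D_a$ as combinations of the exceptional classes $E_j$ (and $S$, $L$, $H$) along each branch of the blowup tree~\eqref{eq:finalbl}, record $[Y_i]=3H+6L-2E_1-2E_2-E_3-\cdots-E_7$, push $\big(\sum_a D_a\phi_a\big)^3[Y_i]$ down to $B$ via Theorems~\ref{Thm:Lem34}, \ref{Thm:Push} and~\ref{Thm:PushH} (with the auxiliary blowups of~\eqref{addbl} handling divisors presented only as complete intersections), and then impose $L=-K_B$ and $S\cdot L=2-2g+S^2$ to obtain the stated $F_i$. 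The only cosmetic difference is that the paper obtains $D_0=S-E_1$ directly as the proper transform of $V(s)$ rather than via the highest-root relation, which changes nothing.
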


 \section{Isomorphism classes of fibral divisors}\label{isomcl}

In this section, we determine the isomorphism classes of the fibral divisors that are projective bundles in the relative minimal models Y$_4$, Y$_5$, Y$_6$, and Y$_8$, using their known crepant resolutions. The results are listed in Table \ref{Table:Div}.

When the fiber $C_a$ does not degenerate, the fibral divisor D$_a$ $(a=0,1,\ldots, 7)$ is a ruled surface $D_a\to S$ isomorphic to a $\mathbb{P}^1$-bundle over the divisor $S$ supporting the E$_7$ fiber.  Since we only have two line bundles available, namely $\mathscr{L}$ and $\mathscr{S}$, we expect to have projective bundles of the form\footnote{As in Section~\ref{sec:defe7}, we use the symbol $\mathscr{S}$ to denote the line bundle for which the divisor $S$ is the zero locus of a smooth section. } 
$$\mathbb{P}_S(\mathscr{S}^{\otimes p} \oplus\mathscr{L}^{\otimes q}), $$
where $p$ and $q$ are integer numbers. 
There are two methods of finding $p$ and $q$.  The first method is akin to that used in \cite{F4}, where one keeps track of the rescaling freedom after each blowup in order to identify the class of the relative projective coordinates.  The second method uses intersection theory results from the previous section. We will take the intersection theoretic approach here and include an example scaling computation in Appendix~\ref{fibdivpf}.

\begin{table}[htb]
\begin{center}
{
\begin{tabular}{|c|c|c|c|c|}
\hline 
& Y$_4$& Y$_5$&  Y$_6$&  Y$_8$\\
\hline 
D$_0$ &  $\mathbb{P}_S(\mathscr{O}_S \oplus \mathscr{L})$ & $\mathbb{P}_S(\mathscr{O}_S \oplus \mathscr{L})$ & $\mathbb{P}_S(\mathscr{O}_S \oplus \mathscr{L})$ &  $\mathbb{P}_S(\mathscr{O}_S \oplus \mathscr{L})$ \\
\hline 
D$_1$ & $\mathbb{P}_S(\mathscr{S} \oplus \mathscr{L}^{\otimes 2})$& $\mathbb{P}_S(\mathscr{S} \oplus \mathscr{L}^{\otimes 2})$& $\mathbb{P}_S(\mathscr{S} \oplus \mathscr{L}^{\otimes 2})$&  $\mathbb{P}_S(\mathscr{S} \oplus \mathscr{L}^{\otimes 2})$ \\
\hline 
D$_2$ &$\mathbb{P}_S(\mathscr{S}^{\otimes 2} \oplus \mathscr{L}^{\otimes 3})$ & $\mathbb{P}_S(\mathscr{S}^{\otimes 2} \oplus \mathscr{L}^{\otimes 3})$ & $\mathbb{P}_S(\mathscr{S}^{\otimes 2} \oplus \mathscr{L}^{\otimes 3})$ &  $\mathbb{P}_S(\mathscr{S}^{\otimes 2} \oplus \mathscr{L}^{\otimes 3})$ \\
\hline 
D$_3$ & N/A &$\mathbb{P}_S(\mathscr{S}^{\otimes 3} \oplus \mathscr{L}^{\otimes 4})$ & $\mathbb{P}_S(\mathscr{S}^{\otimes 3} \oplus \mathscr{L}^{\otimes 4})$& $\mathbb{P}_S(\mathscr{S}^{\otimes 3} \oplus \mathscr{L}^{\otimes 4})$  \\
\hline 
D$_4$ & 
$\mathbb{P}_S(\mathscr{S} \oplus \mathscr{L})$
&
N/A
& $\mathbb{P}_S(\mathscr{S}^{\otimes 4} \oplus \mathscr{L}^{\otimes 5})$ &  N/A \\
\hline 
D$_5$ & $\mathbb{P}_S(\mathscr{S}^{\otimes 2} \oplus \mathscr{L}^{\otimes 2})$ & $\mathbb{P}_S(\mathscr{S}^{\otimes 2} \oplus \mathscr{L}^{\otimes 2})$ &  N/A&  $\mathbb{P}_S(\mathscr{S}^{\otimes 2} \oplus \mathscr{L}^{\otimes 2})$ \\
\hline 
D$_6$ &N/A & N/A &   $\mathbb{P}_S(\mathscr{S}^{\otimes 6} \oplus \mathscr{L}^{\otimes 7})$ &  $\mathbb{P}_S(\mathscr{S}^{\otimes 9} \oplus \mathscr{L}^{\otimes 11})$  \\
\hline 
D$_7$ & $\mathbb{P}_S(\mathscr{S}^{} \oplus \mathscr{L}^{})$& N/A  &N/A &  $\mathbb{P}_S(\mathscr{S}^{\otimes4} \oplus \mathscr{L}^{\otimes 5})$ \\
\hline 
\end{tabular}
}
\end{center}
\caption{Fibral divisors that are projective bundles. 
We write N/A when a divisor is not a projective bundle. 
If the base is of dimension three or higher, D$_6$ is not a projective bundle unless  $V(a,b)\cap S$ is empty. Otherwise it contains a full rational surface over the locus $V(a,b)\cap S$. We examine this possibility in Section~\ref{Sec:Flatness}. 
}
 \label{Table:Div}
\end{table}

We will now describe the steps to derive Table~\ref{Table:Div}.  Let $X$ be a  $\mathbb{P}^1$-bundle over a smooth variety $S$ of the type 
\begin{equation}\label{eq:ourX}
\pi: X=\mathbb{P}_S(\mathscr{O}_S\oplus \mathscr{D})\to S,
\end{equation}
 where $\mathscr{D}$ is a line bundle over $S$. Let $[u_0:u_1]$ be projective coordinates along the fiber of $X$ with $u_0$ a section of $\mathscr{O}_X(1)$   and $u_1$ a section of $\mathscr{O}_X(1)\otimes \pi^* \mathscr{D}$. Let $J$ denote the first Chern class of the line bundle $\mathscr{O}_X(1)$, and $D$ denote the first Chern class of $\mathscr{D}$. 
The divisors $V(u_0)$ and $V(u_1)$ define sections of $\pi$ corresponding to the classes  $J$ and  $J+\pi^* D$ in the Chow ring. 

The total Chern class of $X$ is 
\begin{equation}
c(TX)= (1+ J)(1+J+ \pi^* D)\pi^* c(TS). 
\end{equation}
In particular, we have 
\begin{equation}
c_1(TX)=2J+\pi^* D   + \pi^*c_1(TS).
\end{equation}
We can compute the pushforwards $\pi_*  J^k$ using the functorial properties of the Segre map.
The key formula is: 
\begin{equation}
\pi_* \frac{1}{1-J} \cap [X]=\frac{1}{1+D}\cap [S],
\end{equation}
or equivalently
\begin{equation}
\pi_*(  [X]+J \cap[X] + J^2 \cap [X]+\cdots) = [S]- D \cap [S] + D^2\cap [S] - D^3 \cap [S]+\cdots.
\end{equation}
By matching terms of the same dimensionality, we get: 
\begin{equation}
\pi_* 1=0, \quad \pi_* J = 1, \quad \pi_* J^2 = -D, \quad \pi_* J^{k+1}= (-1)^{k} D^k \quad (k>1).
\end{equation}
We now assume that $S$ is a smooth curve of genus $g$. Then,  $X$ is a geometrically ruled surface.   Before proceeding further, let us recall some facts about ruled surfaces.

\begin{defn} 
A  {\em smooth compact projective curve} is a curve isomorphic to the  projective line $\mathbb{P}^1$. 
A {\em ruled surface} is a morphism $\pi:X\to S$  such that  the generic fiber is a smooth compact rational curve. 
A smooth morphism $\pi: X\to S$ is called a {\em geometric ruled surface} if all its fibers are isomorphic to a smooth projective rational curve.   \end{defn}

Let $S$ be a curve of genus $g$. If we denote the class of  a fiber by $f$, then there is an irreducible curve of class $h_-$ and self-intersection $-n$ ($n\geq 0$) defining a section such that the  canonical class of $X$ satisfies 
\begin{equation}
-K_X= 2 h_-+(n+ 2-2g) f.
\end{equation}
There is also an irreducible curve of class $h_+=h_-+nf$ with self-intersection $n$. 
The curves of class $h_\pm$ both define sections of $X$ and they don't intersect
\begin{equation}
f^2 =1, \quad h_\pm^2=\pm n, \quad h_+ \cdot h_-=0, \quad h_\pm \cdot f =1. 
\end{equation} 
The integer $n$ is called the invariant of the ruled surface. 

We now apply this to the fibral divisor $X$ defined in equation~(\ref{eq:ourX}).
Since a projective bundle is  a flat fibration, all fibers have the same class $f$. 
At the level of the Chow group, the generators of A($X$) are $J$ and $f$. 
The degree of $D$ in $S$ is $n$ and we have: 
\begin{equation}
\int_X J \cdot \pi^* D = \int_S \pi_*  J\cdot D =\int_S D =\int D \cdot S= n, \quad  
[\pi^*D]=n [f],
\end{equation}
so that
\begin{equation}
\int_X f\cdot J=\int_{\mathbb{P}^1}c_1( \mathscr{O}_{\mathbb{P}^1}(1)) =1, \quad 
\int_X f^2 =\int_{\mathbb{P}^1} c_1( \mathscr{O}_{\mathbb{P}^1})=0. 
\end{equation}
\begin{equation}
\int_X J^2=-n, \quad
 \int_X J \cdot (J+\pi^* D)=0,  \quad \int_X (J+\pi^* D)^2=n. 
\end{equation}

We are now ready to derive Table~\ref{Table:Div}. For a fibral divisor which is a $\mathbb{P}^1$-projective bundle, we can determine its type by identifying two non-trivial classes r$_1$ and r$_2$ which correspond to two irreducible curves, as well as a line bundle $\mathscr{D}$   over $S$ with first Chern class $D$  such that 
$r_1^2+r_2^2=r_1 r_2 =0$ and  $\int_X r_1^2 = D S$.   In that situation, we deduce that $X\cong \mathbb{P}_S (\mathscr{O}_S \oplus \mathscr{D})$:
\begin{equation}
\left\{
{
\begin{matrix}
r_1^2+r_2^2=0\\
r_1 r_2 =0 \\ r_1^2 = D S
\end{matrix}}
\right\}
\Longrightarrow X\cong \mathbb{P}_S (\mathscr{O}_S \oplus \mathscr{D}).
\end{equation}
The results are listed in Table \ref{Table:Ind}, which requires the pushforward formulas of Section~\ref{tripint} and divisor class computations in Appendix~\ref{intpoly}. 
We deduce 
Table \ref{Table:Div} directly from Table  \ref{Table:Ind}. \\

\par The following theorem explains how different projective bundles are related to each other. 

\begin{thm}
Let $\rho:Y\to B$ be an elliptic fibration defined by a  crepant resolution of a Weierstrass model over a base $B$.  
Let $D_a$ and $D_b$ be two divisors  of $Y$ such that $\rho_* (D_a D_b)=S$. 
Then 
\begin{equation}
\rho_* (D^2_a D_{b}+D_a D^2_{b}) =(S-L)S.
\end{equation}
\end{thm}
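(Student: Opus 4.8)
The plan is to reduce the three triple-products to a computation on the surface $D_b$ and then invoke adjunction twice, using the crepancy of the resolution. Write $\iota\colon D_b\hookrightarrow Y$ for the inclusion and $\pi=\rho|_{D_b}\colon D_b\to S$ for the induced ruling. The hypothesis $\rho_*(D_aD_b)=S$ says that the intersection curve $C:=D_a\cap D_b$ is a section of $\pi$ mapping isomorphically onto $S$, so that $D_a\cdot D_b=\iota_*[C]$ and $\pi_*[C]=[S]$. Restricting $D_a$ and $D_b$ to $D_b$ and using the projection formula, I would first rewrite
\[
D_a^2D_b+D_aD_b^2=\iota_*\big(C\cdot(C+D_b|_{D_b})\big),
\]
where $D_b|_{D_b}=c_1(N_{D_b/Y})$ is the self-intersection class on $D_b$.

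Next I would trade the normal bundle for canonical classes. Adjunction on $Y$ gives $K_{D_b}=(K_Y+D_b)|_{D_b}$, hence $D_b|_{D_b}=K_{D_b}-K_Y|_{D_b}$, so that
\[
\rho_*(D_a^2D_b+D_aD_b^2)=\pi_*\big(C\cdot(C+K_{D_b})\big)-\pi_*\big(C\cdot K_Y|_{D_b}\big).
\]
For the first term, adjunction on the surface $D_b$ for the curve $C$ yields $(C+K_{D_b})|_C=K_C$, and since $\pi$ restricts to an isomorphism $C\xrightarrow{\sim}S$ I get $\pi_*\big(C\cdot(C+K_{D_b})\big)=(\pi|_C)_*K_C=K_S$. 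For the second term, crepancy of the resolution together with the Weierstrass structure gives $K_Y=\rho^*(K_B+L)$, whence $K_Y|_{D_b}=\pi^*\big((K_B+L)|_S\big)$; the projection formula with $\pi_*[C]=[S]$ then produces $\pi_*\big(C\cdot K_Y|_{D_b}\big)=(K_B+L)\cdot S$.

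Finally I would combine the two pieces and apply adjunction on $B$, namely $K_S=(K_B+S)|_S$, to obtain
\[
\rho_*(D_a^2D_b+D_aD_b^2)=K_S-(K_B+L)\cdot S=(K_B+S)\cdot S-(K_B+L)\cdot S=(S-L)\cdot S,
\]
as claimed. The one point that genuinely requires care—and which I regard as the main obstacle—is the geometric interpretation of the hypothesis $\rho_*(D_aD_b)=S$: one must check that $D_a\cap D_b$ is the single reduced section $C$ with $C\cong S$, with no stray $\rho$-contracted components that could contribute to $\rho_*\big((D_a+D_b)\cdot D_aD_b\big)$; for adjacent fibral divisors of the $\mathrm{III}^*$ fiber this holds because their intersection is precisely the $\mathbb{P}^1$-section over $S$. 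The identity $K_Y=\rho^*(K_B+L)$ should also be recorded explicitly, but it is immediate from crepancy and the relative canonical bundle $\omega_{Y/B}=\rho^*\mathscr{L}$ of a Weierstrass fibration, and is consistent with the Calabi--Yau normalization $L=-K$ used above.
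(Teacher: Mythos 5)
Your proof is correct and is essentially the same argument as the paper's: both rest on crepancy in the form $K_Y=\rho^*(K_B+L)$, the identification of $D_a\cap D_b$ with $S$, and adjunction to convert $(D_a+D_b)\cdot D_aD_b$ into $-K_Y\cdot D_aD_b-K_{D_a\cap D_b}$, finishing with $K_S=(K_B+S)|_S$. The only cosmetic difference is that the paper applies adjunction once to the complete-intersection curve $\eta_{ab}=D_a\cap D_b$ in $Y$ (via $K_{\eta_{ab}}=(K_Y+D_a+D_b)|_{\eta_{ab}}$ and $\chi(\eta_{ab})=\chi(S)$), whereas you factor it through the surface $D_b$ in two adjunction steps — the computations are identical.
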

\begin{proof}
If we denote by $\eta_{ab}$ the intersection of two adjacent divisors $D_a$ and $D_{b}$, then 
\begin{equation}
K_{\eta_{ab}}= K_Y+D_a+D_b,
\end{equation}
\begin{equation}
\chi(\eta_{ab})=-K_{\eta_{ab} }D_a D_b= -(K_Y+D_a +D_{b})D_ a D_{b}.
\end{equation}
But since $\eta_{ab}$ is isomorphic to $S$, we also have 
\begin{equation}
\chi(\eta_{ab})=-(K_B+S)S. 
\end{equation}
Since $K_Y=K_B+L$ and $\eta_{ab}=D_a D_b$ pushes forward to $S$  for any ADE model, we get 
\begin{align}
\rho_* (D^2_a D_{b}+D_a D^2_{b})& =  -(K_B+L) S -\chi(S)
 =-(K_B+L)S+(K_B+S) S=(S-L)S.
\end{align}
\end{proof}

\begin{table}[htb]
\begin{center}
{
\begin{tabular}{|c|c|c|c|c|c|c|}
\hline 
 & $R_1$ & $R_2$ & Y$_4$& Y$_5$&  Y$_6$&  Y$_8$\\
\hline 
D$_0$ & D$_1$ & $\frac{1}{3}H$  &  $LS$ &$LS$ &$LS$ & $LS$ \\
\hline 
D$_1$ & D$_0$ &  D$_2$ & $-(2L-S)S$ & $-(2L-S)S$ & $-(2L-S)S$ & $-(2L-S)S$ \\
\hline 
D$_2$  & D$_1$ & D$_3$ & $-(3L-2S)S$ & $-(3L-2S)S$ & $-(3L-2S)S$ & $-(3L-2S)S$ \\
\hline 
D$_3$ & D$_2$ & D$_4$ & N/A & $-(4L-3S)S$ & $-(4L-3S)S$ & $-(4L-3S)S$ \\
\hline 
D$_4$ &  D$_3$ & D$_5$&  $-(L-S)S$ & N/A & $-(5L-4S)S$&  N/A \\
\hline 
D$_5$ &  D$_4$ & $x$ & $-(2L-2S)S$ & $-(2L-2S)S$ &   N/A & $-(2L-2S) S$ \\
\hline 
D$_6$ & D$_5$ & $y$  & N/A & N/A &  $-(7L-6S)S$  &   $-(11L-9S) S$  \\
\hline 
D$_7$ & D$_3$ & $y$ & $-(L-S)S$& N/A & N/A &$-(5L-4S)S$ \\
\hline 
\end{tabular}
}
\end{center}
\caption{
For each fibral divisor D$_a$, we present a divisor $R_1$ and a divisor $R_2$ such that the divisor $R_1$ defines, by intersection, a section of  $D_a\to S$. 
The divisor $R_2$ is such that $D_a R_1^2=-D_a R_2^2$ and $R_1 R_2 D_a=0$.  An N/A indicates when these conditions do not hold.
When D$_a$ is a $\mathbb{P}^1$-bundle and has  $D_a R_1^2=\pm (p L -q S) S$, we deduce that D$_a$ is isomorphic to  
$\mathbb{P}_S(\mathscr{S}^{\otimes  q}\oplus\mathscr{L}^{\otimes p})$ as listed in Table \ref{Table:Div}. 
}
 \label{Table:Ind}
\end{table}

\section{Characteristic numbers of fibral divisors} 
In this section, we give the linear functions induced on H$^2(Y, \mathbb{Z})$ for the second Chern class of the minimal models Y=\{Y$_4$, Y$_5$, Y$_6$, Y$_8$\}, as well as  characteristic numbers of their fibral divisiors.  In particular, we consider the signature $\tau(D$)  and also the ordinary $\chi(D)$ and holomorphic $\chi_0(D)$ Euler characteristics. This data provides information about the structure of the fibral divisors.
 We assume that the minimal models are threefolds, thus, these characteristic numbers are all functions of the Chern numbers $c_1^2(D_a)$ and $c_2(D_a)$.  Characteristic numbers for elliptic fibrations are computed in \cite{Esole:2018bmf,EK.Invariant}.

Given a threefold $Y$, the second Chern class defines a linear form on  $H^2(Y,\mathbb{Z})$
$$\mu:\quad H^2(Y,\mathbb{Z})\to \mathbb{Z}\quad \quad D\mapsto\int_Y D\cdot c_2 (TY).$$
Knowing the properties of this linear form is important for several reasons. 
For one, Wilson showed that the  linear form $\mu$ plays a central role in the classification of Calabi--Yau varieties \cite{Wilson1}.
The second Chern class also appears  in the Hirzebruch--Riemann--Roch theorem and is used in the computation of  the microscopic entropy attached to a very ample divisor $D$ in a Calabi--Yau threefold \cite{Maldacena:1997de}. 
~\\

 \begin{thm}
For each of the minimal models Y$_4$, Y$_5$, Y$_6$, and Y$_8$ of an E$_7$-model, the second Chern class induces the following linear action on the divisors \label{thm:Chern2} 
$$
\begin{aligned}
& c_2(TY)\cdot \varphi^*H = 3 \Big(c_2(TB) - c_1(TB) L\Big),\\
& c_2(TY)\cdot \varphi^*\pi^* \alpha = 12 L \cdot \alpha,\\
& c_2(TY)\cdot D_a =2(L-S) S,
\end{aligned}
$$
where 
$H=c_1\Big(\mathscr{O}_{X_0}(1)\Big)$, 
$\varphi$ is the crepant resolution, $\pi$ is the projection of the projective bundle X$_0$ over the base $B$, and $\alpha$ is a class of the Chow ring of the base.  We note the following exceptions for each minimal model: 
$$
\begin{aligned}
\text{Y}_4  :\quad  & c_2(TY)\cdot D_3=2 (3 L - 2 S) S, \quad c_2(TY)\cdot D_6= 2 (7 L - 5 S) S, \\
\text{Y}_5  :\quad &c_2(TY)\cdot D_4=2 (3 L - 2 S) S, \quad c_2(TY)\cdot D_5=2 (3 L - 2 S) S, \quad c_2(TY)\cdot D_7=2 (3 L - 2 S) S, \\
\text{Y}_6  :\quad  & c_2(TY)\cdot D_5=2 (3 L - 2 S) S, \quad c_2(TY)\cdot D_7= 2 (7 L - 5 S) S ,\\
\text{Y}_8  :\quad  &  c_2 (TY)\cdot D_4= 2 (7 L - 5 S) S.
\end{aligned}
$$
\end{thm}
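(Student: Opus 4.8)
The plan is to obtain $c_2(TY_a)$ explicitly for each resolution and then reduce every intersection number to the base by the pushforward machinery of Section~\ref{tripint}. I would start from the total Chern class of the ambient projective bundle,
\[
c(TX_0)=(1+H)(1+H+2L)(1+H+3L)\,\pi^{*}c(TB),
\]
with $H=c_1(\mathscr{O}_{X_0}(1))$ and $L=c_1(\mathscr{L})$, and propagate it along the blowup tree~(\ref{eq:finalbl}) using Aluffi's formula (Theorem~\ref{Thm:AluffiCBU}). Every center in~(\ref{eq:finalbl}) is a transverse complete intersection of two or three smooth hypersurfaces, so the correction factor $(1+E)\prod_j (1+f^{*}Z_j-E)/(1+f^{*}Z_j)$ applies at each step and yields $c(TX_N)$ for the ambient space of each of $Y_4,Y_5,Y_6,Y_8$.

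Each minimal model is the proper transform of the Weierstrass hypersurface, a smooth divisor $\iota:Y_a\hookrightarrow X_N$ of a class $[Y_a]$ that I would track through the blowups. Adjunction then gives $c(TY_a)=\iota^{*}\big(c(TX_N)/(1+[Y_a])\big)$, whose degree-two part is the desired $c_2(TY_a)$; since the resolutions are crepant we also have $c_1(TY_a)=\varphi^{*}c_1$, which keeps intermediate expressions manageable. To evaluate $c_2(TY_a)\cdot D$ I would multiply by $D$, clear the exceptional divisors with Theorems~\ref{Thm:Lem34} and~\ref{Thm:Push}, and finally descend through $\pi$ with Theorem~\ref{Thm:PushH}. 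For $D=\varphi^{*}H$ and $D=\varphi^{*}\pi^{*}\alpha$ the class is pulled back from $X_0$ and is therefore untouched by the fibral blowups; the pushforward collapses to the base expressions $3\big(c_2(TB)-c_1(TB)L\big)$ and $12\,L\cdot\alpha$, which is why these two values are independent of the chamber.

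For a fibral divisor the computation is the same pushforward, and the conceptual content is visible through surface adjunction: restricting $c(TY_a)=c(TD_a)\,c(N_{D_a/Y_a})$ to $D_a$ gives $c_2(TY_a)\cdot D_a=\chi_{\mathrm{top}}(D_a)-K_{D_a}\cdot(D_a|_{D_a})$. For the nodes unaffected by the codimension-two $\mathbf{56}$ matter, the generic fiber $C_a$ stays an isolated rational curve, the normal-bundle term is fixed by the Cartan pairing of neighbouring divisors (compare the relation $\rho_{*}(D_a^{2}D_b+D_aD_b^{2})=(S-L)S$ proved in Section~\ref{isomcl}), and the answer comes out to the universal $2(L-S)S$ regardless of the precise $\mathbb{P}^{1}$-bundle type recorded in Table~\ref{Table:Div}. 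The listed exceptions are exactly the divisors meeting the matter and fat-fiber loci, where this localization fails.

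The hard part will be these exceptional divisors. Where $D_a$ is not a clean Cartier divisor I would first perform the auxiliary blowup of footnote~\ref{addblfootnt} so that it becomes an exceptional divisor to which Theorem~\ref{Thm:Lem34} applies directly. Where the rational fiber degenerates---for instance $D_3,D_6$ in $Y_4$, or the divisors carrying the fat fibers of Section~\ref{Sec:Flatness}---$D_a$ is no longer a $\mathbb{P}^{1}$-bundle, the uniform adjunction argument breaks, and one must push $c_2(TY_a)\cdot D_a$ through the specific resolution while carefully accounting for $[Y_a]$ and the excess-intersection corrections. This is where the chamber-dependent values such as $2(3L-2S)S$ and $2(7L-5S)S$ emerge, and cross-checking them against the characteristic-number computations of \cite{Esole:2018bmf,EK.Invariant} provides the final consistency test.
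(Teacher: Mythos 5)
Your proposal follows essentially the same route as the paper: the authors also compute $c(TY)$ by propagating Aluffi's blowup formula (Theorem~\ref{Thm:AluffiCBU}) through the resolution tree, apply adjunction for the hypersurface, and then reduce all intersection numbers to the base via Theorems~\ref{Thm:Lem34}, \ref{Thm:Push}, and \ref{Thm:PushH}. Your additional remarks on surface adjunction and the auxiliary blowups of footnote~\ref{addblfootnt} are consistent with how the paper handles the non-Cartier and degenerate fibral divisors, so the argument is sound.
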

\begin{proof}
The Chern class of the variety $Y$ is computed using Theorem \ref{Thm:AluffiCBU} and the rest follows from the pushforward results in Theorem 
\ref{Thm:Push}  (or Theorem \ref{Thm:Lem34}) and Theorem \ref{Thm:PushH}. 
\end{proof}

The characteristic numbers that we are interested in are 
\begin{equation}
\tau(D)=\frac{1}{3}\int_D (-2c_2 +c_1^2)\, \quad \chi(D)=\int_D c_2, \quad \chi_0(D)=\int_D\frac{c_1^2+c_2}{12},
\end{equation}
where $\tau(D$)  is the signature, $\chi(D)$ is its Euler number, and $\chi_0(D)$ is its holomorphic Euler characteristic. 

\begin{lem}
Let  $D$ be  a ruled surface over a smooth curve of genus $g$. Then 
\begin{equation}
\chi(D)=4(1-g), \quad \chi_0(D)=(1-g), \quad \tau(D)=0.
\end{equation}
\end{lem}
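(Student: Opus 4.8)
The plan is to compute the three characteristic numbers $\chi(D)$, $\chi_0(D)$, and $\tau(D)$ of a geometrically ruled surface $\pi : D \to S$ over a smooth curve of genus $g$, using the structure theory of ruled surfaces recalled just before the statement. All three invariants are determined by the two Chern numbers $\int_D c_1^2(D)$ and $\int_D c_2(D)$, so it suffices to compute those two integers. The key point is that for a ruled surface these are fixed purely topologically, independently of the invariant $n$ of the surface, because $D$ is a $\mathbb{P}^1$-bundle over $S$ and hence a fiber bundle with fiber $\mathbb{P}^1$.

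**First I would** realize $D \cong \mathbb{P}_S(\mathscr{O}_S \oplus \mathscr{D})$ and write its total Chern class as
\begin{equation}
c(TD) = (1+J)(1+J+\pi^*D_{\mathscr{D}})\,\pi^* c(TS),
\end{equation}
exactly as in equation~(\ref{eq:ourX}) and the display following it, where $J = c_1(\mathscr{O}_D(1))$ and $D_{\mathscr{D}}$ is the first Chern class of $\mathscr{D}$, with $\deg_S D_{\mathscr{D}} = n$. Expanding gives $c_1(TD) = 2J + \pi^*D_{\mathscr{D}} + \pi^* c_1(TS)$ and $c_2(TD) = (J)(J+\pi^*D_{\mathscr{D}}) + (2J + \pi^*D_{\mathscr{D}})\,\pi^* c_1(TS)$. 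Then I would push forward to $S$ using the formulas $\pi_* 1 = 0$, $\pi_* J = 1$, $\pi_* J^2 = -D_{\mathscr{D}}$ established in the excerpt, together with $\pi^* c_1(TS)^2 = 0$ (dimension reasons on the curve $S$) and $\int_S c_1(TS) = 2-2g$.

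**The computation** then proceeds directly. For $\chi(D) = \int_D c_2(TD)$, the term $J(J+\pi^*D_{\mathscr{D}})$ pushes forward to $\pi_*(J^2) + \pi_*(J\,\pi^*D_{\mathscr{D}}) = -D_{\mathscr{D}} + D_{\mathscr{D}} = 0$ in $A_*(S)$ at the relevant degree, while $(2J+\pi^*D_{\mathscr{D}})\pi^* c_1(TS)$ pushes forward to $2\,c_1(TS)$ (since $\pi_* J = 1$ and $\pi_*\pi^*(\cdot) = 0$), giving $\chi(D) = 2\int_S c_1(TS) = 2(2-2g) = 4(1-g)$. For $\int_D c_1^2(TD)$, expanding $(2J+\pi^*D_{\mathscr{D}}+\pi^*c_1(TS))^2$ and discarding $(\pi^*c_1(TS))^2 = 0$ and $(\pi^*D_{\mathscr{D}})^2 = 0$, one collects $4J^2 + 4J\pi^*D_{\mathscr{D}} + 4J\pi^*c_1(TS) + 2\pi^*D_{\mathscr{D}}\pi^*c_1(TS)$; pushing forward gives $4(-n) + 4n + 4(2-2g) + 0 = 8(1-g)$, so $\int_D c_1^2 = 8(1-g)$. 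Substituting into the definitions yields
\begin{equation}
\chi_0(D) = \frac{1}{12}\big(8(1-g)+4(1-g)\big) = 1-g, \qquad \tau(D) = \frac{1}{3}\big(-2\cdot 4(1-g) + 8(1-g)\big) = 0,
\end{equation}
which are the claimed values.

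**The only subtle point**—hardly an obstacle—is to confirm that the $n$-dependence genuinely cancels, so that the answer is intrinsic to the ruled surface and not to the particular presentation $\mathscr{O}_S \oplus \mathscr{D}$; this happens automatically above because every occurrence of $n$ appears in the combination $4J\pi^*D_{\mathscr{D}} + 4J^2$ (for $c_1^2$) or $J(J+\pi^*D_{\mathscr{D}})$ (for $c_2$), both of which vanish after pushforward. As a sanity check one could instead argue topologically: $\chi(D) = \chi(\mathbb{P}^1)\chi(S) = 2(2-2g) = 4(1-g)$ by multiplicativity of the Euler characteristic in a fibration, $\chi_0(D) = \chi_0(\mathbb{P}^1)\chi_0(S) = 1\cdot(1-g)$ by the analogous multiplicativity of the holomorphic Euler characteristic for a $\mathbb{P}^1$-bundle, and $\tau(D) = 0$ since the signature of any $4$-manifold that is a $\mathbb{P}^1$-bundle over a curve vanishes (its intersection form on $H^2$ has signature zero). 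This independent route both confirms the intersection-theoretic computation and explains conceptually why the invariant $n$ drops out.
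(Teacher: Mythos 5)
Your computation is correct, and it is exactly the argument the paper sets up: the lemma is stated without proof, but the total Chern class $c(TD)=(1+J)(1+J+\pi^*D)\pi^*c(TS)$ and the pushforwards $\pi_*1=0$, $\pi_*J=1$, $\pi_*J^2=-D$ from Section~\ref{isomcl} yield $\int_D c_2=4(1-g)$ and $\int_D c_1^2=8(1-g)$ precisely as you derive, after which Noether's formula and the signature formula give $\chi_0=1-g$ and $\tau=0$. Your closing topological cross-check also quietly handles the one point your main computation elides, namely that a geometrically ruled surface need not be presented as the decomposable bundle $\mathbb{P}_S(\mathscr{O}_S\oplus\mathscr{D})$, so the argument covers the general case.
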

\noindent While the holomorphic Euler characteristic of a fibral divisor is the same for any crepant resolution,
 both the signature and the ordinary Euler characteristic depend on the choice of minimal model. 
We can use the Euler characteristic and the signature to identify fibral divisors that have  reducible singular fibers. For instance, the signature is zero when the fibral divisor is a ruled variety.

\begin{thm}  The characteristic numbers of the fibral divisors of the crepant resolution of an E$_7$-model are as follows for the relative minimal model Y$_4$, Y$_5$, Y$_6$, and Y$_8$: 
\label{thm:char.fibral}
\begin{align}
& Y_4
\begin{cases}
\begin{array}{lll}
\chi_0(D_a)=(1-g), &  & a =0, 1, 2, 3, 4, 5, 6, 7\\
 \tau(D_a)=0,  & \
    \chi(D_a)=  4(1-g), &  a=0,1,2,4,5,7 \\
 \tau(D_3)=  (4 L - 3 S) S,  &  \ 
 \chi(D_3)= 4(1-g) +  (4 L - 3 S) S,  \\  
 \tau(D_6)=  2(4 L - 3 S) S,    & \   \   
 \chi(D_6)= 4(1-g) + 2 (4 L -3 S) S,
 \end{array}
\end{cases}
\\
& Y_5
\begin{cases}
\begin{array}{lll}
\chi_0(D_a)=(1-g), & & \quad a =0, 1, 2, 3, 4, 5, 6, 7 \\
 \tau(D_a)=0,  & 
 \chi(D_a)=4(1-g), &\quad  a=0,1,2,3,5\\
   \tau(D_a)=  (4 L - 3 S) S, &  
 \chi(D_a)= 4(1-g) + (4 L - 3 S) S,  & \quad a=4,6,7
 \end{array}
\end{cases}
\\
&
Y_6
\begin{cases}
\begin{array}{lll}
\chi_0(D_a)=(1-g), & &  a =0, 1, 2, 3, 4, 5, 6, 7 \\
 \tau(D_a)=0,  & 
    \chi(D_a)=  4(1-g), &  a=0,1,2, 3, 4, 6 \\
 \tau(D_5)=  (4 L - 3 S) S,  &   
 \chi(D_5)= 4(1-g) + (4 L - 3 S) S,  \\  
 \tau(D_7)=  2(4 L - 3 S) S,    & 
 \chi(D_7)= 4(1-g) +2 (4 L - 3 S) S, 
 \end{array}
\end{cases}
\\
&Y_8
\begin{cases}
\begin{array}{lll}
\chi_0(D_a)=(1-g), &  & a =0, 1, 2, 3, 4, 5, 6, 7 \\
 \tau(D_a)=0,  & 
    \chi(D_a)=  4(1-g), &  a=0,1,2, 3, 5, 6,  7 \\
 \tau(D_4)= 2 (4 L - 3 S) S,  &  
 \chi(D_4)= 4(1-g) +  2 (4 L - 3 S) S.  &
 \end{array}
\end{cases}
\end{align}
\end{thm}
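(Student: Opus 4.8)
The plan is to reduce everything to two Chern numbers per divisor and then feed in the intersection data already assembled. For any smooth fibral surface $D_a\subset Y$ the three characteristic numbers are fixed by $\int_{D_a}c_1^2$ and $\chi(D_a)=\int_{D_a}c_2$ through the definitions together with $\tau=\tfrac13(c_1^2-2c_2)$ and $12\chi_0=c_1^2+c_2$; equivalently $\tau=4\chi_0-\chi$ and $c_1^2=12\chi_0-\chi$. So it suffices to determine $\chi_0(D_a)$ and $\chi(D_a)$ in each of the four models. I would split the fibral divisors into the \emph{generic} ones, which are $\mathbb{P}^1$-bundles over $S$ (the non-N/A entries of Table \ref{Table:Div}), and the \emph{special} ones (the N/A entries), whose generic fiber $C_a$ degenerates over $V(a)\cap S$. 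A clean structural observation is that the special divisors are exactly the N/A entries of Table \ref{Table:Div}, and these coincide with the divisors whose curve $C_a$ splits in the relevant chamber according to Table \ref{Table:E7.Split}.

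For the generic divisors the ruled-surface Lemma above applies verbatim, giving $\chi_0=1-g$, $\chi=4(1-g)$, $\tau=0$, and hence all the $\tau(D_a)=0$ rows at once. For the special divisors I would argue in two steps. First, $\chi_0(D_a)=1-g$ for every $a$: each such $D_a$ is a smooth ruled surface over $S$ with only finitely many reducible fibers, hence a blowup of a $\mathbb{P}^1$-bundle over $S$ along points over $V(a)\cap S$, and $\chi_0$ is a birational invariant of smooth projective surfaces. Second, $\chi(D_a)$ is computed by adjunction in $Y$: from $c_1(TD_a)=(c_1(TY)-D_a)|_{D_a}$ and $c_2(TD_a)=(c_2(TY)-c_1(TY)D_a+D_a^2)|_{D_a}$ one gets
\[
\chi(D_a)=\int_Y D_a\,c_2(TY)-\int_Y D_a^2\,c_1(TY)+\int_Y D_a^3 ,
\]
and each term is evaluated with the pushforward machinery of Theorems \ref{Thm:AluffiCBU}, \ref{Thm:Push}, \ref{Thm:Lem34} and \ref{Thm:PushH} applied to the blowup tree \eqref{eq:finalbl}. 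The first summand is supplied by Theorem \ref{thm:Chern2}, the triple self-intersection $\int_Y D_a^3$ is the coefficient of $\phi_a^3$ in the cubics of Theorem \ref{thm:TripleCY3} (and the analogous appendix computation in the general threefold case), and the mixed term uses that $c_1(TY)$ is the pullback of $c_1(TB)-L$ from the base, so it vanishes in the Calabi--Yau case $L=-K_B$. Feeding these in yields $\chi(D_a)=4(1-g)+m_a\,(4L-3S)S$, from which $c_1^2$ and $\tau$ follow by the identities above.

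The geometric meaning of the integer $m_a$ is what makes the answer transparent, and pinning it down is the main obstacle. Topological additivity of the Euler characteristic over the ruling $D_a\to S$ gives $\chi(D_a)=4(1-g)+\sum_{p\in V(a)\cap S}\big(\#\{\text{components of }C_a \text{ over } p\}-1\big)$, and $\#\{p\}=V(a)\cdot S=(4L-3S)S$ since $a$ is a section of $\mathscr{L}^{\otimes4}\otimes\mathscr{S}^{-\otimes3}$. Reading the degenerations off Table \ref{Table:E7.Split} and Figure \ref{Figure:DivShape} then fixes $m_a$: in $Ch_4$ the splits $C_3\to C_3'+C_{36}$ and $C_6\to 2C_{36}+C_4+C_7$ give $m_3=1$ and $m_6=2$; in $Ch_5$ the three two-component splits of $C_4,C_6,C_7$ give $m_4=m_6=m_7=1$; in $Ch_6$ one gets $m_5=1$ and $m_7=2$; and in $Ch_8$ the split $C_4\to 2C_4'+C_6+C_7$ gives $m_4=2$. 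The delicate point is that these Euler-characteristic counts must agree with the independent intersection-theoretic evaluation of $\int_Y D_a^3$ and of $c_2(TY)\cdot D_a$, in such a way that Noether's formula returns exactly $\chi_0=1-g$; this compatibility is simultaneously the consistency check and the real content of the computation, and it is where the explicit resolutions of \cite{E7}, repackaged in \eqref{eq:finalbl}, are indispensable.
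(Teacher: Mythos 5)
Your core computation coincides with the paper's: the proof in the text also obtains $c_1(TD_a)=c_1(TB)-L-D_a$ and $c_2(TD_a)=c_2(TY)-(c_1(TB)-L)D_a+D_a^2$ by adjunction from the Chern class of $Y$ (Theorem \ref{Thm:AluffiCBU}), evaluates $c_2(TY)\cdot D_a$ via Theorem \ref{thm:Chern2}, pushes everything to the base with Theorems \ref{Thm:Push}, \ref{Thm:Lem34} and \ref{Thm:PushH}, and dispatches the $\mathbb{P}^1$-bundle cases with the same ruled-surface lemma. What you add --- $\chi_0=1-g$ by birational invariance, the fiberwise count $\chi(D_a)=4(1-g)+\sum_{p}\bigl(\#\{\text{components over }p\}-1\bigr)$ over $p\in V(a)\cap S$, and the reduction $\tau=4\chi_0-\chi$ --- is a genuinely independent route for the special divisors that needs no intersection theory at all, and your identification of the multiplicities $m_a$ from Table \ref{Table:E7.Split} and Figure \ref{Figure:DivShape} is correct in all four chambers. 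One inaccuracy: the mixed term $\int_Y c_1(TY)\,D_a^2$ does \emph{not} drop out of the intersection-theoretic route unless $L=-K_B$, and the theorem is stated for general threefolds; you must either keep that term (it is computable, since $c_1(TY)$ is a pullback and the projection formula reduces it to $\rho_*(D_a^2)\cdot(c_1(TB)-L)$ on the base) or lean entirely on the topological count, which does cover the general case.

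The step you should not have glossed over is the sign of $\tau$. Your own identities force $\tau(D_a)=4\chi_0(D_a)-\chi(D_a)=-\,m_a\,(4L-3S)S$, the \emph{negative} of what the theorem displays. The negative value is the geometrically correct one: each special $D_a$ is a ruled surface blown up at $m_a\,(4L-3S)\cdot S$ points, each blowup lowers the signature by one, and only the negative sign is compatible with Noether's formula given $\chi_0=1-g$ and $\chi=4(1-g)+m_a(4L-3S)S$ (compare the paper's own lemma in Section \ref{Sec:Flatness}, which assigns $\tau(Q_4)=-2$ and $\tau(Q_5)=-1$ to the analogous blown-up Hirzebruch surfaces). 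So your method in fact exposes a sign slip in the printed $\tau(D_a)$; as written, you assert that ``$c_1^2$ and $\tau$ follow by the identities above'' without noticing that what follows is not what is printed. State the discrepancy explicitly rather than leaving it implicit.
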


\begin{proof}

To ease the notation, we will not write all of the pushforwards and pullbacks.
The total Chern class of a fibral divisor $D$ is computed by adjoint from the total Chern class of the variety $Y$ which, in turn, can be deduced from Theorem \ref{Thm:AluffiCBU}. Namely, 

\begin{equation}
c(TD)=\frac{c(TY)}{1+D}=1+(c_1(TY)-D)+ (c_2(TY)-c_1(TY)D +D^2) + \cdots
\end{equation}
By adjoint for $Y$, we have 
\begin{equation}
c_1(TY)=c_1(TB)-c_1(\mathscr{L}),
\end{equation}
so we deduce that 
\begin{equation}
c_1(TD)=c_1(TB)-L-D, \quad 
c_2(TD)= c_2(TY)-(c_1(TB)-L)D +D^2.
\end{equation}
The second Chern class of $Y$ will appear multiplied by the class of a fibral divisor and we can therefore use Theorem 
\ref{thm:Chern2} to express all the results as functions of the base, once we pushforward to the base using Theorem 
\ref{Thm:Push}  (or Theorem \ref{Thm:Lem34}) and Theorem \ref{Thm:PushH}. 
\end{proof}

\section{Fat fibers and loss of flatness}\label{Sec:Flatness}

For a base of high enough dimension, there can be points over which the fiber is not a collection of rational curves, but rather contains an entire rational surface as a component. 
This phenomena has been studied in the case of an E$_6$-model by analyzing a partial resolution of its Weierstrass model in~\cite{Candelas}. 
In M-theory compactifications, the presence of a complex surface $Q$ in the fiber results in new light degrees of freedom in the low energy spectrum, as M5-branes wrapping the surface can produce massless stringy modes and a tower of particle states arise by wrapping membranes on holomorphic curves in $Q$.

When the base of an E$_7$-model has dimension three or higher, the 
fibration is no longer flat. 
When the locus $s=a=b=0$ is non-empty,  the divisor D$_6$ has  a fiber over the divisor $S$ that jumps in dimension to become a rational surface over this locus.   
In the minimal model Y$_a$, we call this rational surface Q$_a$. For the minimal models that are constructed explicitly by a  crepant resolution, namely  Y$_a$, ($a=4,5, 6, 8$), we can   identify Q$_a$ up to isomorphism, explicitly. 

\begin{thm}\label{thm:Q}
Let Y$_a$ ($a=1,\ldots, 8$) be a crepant resolution of an E$_7$-model with fibral divisors D$_n$ with generic fibers C$_n$ ($n=0,\ldots, 7$). 
Let Q$_a$ be the surface that the fiber C$_6$ degenerates into over the locus $V(a,b)\cap S$. 
Then for Y$_4$, Y$_5$, Y$_6$, and Y$_8$, which can be defined by the crepant resolutions given in equation \eqref{eq:finalbl}, we have: 
\begin{equation}
Q_4\cong\mathbb{F}_2^{(2)}, \quad 
Q_5\cong\mathbb{F}_1^{(1)}, \quad
Q_6\cong\mathbb{F}_1,\quad
Q_8\cong\mathbb{F}_2.
\end{equation}
\end{thm}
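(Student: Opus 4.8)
The plan is to compute the degenerate fiber $Q_a$ explicitly in each of the four models by localizing the resolution \eqref{eq:finalbl} over a point $p$ of the codimension-three locus $V(s,a,b)$, where, as established in Section \ref{dirpf}, the partial resolution $\widetilde{Y}$ carries exactly the orbifold singularity $\mathbb{C}^3/(\mathbb{Z}_2\times\mathbb{Z}_2)$. The surface $Q_a$ arises as the fiber of $D_6$ over $p$ and sits inside the exceptional locus of the corresponding crepant resolution of that orbifold point, so that identifying $Q_a$ amounts to reading off one compact component of the resolved orbifold fiber.

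First I would choose local coordinates on $B$ so that $s,a,b$ extend to a regular system of parameters at $p$, and restrict the chain of blowups $e_1,\dots,e_7$ to the fiber over $p$ by setting $s=a=b=0$ (and fixing the transverse base directions). The fibral divisor $D_6$ is the proper transform of a specific exceptional divisor in each $Y_a$, and its fiber $Q_a$ over $p$ is cut out, inside the product of projective bundles, by the restricted proper transform of the Weierstrass equation together with the exceptional relations. Working chart by chart, I would extract the defining equations of $Q_a$; in the relevant charts the blowups act monomially, so the ruling structure and the distinguished curves of $Q_a$ are directly visible.

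Second, I would identify each $Q_a$ as a rational ruled surface. A $\mathbb{P}^1$-fibration $Q_a\to\mathbb{P}^1$ is visible as one of the surviving projective factors, exhibiting $Q_a$ as a (possibly blown-up) Hirzebruch surface, and the Hirzebruch invariant is fixed by the self-intersection of the distinguished section coming from $V(y)$. This gives $Q_6\cong\mathbb{F}_1$ and $Q_8\cong\mathbb{F}_2$ as the two surfaces that carry no superfluous exceptional $(-1)$-curve. For the remaining two surfaces I would track the blowups that distinguish $Y_5$ and $Y_4$ from the others in \eqref{eq:finalbl}: the flop connecting $Y_6$ and $Y_8$ factors through $Y_5$ as an elementary Nagata transformation, a blowup of a point followed by the blowdown of a ruling fiber, and since $Y_5$ is the intermediate model it retains the exceptional $(-1)$-curve. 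Hence $Q_5\cong\mathbb{F}_1^{(1)}$ is $\mathbb{F}_1$ blown up at a point of its $(-1)$-curve, equivalently $\mathbb{F}_2$ blown up at a point of its $(+2)$-curve, which I would confirm by checking that $Q_5$ carries two $(-1)$-curves in the predicted configuration. Finally, the additional blowup producing $Y_4$ from $Y_5$ blows up the intersection point of these two $(-1)$-curves, and recomputing self-intersections verifies $Q_4\cong\mathbb{F}_2^{(2)}$.

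The main obstacle will be the local bookkeeping: after specializing $s=a=b=0$, correctly resolving the scaling relations among the exceptional coordinates $e_i$ so that the surviving $\mathbb{P}^1$-ruling and the distinguished sections become visible in each chart, and then matching the resulting self-intersection numbers to the Hirzebruch classification without sign or normalization errors. A secondary care point is to check that the claimed surface is really the whole fiber of $D_6$ over $p$, and not merely one component, so that the degeneration $C_6\to Q_a$ is accounted for in full.
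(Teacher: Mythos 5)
Your computational core coincides with the paper's own proof in Appendix~\ref{highercodim}: there, for each of $Y_4$, $Y_5$, $Y_6$, $Y_8$, the fiber of $D_6$ over a point of $V(s,a,b)$ is obtained by restricting the defining equations and the projective scaling relations of the blowup tower \eqref{eq:finalbl}, fixing the coordinates that are forced to be nonzero, and reading off the residual two-dimensional toric data, which is then recognized as $\mathbb{F}_2$, $\mathbb{F}_1$, the blowup of $\mathbb{F}_1$ at a point of its $(-1)$-curve, and the further point blowup giving $Q_4$. Your plan to exhibit the ruling, the distinguished sections, and the extra $(-1)$-curves introduced by the blowups distinguishing $Y_5$ and $Y_4$ is exactly how the paper pins down the Hirzebruch invariants and the configurations of Figure~\ref{Fig:Q}, so the approach is essentially the same.

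One caveat on your opening framing: $Q_a$ is \emph{not} a compact component of the exceptional fiber of the crepant resolution of the $\mathbb{C}^3/(\mathbb{Z}_2\times\mathbb{Z}_2)$ point. That orbifold, equivalently $t^2=u_1u_2u_3$, admits no compact exceptional divisor in any crepant resolution (its junior simplex has lattice points only at the vertices and edge midpoints, none in the interior), so the fiber of the resolution over the most singular point is a configuration of rational curves, not a surface. The surface $Q_a$ is already present as a fat fiber of the relevant component in the partial resolution $\widetilde{Y}$ before the orbifold singularity is resolved; the remaining blowups only modify it by the point blowups and blowdowns you describe. Since your actual computation proceeds directly from the full blowup tower after setting $s=a=b=0$ and never uses this claim, the proof is not invalidated, but ``reading off one compact component of the resolved orbifold fiber'' would fail if taken literally as a shortcut.
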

\begin{tabular}{lp{14.5cm}}
Y$_8$ :&  The surface $Q_8$ is isomorphic to the Hirzebruch surface $\mathbb{F}_2$. \\
Y$_6$ :&  The surface $Q_6$ is isomorphic to an  $\mathbb{F}_1$ (which is also a del Pezzo surface of degree $8$). \\
Y$_5$  :&  The  surface $Q_5$ is a Hirzebruch surface $\mathbb{F}_1$ blown-up at a point of its curve of self-intersection $-1$. \\
Y$_4$   :&  The surface $Q_4$ is the blowup of a  Hirzebruch surface $\mathbb{F}_2$ over a point $p$ of the curve of self-intersection $2$ followed by a blowup of the intersection point between the proper transform of that curve and the proper transform of the fiber over the point $p$. The structures of these rational surfaces are summarized in Figure \ref{Fig:Q}.
\end{tabular}
~\\

\begin{figure}[htb]
\begin{center}
\scalebox{1.2}{
\begin{tabular}{c}
\begin{tikzpicture}
				\draw[thick] (-.2,0) to (2.2,0);
				\draw[thick] (0,-.2) to (0,2.2);
				\draw[thick] (-.2,2) to (2.2,2);
				\draw[thick] (2,-.2) to (2,2.2);
				\node  at (1,1){Q$_8\cong\mathbb{F}_2$};
				\node  at (-.3,1){$0$};
				\node  at (2.3,1){$0$};
				\node  at (1,-.3){$-2$};
				\node  at (1,2.3){$2$};
			\end{tikzpicture}

		\\

					\begin{tikzpicture}
				\draw[thick] (-.2,0) to (2.2,0);
				\draw[thick] (0,-.2) to (0,2.2);
				\draw[thick] (-.2,2) to (2.2,2);
				\draw[thick] (1.4,-.2) to +(30:1.6);
				\draw[thick] (1.4,2.2) to +(-30:1.6);
			         \draw[thick] (2.5,2) to +(0,-2);
			         				\node  at (1,1){Q$_4$};
				\node  at (-.3,1){$0$};
				\node  at (1.8,.3){$-2$};
			        \node  at (2.2,1){$-1$};
				\node  at (1.8,1.7){$-2$};
				\node  at (1,-.3){$-2$};
				\node  at (1,2.3){$1$};
			\end{tikzpicture}
			\quad\quad

			\begin{tikzpicture}
				\draw[thick] (-.2,0) to (2.2,0);
				\draw[thick] (0,-.2) to (0,2.2);
				\draw[thick] (-.2,2) to (2.2,2);
				\draw[thick] (1.8,-.2) to +(45:2);
				\draw[thick] (1.8,2.2) to +(-45:2);
								\node  at (1,1){Q$_5$};
				\node  at (-.3,1){$0$};
				\node  at (2,.4){$-1$};
				\node  at (2,1.6){$-1$};
				\node  at (1,-.3){$-2$};
				\node  at (1,2.3){$1$};
				\node at (3.01,1) {$\bullet$};
			\end{tikzpicture}
			
		\quad \quad

						\begin{tikzpicture}
				\draw[thick] (-.2,0) to (2.2,0);
				\draw[thick] (0,-.2) to (0,2.2);
				\draw[thick] (-.2,2) to (2.2,2);
				\draw[thick] (2,-.2) to (2,2.2);
								\node  at (1,1){Q$_6\cong\mathbb{F}_1$};
				\node  at (-.3,1){$0$};
				\node  at (2.3,1){$0$};
				\node  at (1,-.3){$-1$};
				\node  at (1,2.3){$1$};
				\node at (2.01,0) {$\bullet$};
			\end{tikzpicture}

			\end{tabular}}
			\end{center}

\caption{ Isomorphism classes of the rational surfaces Q$_a$ ($a=4,5,6,8$). 
For the crepant resolution Y$_a$, the fibral divisor D$_6$ has a generic fiber C$_6$ that degenerates into a full rational surface $Q_a$ over the codimension-three locus $V(a,b)\cap S$. The four rational surfaces Q$_a$ are connected to each other by blowing-up points or blowing-down $(-1)$-curves. 
The rational surfaces Q$_8$ and Q$_6$ are, respectively, isomorphic to the Hirzebruch surfaces $\mathbb{F}_2$ and $\mathbb{F}_1$. 
The rational surface Q$_5$ is obtained by blowing-up a point of the ($-1$)-curve of $\mathbb{F}_1$ or by blowing-up a point of the curve of self-intersection $2$ in $\mathbb{F}_2$. 
Alternatively, Q$_8$ is obtained from Q$_5$ by contracting the ($-1$)-curve that is the proper transform of the fiber over the point $p$ that was blown-up to go from Q$_6$ to Q$_5$. 
The rational surface Q$_4$ is obtained by blowing-up the intersection of the two ($-1$)-curves of Q$_5$.  
\label{Fig:Q}
}			\end{figure}
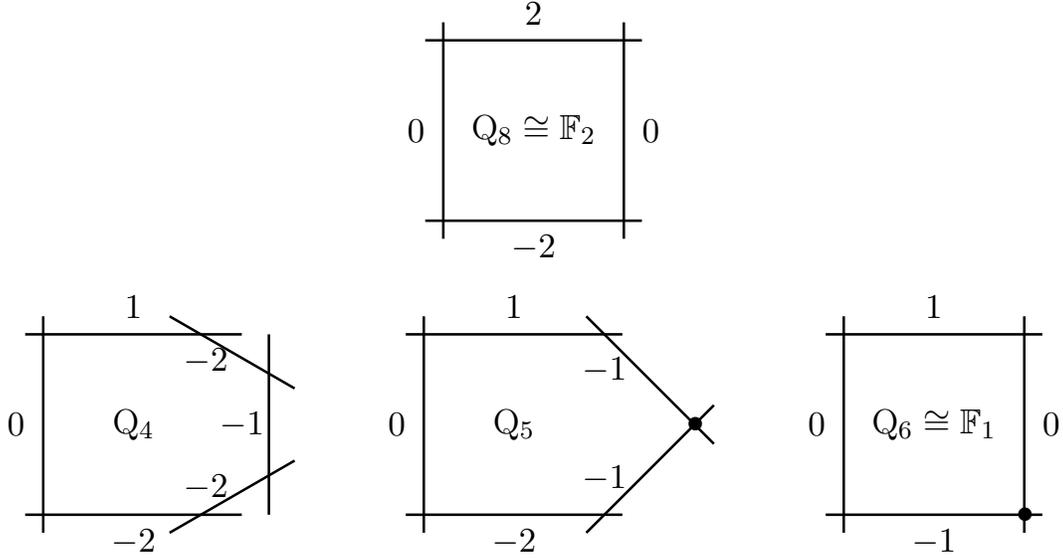

The relevant computations can be found in Appendix~\ref{highercodim}.
Using adjunction, we compute the total Chern class of the surface $Q$  and its characteristic numbers. We can then use this data to identify the surfaces.
\begin{lem}
The Euler characteristic, the degree, the holomorphic Euler characteristic, and signature of the surfaces $Q_a$ ($a=4,5,6,8$) are:
$$
\begin{array}{l l l l l }
Y_4 :\quad & \chi(Q_4)=6 \hspace{1cm} &  K_{Q_4}^2 = 6 ,\qquad  & \chi_0(Q_4)=1,\quad & \tau(Q_4)=-2,\\
Y_5 :\quad & \chi(Q_5)=5  &  K_{Q_5}^2 = 7 ,\qquad  &\chi_0(Q_5)=1,\quad & \tau(Q_5)=-1,\\
Y_6 :\quad & \chi(Q_6)=4  &  K_{Q_6}^2 = 8 ,\qquad  &\chi_0(Q_6)=1,\quad & \tau(Q_6)=\ 0,\\
Y_8 :\quad & \chi(Q_8)=4  &  K_{Q_8}^2 = 8 ,\qquad & \chi_0(Q_8)=1, \quad & \tau(Q_8)=\  0.
\end{array}
$$
\end{lem}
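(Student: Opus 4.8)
The plan is to reduce the entire computation to the two Chern numbers $c_1^2(Q_a)=K_{Q_a}^2$ and $c_2(Q_a)=\chi(Q_a)$, from which the remaining two invariants follow by purely formal identities. For any smooth projective surface $Q$ one has $c_1(TQ)=-K_Q$ and $c_2(TQ)=\chi(Q)$, so Noether's formula gives $\chi_0(Q)=\tfrac{1}{12}\bigl(K_Q^2+\chi(Q)\bigr)$ and the Hirzebruch signature theorem gives $\tau(Q)=\tfrac{1}{3}\bigl(K_Q^2-2\chi(Q)\bigr)$. Thus once $K_{Q_a}^2$ and $\chi(Q_a)$ are known, the columns for $\chi_0$ and $\tau$ are immediate; in particular $\chi_0(Q_a)=1$ for all four surfaces, consistent with their being rational.

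To obtain $K_{Q_a}^2$ and $\chi(Q_a)$ directly from the geometry, I would work inside the explicit crepant resolution $Y_a$ of \eqref{eq:finalbl} and compute the total Chern class $c(TQ_a)$ by adjunction. Over the codimension-three locus $V(s,a,b)$ the fibral divisor $D_6$ ceases to be a $\mathbb{P}^1$-bundle and its fiber jumps to the surface $Q_a$; using the projective blowup coordinates one realizes $Q_a$ as a concrete complete intersection cut out of the relevant chart of the ambient blown-up space. Applying Theorem \ref{Thm:AluffiCBU} successively along the blowup tree produces $c(TY_a)$, and adjunction then gives $c(TQ_a)=c(TY_a)\big/(1+[Q_a])$ restricted to $Q_a$. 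Reading off $c_1$ and $c_2$ and integrating by means of the pushforward formulas of Theorem \ref{Thm:Lem34}, Theorem \ref{Thm:Push}, and Theorem \ref{Thm:PushH} yields $\int_{Q_a}c_1^2=K_{Q_a}^2$ and $\int_{Q_a}c_2=\chi(Q_a)$. These are the computations to be carried out in Appendix \ref{highercodim}.

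As an independent check, I would exploit the identifications of Theorem \ref{thm:Q} together with the elementary transformation rules for surface invariants under a point blowup, namely $\chi\mapsto\chi+1$ and $K^2\mapsto K^2-1$, while $\chi_0$ is unchanged and $\tau\mapsto\tau-1$. Starting from the Hirzebruch data $(\chi,K^2,\chi_0,\tau)=(4,8,1,0)$ for $\mathbb{F}_1\cong Q_6$ and $\mathbb{F}_2\cong Q_8$, one blowup gives $Q_5\cong\mathbb{F}_1^{(1)}$ with $(5,7,1,-1)$, and two successive blowups of $\mathbb{F}_2$ give $Q_4\cong\mathbb{F}_2^{(2)}$ with $(6,6,1,-2)$, matching the statement. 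This route is quick but logically depends on Theorem \ref{thm:Q}; presenting both gives a consistency check between the identification of the surfaces and their computed characteristic numbers.

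The main obstacle is the second paragraph: because the fibration is \emph{not} flat over $V(s,a,b)$, the surface $Q_a$ is not a clean fiber of a projective bundle, so one cannot simply write $[Q_a]$ as a pullback class. The delicate bookkeeping is to pin down the class of $Q_a$ in $Y_a$ and its embedding, that is, to determine which exceptional divisors of the blowup tree \eqref{eq:finalbl} restrict nontrivially to $Q_a$ and with what multiplicities, so that the adjunction computation and the subsequent pushforwards are applied with the correct normal-bundle data. Once the correct chart and defining equations for $Q_a$ are fixed, the remaining steps are mechanical applications of the pushforward theorems already established.
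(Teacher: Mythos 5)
Your proposal is correct and matches the paper's approach: the paper likewise computes $c(TQ_a)$ by adjunction from $c(TY_a)$ using the pushforward theorems, and in Appendix~\ref{highercodim} pins down $Q_a$ in explicit charts of the blowup tree, with the identifications $Q_6\cong\mathbb{F}_1$, $Q_8\cong\mathbb{F}_2$, and $Q_5$, $Q_4$ as one and two successive point blowups (Theorem~\ref{thm:Q}), from which the table follows exactly by your Noether/signature identities and the blowup rules $\chi\mapsto\chi+1$, $K^2\mapsto K^2-1$, $\tau\mapsto\tau-1$. The only difference is one of emphasis — the paper's appendix effectively makes your ``independent check'' the primary derivation — and your closing remark about the non-flatness bookkeeping is precisely the issue the explicit chart analysis there is designed to handle.
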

~\\

While flops do not change the fibral divisor (namely D$_6$) whose generic fiber  degenerates into the surface $Q$ over $V(a,b,s)$, flops do change the topology of $Q$ by blowing-up/down  certain points. Such blowups will change $\chi(Q)$, $K_Q^2$, and $\tau(Q)$. Since we know the degeneration of the fiber C$_6$, we make the following conjectures.

~\\~\\

\begin{conjec} The Euler characteristic, the degree, the holomorphic Euler characteristic, and signature of the surfaces $Q_a$ ($a=1,2,3,7$) are expected to be:
$$
\begin{array}{l l l l l }
Y_1 :\quad & \chi(Q_1)=9 \hspace{1cm} &  K_{Q_1}^2 = 3 ,\qquad  & \chi_0(Q_1)=1,\quad & \tau(Q_1)=-5,\\
Y_2 :\quad & \chi(Q_2)=8 \hspace{1cm} &  K_{Q_2}^2 = 4 ,\qquad  & \chi_0(Q_2)=1,\quad & \tau(Q_2)=-4,\\
Y_3 :\quad & \chi(Q_3)=7 \hspace{1cm} &  K_{Q_3}^2 = 5 ,\qquad  & \chi_0(Q_3)=1,\quad & \tau(Q_3)=-3,\\
Y_7 :\quad & \chi(Q_7)=4  &  K_{Q_7}^2 = 8 ,\qquad & \chi_0(Q_7)=1, \quad & \tau(Q_7)=\  0,
\end{array}
$$
\end{conjec}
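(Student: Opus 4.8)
The plan is to reduce all four invariants in each row to the single number $K_{Q_a}^2$, and then to fix that number from the degeneration data of $C_6$ in Table \ref{Table:E7.Split} together with the flop structure of Figure \ref{Figure:IG}. The starting point is that every $Q_a$ is a rational surface: the surfaces $Q_4,Q_5,Q_6,Q_8$ computed in Appendix \ref{highercodim} are rational (Figure \ref{Fig:Q}), the remaining chambers are reached from these by flops, and a flop modifies the fat fiber $Q\subset D_6$ only by blowing points up or down, an operation that preserves rationality. Hence $\chi_0(Q_a)=1$ for all $a$, which already establishes the holomorphic-Euler-characteristic column of the conjecture. Writing $K^2=\int_{Q}c_1^2$ and $\chi=\int_{Q}c_2$, the definitions of $\chi_0$ and $\tau$ give Noether's relation $12\chi_0=K^2+\chi$ and the signature identity $3\tau=K^2-2\chi$; imposing $\chi_0=1$ collapses these to $\chi=12-K^2$ and $\tau=K^2-8$. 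Thus the whole conjecture follows once $K_{Q_a}^2$ is determined.

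To pin down $K_{Q_a}^2$, I would exploit the numerical pattern visible in the four explicitly resolved cases. Reading off the degeneration of $C_6$ from Table \ref{Table:E7.Split}, the number $N_a$ of distinct irreducible components of $C_6$ in chamber $a$ is $1$ for $Ch_6,Ch_8$, equal to $2$ for $Ch_5$, and equal to $3$ for $Ch_4$, while the computed self-intersections are $K^2=8,8,7,6$ respectively; in every case $K_{Q_a}^2=9-N_a$, exactly as one expects if each additional component of $C_6$ corresponds to blowing up one further point of $Q$, lowering $K^2$ by one. I would then read $N_a$ for the remaining chambers from the same table: along the chain $Ch_4\to Ch_3\to Ch_2\to Ch_1$ the curve $C_6$ splits as $2C_{36}+C_4+C_7$, then $2C_{26}+2C_3+C_4+C_7$, then $2C_{16}+2C_2+2C_3+C_4+C_7$, then $2C_{06}+2C_1+2C_2+2C_3+C_4+C_7$, so $N_a=3,4,5,6$ and $K^2=6,5,4,3$ for $a=4,3,2,1$; for $Ch_7$ the curve $C_6$ does not split at all (it occurs only inside the splitting of $C_7$), giving $N_7=1$ and $K^2=8$. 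Substituting into $\chi=12-K^2$ and $\tau=K^2-8$ reproduces every entry of the conjecture.

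The main obstacle will be that $Y_1,Y_2,Y_3,Y_7$ are precisely the chambers for which no explicit crepant resolution is available, so the adjunction-and-pushforward computation (Theorems \ref{Thm:AluffiCBU}, \ref{Thm:Push}, and \ref{Thm:PushH}) used to obtain $c_1^2(Q)$ and $c_2(Q)$ in the resolved cases cannot be carried out directly. The formula $K^2=9-N_a$ is strongly suggested by the data but does not by itself constitute a proof: passing from ``one extra irreducible component of $C_6$'' to ``$K^2$ drops by exactly one'' requires knowing that each wall-crossing induces a single smooth point-blowup of $Q$ rather than a more drastic birational modification, and this is exactly the information an explicit model would supply. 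A rigorous argument would therefore proceed either by constructing the four missing resolutions and computing $Q_a$ as in Appendix \ref{highercodim}, or by establishing a general lemma that a flop across a wall whose defining weight pairs nontrivially with $D_6$ modifies the fat fiber by a single blowup or blowdown. Until one of these is supplied, the statement remains a conjecture.
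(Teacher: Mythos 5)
Your proposal is correct and follows essentially the same route as the paper: the authors likewise take $\chi_0=1$ for a rational surface, track the extra irreducible components acquired by $C_6$ across wall-crossings (expecting $Q_1,Q_2,Q_3$ to be blowups of $\mathbb{F}_2$ at $5,4,3$ points and $Q_7\cong\mathbb{F}_1$), and deduce $\chi$, $K^2$, and $\tau$ from the resulting blowup count, which is exactly your relation $K_{Q_a}^2=9-N_a$ combined with Noether and the signature formula. Your closing caveat is also the paper's: absent explicit resolutions for chambers $1,2,3,7$, the passage from ``one more component of $C_6$'' to ``one smooth point-blowup of $Q$'' is unproven, which is precisely why the statement is labelled a conjecture rather than a theorem.
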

We expect $Q_7$ to be a Hirzebruch surface $\mathbb{F}_1$, and 
$Q_3$, 
$Q_2$, 
and $Q_1$ to be
the blowup of a Hirzebruch surface $\mathbb{F}_2$  at three, four, and five points, respectively. 
These points are on the fiber of self-intersection $-2$ and then on successive intersections of the proper transform of this fiber with the exceptional divisors.

\pagebreak
\appendix

\section{Fiber degenerations}\label{sec:fibdegen}
Here we derive in detail  the splitting of curves in each of the eight chambers using the hyperplane arrangement I(E$_7$, $\mathbf{56}$).

\subsection{Ch$_1$}\label{Sec:Ch1}
\vspace{-2em}
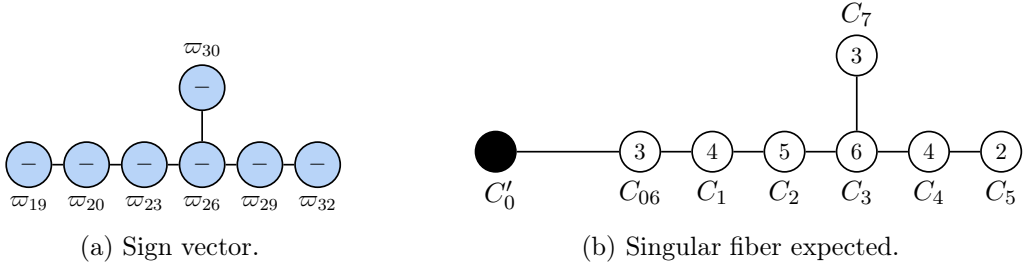
\begin{figure}[H]
\centering 
\begin{subfigure}[t]{0.45\textwidth}
\centering
\scalebox{.8}{
\begin{tikzpicture}[scale=.8]
				\node[fill=myblue,draw,circle,thick,scale=1,label=below:{\scalebox{1}{$\varpi_{19}$}}] (1) at (1.2,0){$-$};
				\node[fill=myblue,draw,circle,thick,scale=1,label=below:{\scalebox{1}{$\varpi_{20}$}}] (2) at (2.4,0){$-$};
				\node[fill=myblue,draw,circle,thick,scale=1,label=below:{\scalebox{1}{$\varpi_{23}$}}] (3) at (3.6,0){$-$};
				\node[fill=myblue,draw,circle,thick,scale=1,label=below:{\scalebox{1}{$\varpi_{26}$}}] (4) at (4.8,0){$-$};
				\node[fill=myblue,draw,circle,thick,scale=1,label=below:{\scalebox{1}{$\varpi_{29}$}}] (5) at (6,0){$-$};
				\node[fill=myblue,draw,circle,thick,scale=1,label=below:{\scalebox{1}{$\varpi_{32}$}}] (6) at (7.2,0){$-$};
				\node[fill=myblue,draw,circle,thick,scale=1, label=above:{\scalebox{1}{$\varpi_{30}$}}] (8) at (4.8,1.6){$-$};
				\draw[thick] (1)--(2)--(3)--(4)--(5)--(6);
				\draw[thick]  (4)--(8);
	\end{tikzpicture}}
\caption{Sign vector.}
\label{Fig:Ch1Sign}
\end{subfigure}
\begin{subfigure}[t]{0.45\textwidth}
\centering
\scalebox{.8}{
\begin{tikzpicture}
				\node[draw,circle,thick,scale=1,fill=black,label=below:{\scalebox{1.2}{ $C'_0$}}] (0) at (0,0){$1$};
				\node[draw,circle,thick,scale=1,label=below:{\scalebox{1.2}{$C_{06}$}}] (2) at (2.4,0){$3$};
				\node[draw,circle,thick,scale=1,label=below:{\scalebox{1.2}{$C_1$}}] (3) at (3.6,0){$4$};
				\node[draw,circle,thick,scale=1,label=below:{\scalebox{1.2}{$C_2$}}] (4) at (4.8,0){$5$};
				\node[draw,circle,thick,scale=1,label=below:{\scalebox{1.2}{$C_3$}}] (5) at (6,0){$6$};
				\node[draw,circle,thick,scale=1,label=below:{\scalebox{1.2}{$C_4$}}] (6) at (7.2,0){$4$};
				\node[draw,circle,thick,scale=1, label=below:{\scalebox{1.2}{$C_5$}}] (7) at (8.4,0){$2$};
				\node[draw,circle,thick,scale=1, label=above:{\scalebox{1.2}{$C_7$}}] (8) at (6,1.6){$3$};
				\draw[thick] (0)--(2)--(3)--(4)--(5)--(6)--(7);
				\draw[thick]  (5)--(8);
					\end{tikzpicture}}
					\caption{Singular fiber expected. }
					\label{Fig:Ch1Fib}
					\end{subfigure}
					
\caption{Chamber 1.}
\end{figure}

As we can see from Figure \ref{Figure:IG}, the unique wall of  chamber 1 that intersects the interior of the fundamental open Weyl chamber is the hyperplane $\varpi_{19}^\bot$ . This hyperplane  separates chamber 1 and chamber 2:  the form $\langle \varpi_{19},\phi\rangle$ is negative in the interior of the chamber  1, vanishes on the wall $\varpi_{19}^\bot$ and is positive in chamber 2. 
The condition 
\begin{equation}
\langle \varpi_{19}, \phi\rangle=\phi_1-\phi_6 <0,
\end{equation}
completely characterizes chamber 1 since $\varpi_{19}$ is higher than all the other weights appearing in the sign vector. 
It follows that   $-\varpi_{19}$ (resp. $\varpi_{19}$)  is an effective curve in chamber 1 (resp. in chamber 2) which we call $C'_6$ .  

 Geometrically, any simple root is an effective curve in a given chamber unless it connects two weights of different signs. 
{
In the basis of simple roots, we have   
$\varpi_{19}=(1, 1, 1, \frac{1}{2}, 0, \text{-}\frac{1}{2}, \frac{1}{2})$ as listed in  equation \eqref{Eq:translation}, or equivalently }
\begin{equation}
\varpi_{19}=\alpha_1 + \alpha_2 +\alpha_3+\frac{1}{2}\alpha_4 -\frac{1}{2}\alpha_6 +\frac{1}{2}\alpha_7.
\end{equation}
We rewrite this equation in the following suggestive form 
\begin{equation}
\alpha_6=2(-\varpi_{19})+2\alpha_1+2\alpha_2 + 2 \alpha_3 +\alpha_4 + \alpha_7
\end{equation}
and deduce that the curve $C_6$ splits as follows: 
\begin{equation}
C_6\to 2 C'_{6}+2C_1+2C_2+2C_3 +C_4+C_7.
\end{equation}
{This matches the description in \cite{Diaconescu:1998cn}.}
The intersection number of $C'_6$ with the fibral divisor D$_i$ is given by minus the coefficient of $\varpi_{19}$ written in the basis of fundamental weights.  Since in  the basis of fundamental weights, we have $-(-\varpi_{19})= 
\boxed{\ 1\ \ 0\ 0 \ 0 \ 0  \ $-1$ \ \ 0 }$, we find
\begin{equation}
D_1\cdot C'_6=1, \quad D_6 \cdot C'_6=-1, \quad D_i \cdot C'_6=0, \quad i=2,3,4,5,7.
\end{equation}
{
We also note that by linearity (see equation \eqref{Eq.D0}), we have} 
\begin{equation}
D_0\cdot C'_6=-1,
\end{equation}
which is negative and therefore implies that  $D_0$ contains $C'_6$.  Since both $D_0$ and $D_6$ contain $C'_6$, we rename   $C'_6$ as $C_{06}$ and we have the splitting rule 
\begin{equation}
\begin{cases}
C_0 \to C_{06}+C'_0\\
C_6\to 2 C_{06}+2C_1+2C_2+2C_3 +C_4+C_7
\end{cases}
\end{equation}
where $C'_0$ is the left-over curve in $C_0$. Since $C_0$ has weight $\varpi_0$ and $C_{06}$ has weight $-\varpi_{19}$, and  $\varpi_0 = -\varpi_{19}-\varpi_1$, we see that $-\varpi_1$ is the weight of $C'_0$:
\begin{equation}
\begin{cases}
C_{06}\to -\varpi_{19}, \\
 C'_0\to \boxed{\ 0\ \ 0\ 0 \ 0 \ 0  \ $-1$ \ \ 0 }=-\varpi_1.
\end{cases}
\end{equation}
From the Dynkin indices of $\varpi_1$, we deduce the following intersection numbers
\begin{equation}
D_0\cdot C'_0= -1, \quad D_r \cdot C'_0=0 \quad r=1,2,3,4,5,7, \quad D_6\cdot C'_0=1,
\end{equation}
and the degeneration 
\begin{equation}
C_0+2 C_1 + 3 C_2 + 4C_3 + 3 C_4 + 2C_5+C_6 + 2  C_7\to 
C'_0+ 3C_{06}+4C_1+5C_2+6C_3+4C_4+ 2C_5+3C_7
\end{equation}
We get a fiber whose dual graph is the affine Dynkin diagram $\tilde{\text{E}}_8$ with the node corresponding to $\alpha_1$ contracted to a point and the identification: 
\begin{equation}
(C'_0, C_{06}, C_1,C_2,C_3, C_4, C_5, C_7)\to (\alpha_0, \alpha_2, \alpha_3, \alpha_4, \alpha_5, \alpha_6, \alpha_7, \alpha_8), 
\end{equation}
with the respective multiplicities $(1,3,4,5,6,4,2,3)$.

\subsection{Ch$_2$}
\vspace{-2em}

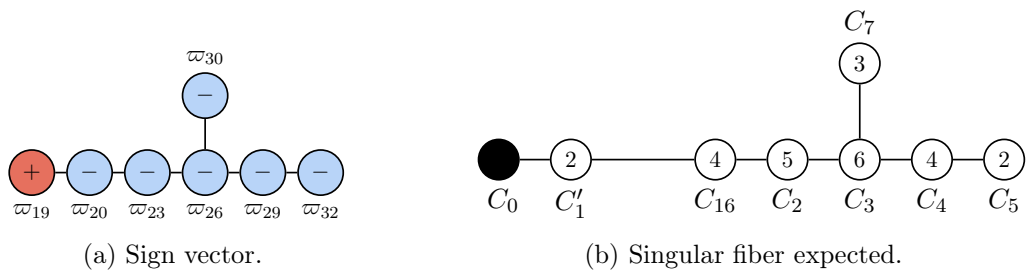
\begin{figure}[H]
\centering 
\begin{subfigure}[t]{0.45\textwidth}
\centering
\scalebox{.8}{
\begin{tikzpicture}[scale=.8]
		\node[fill=myred,draw,circle,thick,scale=1,label=below:{\scalebox{1}{$\varpi_{19}$}}] (1) at (1.2,0){$+$};
				\node[fill=myblue,draw,circle,thick,scale=1,label=below:{\scalebox{1}{$\varpi_{20}$}}] (2) at (2.4,0){$-$};
				\node[fill=myblue,draw,circle,thick,scale=1,label=below:{\scalebox{1}{$\varpi_{23}$}}] (3) at (3.6,0){$-$};
				\node[fill=myblue,draw,circle,thick,scale=1,label=below:{\scalebox{1}{$\varpi_{26}$}}] (4) at (4.8,0){$-$};
				\node[fill=myblue,draw,circle,thick,scale=1,label=below:{\scalebox{1}{$\varpi_{29}$}}] (5) at (6,0){$-$};
				\node[fill=myblue,draw,circle,thick,scale=1,label=below:{\scalebox{1}{$\varpi_{32}$}}] (6) at (7.2,0){$-$};
				\node[fill=myblue,draw,circle,thick,scale=1, label=above:{\scalebox{1}{$\varpi_{30}$}}] (8) at (4.8,1.6){$-$};
				\draw[thick] (1)--(2)--(3)--(4)--(5)--(6);
				\draw[thick]  (4)--(8);
				\end{tikzpicture}}
\caption{Sign vector.}
\label{Fig:Ch2Sign}
\end{subfigure}
\begin{subfigure}[t]{0.45\textwidth}
\centering
\scalebox{.8}{
\begin{tikzpicture}
				\node[draw,circle,thick,scale=1,fill=black,label=below:{\scalebox{1.2}{ $C_0$}}] (0) at (0,0){$1$};
				\node[draw, circle,thick,scale=1, label=below:{\scalebox{1.2}{$C'_{1}$}}] (1) at (1.2,0){$2$};
			\node[draw,circle,thick,scale=1,label=below:{\scalebox{1.2}{$C_{16}$}}] (3) at (3.6,0){$4$};
				\node[draw,circle,thick,scale=1,label=below:{\scalebox{1.2}{$C_2$}}] (4) at (4.8,0){$5$};
				\node[draw,circle,thick,scale=1,label=below:{\scalebox{1.2}{$C_3$}}] (5) at (6,0){$6$};
				\node[draw,circle,thick,scale=1,label=below:{\scalebox{1.2}{$C_4$}}] (6) at (7.2,0){$4$};
				\node[draw,circle,thick,scale=1, label=below:{\scalebox{1.2}{$C_5$}}] (7) at (8.4,0){$2$};
				\node[draw,circle,thick,scale=1, label=above:{\scalebox{1.2}{$C_7$}}] (8) at (6,1.6){$3$};
				\draw[thick] (0)--(1)--(3)--(4)--(5)--(6)--(7);
				\draw[thick]  (5)--(8);					\end{tikzpicture}}
					\caption{Singular fiber expected. }
					\label{Fig:Ch2Fib}
					\end{subfigure}
					
\caption{Chamber 2.}
\end{figure}
{In chamber 2, we see from  Figure \ref{Figure:IG} that the extremal faces are  $\varpi_{19}^\bot$ and $\varpi_{20}^\bot$ with}
\begin{equation}
\varpi_{19}-\alpha_1=\varpi_{20}, \quad \varpi_{19}\cdot \phi>0 , \quad \varpi_{20}\cdot \phi<0.
\end{equation}
We conclude that $\varpi_{19}$ and $-\varpi_{20}$ will correspond to effective extremal curves in this chamber. 
We can also use the expression of $\varpi_{20}$ in terms of simple roots: 
\begin{equation}
\varpi_{19}=\alpha_1 + \alpha_2 +\alpha_3+\frac{1}{2}\alpha_4 -\frac{1}{2}\alpha_6 +\frac{1}{2}
\alpha_7,\quad 
\varpi_{20}=\  \alpha_2+\alpha_3+\frac{1}{2}\alpha_4-\frac{1}{2}\alpha_6+\frac{1}{2}\alpha_7
\end{equation}
{and solve for  $\alpha_1$ and $\alpha_6$ as sums of weights that define a positive form in the interior of the chamber:}
\begin{equation}
\alpha_1= \varpi_{19}+(-\varpi_{20}),\quad \alpha_6=2(-\varpi_{20})+ 2\alpha_2+2\alpha_3 +\alpha_4 +\alpha_7.
\end{equation}
There is an effective curve $C_{16}$ corresponding to $-\varpi_{20}$ and an effective cure $C'_1$ corresponding to $\varpi_{19}$.
Our choice of notation is because $-\varpi_{20}$ shows up for both $\alpha_1$ and $\alpha_6$ while $\varpi_{19}$ only appears in $\alpha_1$, which we see as follows.

 Since in  the basis of fundamental weights, we have $-(-\varpi_{20})= 
\boxed{\ $-1$\ \ 1\ 0 \ 0 \ 0  \ $-1$ \ \ 0 }$, we deduce  
\begin{equation}
D_1\cdot C_{16}=-1, \quad D_2\cdot C_{16}=1, \quad  D_6 \cdot C_{16}=-1, \quad D_i \cdot C_{16}=0, \quad i=3,4,5,7.
\end{equation}
We also note that by linearity
\begin{equation}
D_0\cdot C_{16}=0.
\end{equation}
The negative intersection numbers imply both $D_1$ and $D_6$ contain $C_{16}$. Meanwhile from 
$-(\varpi_{19})= 
\boxed{\ $-1$\ \ 0\ 0 \ 0 \ 0  \ $1$ \ \ 0 }$,  we deduce the intersections of $C'_1$ which are the negative of those computed for $C_{06}$ above, namely  
\begin{equation}
D_0\cdot C'_6=1,\quad  D_1\cdot C'_6=-1, \quad D_6 \cdot C'_6=1, \quad D_i \cdot C'_6=0, \quad i=2,3,4,5,7.
\end{equation}
We then have 
\begin{equation}
\begin{cases}
C_1\to C_{16}+C'_1\\
C_6\to 2C_{16}+2C_2+2C_3 +C_4+C_7. 
\end{cases}
\end{equation}
Using these linear equations, we find: 
\begin{equation}
C_0+2 C_1 + 3 C_2 + 4C_3 + 3 C_5 + C_6 + 2  C_7\to 2 C'_1 + 4 C_{16} + 5 C_2 + 6 C_3 + 4 C_4 + 2 C_5 + 3 C_7.
\end{equation}
We get a fiber whose dual graph is the affine Dynkin diagram $\tilde{\text{E}}_8$ with the node corresponding to $\alpha_2$ contracted to a point and the identification: 
\begin{equation}
(C_0, C'_1, C_{16}, C_1,C_2,C_3, C_4, C_5, C_7)\to (\alpha_0, \alpha_1, \alpha_3, \alpha_4, \alpha_5, \alpha_6, \alpha_7, \alpha_8), 
\end{equation}
with the respective multiplicities $(1,2,4,5,6,4,2,3)$.

\subsection{Ch$_3$}

\vspace{-2em}
\begin{figure}[H]
\centering 
\begin{subfigure}[t]{0.45\textwidth}
\centering
\scalebox{.8}{
\begin{tikzpicture}[scale=.8]
		\node[fill=myred,draw,circle,thick,scale=1,label=below:{\scalebox{1.2}{$\varpi_{19}$}}] (1) at (1.2,0){$+$};
				\node[fill=myred,draw,circle,thick,scale=1,label=below:{\scalebox{1.2}{$\varpi_{20}$}}] (2) at (2.4,0){$+$};
				\node[fill=myblue,draw,circle,thick,scale=1,label=below:{\scalebox{1.2}{$\varpi_{23}$}}] (3) at (3.6,0){$-$};
				\node[fill=myblue,draw,circle,thick,scale=1,label=below:{\scalebox{1.2}{$\varpi_{26}$}}] (4) at (4.8,0){$-$};
				\node[fill=myblue,draw,circle,thick,scale=1,label=below:{\scalebox{1.2}{$\varpi_{29}$}}] (5) at (6,0){$-$};
				\node[fill=myblue,draw,circle,thick,scale=1,label=below:{\scalebox{1.2}{$\varpi_{32}$}}] (6) at (7.2,0){$-$};
				\node[fill=myblue,draw,circle,thick,scale=1, label=above:{\scalebox{1.2}{$\varpi_{30}$}}] (8) at (4.8,1.6){$-$};
				\draw[thick] (1)--(2)--(3)--(4)--(5)--(6);
				\draw[thick]  (4)--(8);					\end{tikzpicture}}
\caption{Sign vector.}
\label{Fig:Ch3Sign}
\end{subfigure}
\begin{subfigure}[t]{0.45\textwidth}
\centering
\scalebox{.8}{
\begin{tikzpicture}
				\node[draw,circle,thick,scale=1,fill=black,label=below:{\scalebox{1.2}{ $C_0$}}] (0) at (0,0){$1$};
				\node[draw,circle,thick,scale=1,label=below:{\scalebox{1.2}{$C_{1}$}}] (2) at (1.2,0){$2$};
				\node[draw,circle,thick,scale=1,label=below:{\scalebox{1.2}{$C'_{2}$}}] (3) at (2.4,0){$3$};
				\node[draw,circle,thick,scale=1,label=below:{\scalebox{1.2}{$C_{26}$}}] (4) at (4.8,0){$5$};
				\node[draw,circle,thick,scale=1,label=below:{\scalebox{1.2}{$C_3$}}] (5) at (6,0){$6$};
				\node[draw,circle,thick,scale=1,label=below:{\scalebox{1.2}{$C_4$}}] (6) at (7.2,0){$4$};
				\node[draw,circle,thick,scale=1, label=below:{\scalebox{1.2}{$C_5$}}] (7) at (8.4,0){$2$};
				\node[draw,circle,thick,scale=1, label=above:{\scalebox{1.2}{$C_7$}}] (8) at (6,1.6){$3$};
				\draw[thick] (0)--(2)--(3)--(4)--(5)--(6)--(7);
				\draw[thick]  (5)--(8);
					\end{tikzpicture}}
					\caption{Singular fiber expected. }
					\label{Fig:Ch3Fib}
					\end{subfigure}
					
\caption{Chamber 3.}
\end{figure}
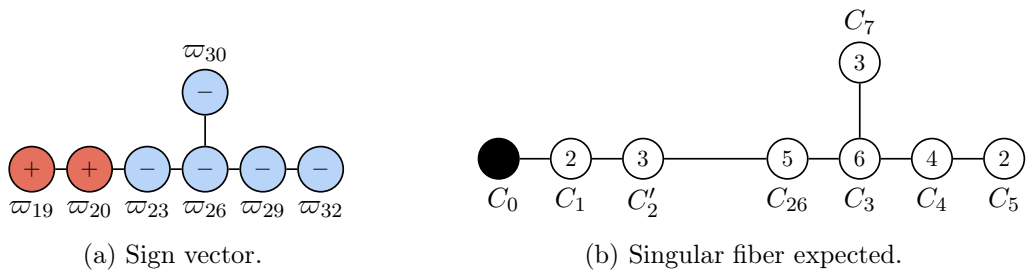

The interior walls are now $\varpi_{20}$ and $\varpi_{23}$ with
\begin{equation}
\varpi_{20}-\alpha_2=\varpi_{23}, \quad \varpi_{20}\cdot \phi>0 , \quad \varpi_{23}\cdot \phi<0.
\end{equation}
We conclude that $\varpi_{20}$ and $-\varpi_{23}$ will correspond to effective extremal curves in this chamber. 
We can also use the expressions for these weights in terms of simple roots: 

\begin{equation}
\begin{cases}
\varpi_{20}=\alpha_2+\alpha_3+\frac{1}{2}\alpha_4-\frac{1}{2}\alpha_6+\frac{1}{2}\alpha_7,\\
\varpi_{23}=\alpha_3+\frac{1}{2}\alpha_4-\frac{1}{2}\alpha_6+\frac{1}{2}\alpha_7.
\end{cases}
\end{equation}
Solving for $\alpha_2$ and $\alpha_6$, we have
\begin{equation}
\alpha_2= \varpi_{20}+(-\varpi_{23}),\quad \alpha_6=2(-\varpi_{20})+2\alpha_3 +\alpha_4 +\alpha_7.
\end{equation}
There is an effective curve $C_{26}$ corresponding to $-\varpi_{23}$ and an effective cure $C'_2$ corresponding to $\varpi_{20}$. 
 We thus have 
\begin{equation}
\begin{cases}
C_2\to C_{26}+C'_2\\
C_6\to 2C_{26}+2C_3 +C_4+C_7. 
\end{cases}
\end{equation}
Our choice of notation is because $C_{26}$ shows up for both $\alpha_2$ and $\alpha_6$ while $C'_2$ only appears in $\alpha_2$, which we see as follows.

Since in  the basis of fundamental weights, we have $-(-\varpi_{23})= 
\boxed{\ $0$\ \ $-1$\ 1 \ 0 \ 0  \ $-1$ \ \ 0 }$, we deduce  
\begin{equation}
D_2\cdot C_{26}=-1, \quad D_3\cdot C_{26}=1, \quad  D_6 \cdot C_{26}=-1, \quad D_i \cdot C_{26}=0, \quad i=1,4,5,7.
\end{equation}
We also note that by linearity $D_0 \cong - (2D_1+3 D_2 + 4 D_3 + 3 D_4+2D_5+D_6+2D_7)$, hence, 
\begin{equation}
D_0\cdot C_{26}=0.
\end{equation}
The negative intersection numbers imply both $D_2$ and $D_6$ contain $C_{26}$.  Meanwhile from $
-(\varpi_{20})= 
\boxed{\ $1$\ \ -1\ 0 \ 0 \ 0  \ $1$ \ \ 0 }$, we deduce the intersections of $C'_2$ which are the negative of those computed for $C_{16}$ above, namely
\begin{equation}
D_1\cdot C'_2=1, \quad D_2\cdot C'_2=-1, \quad  D_6 \cdot C'_2=1, \quad D_i \cdot C'_6=0, \quad i=0,3,4,5,7.
\end{equation}
Finally we find the degeneration
\begin{equation}
C_0+2 C_1 + 3 C_2 + 4C_3 +3 C_4+ 2 C_5 + C_6 + 2  C_7\to 
C_0+2 C_1 + 3C'_2 +5 C_{26} + 6C_3 +4 C_4+ 2 C_5  +3C_7.
\end{equation}
We get a fiber whose dual graph is the affine Dynkin diagram $\tilde{\text{E}}_8$ with the node corresponding to $\alpha_3$ contracted to a point and the identification: 
\begin{equation}
(C_0, C_{1}, C'_2, C_{26},C_3, C_4, C_5, C_7)\to (\alpha_0, \alpha_1, \alpha_2, \alpha_4, \alpha_5, \alpha_6, \alpha_7, \alpha_8), 
\end{equation}
with the respective multiplicities $(1,2,3,5,6,4,2,3)$.

\subsection{Ch$_4$}
\vspace{-2em}

\begin{figure}[H]
\centering 
\begin{subfigure}[t]{0.45\textwidth}
\centering
\scalebox{.8}{
\begin{tikzpicture}[scale=.8]
		\node[fill=myred,draw,circle,thick,scale=1,label=below:{\scalebox{1.2}{$\varpi_{19}$}}] (1) at (1.2,0){$+$};
				\node[fill=myred,draw,circle,thick,scale=1,label=below:{\scalebox{1.2}{$\varpi_{20}$}}] (2) at (2.4,0){$+$};
				\node[fill=myred,draw,circle,thick,scale=1,label=below:{\scalebox{1.2}{$\varpi_{23}$}}] (3) at (3.6,0){$+$};
				\node[fill=myblue,draw,circle,thick,scale=1,label=below:{\scalebox{1.2}{$\varpi_{26}$}}] (4) at (4.8,0){$-$};
				\node[fill=myblue,draw,circle,thick,scale=1,label=below:{\scalebox{1.2}{$\varpi_{29}$}}] (5) at (6,0){$-$};
				\node[fill=myblue,draw,circle,thick,scale=1,label=below:{\scalebox{1.2}{$\varpi_{32}$}}] (6) at (7.2,0){$-$};
				\node[fill=myblue,draw,circle,thick,scale=1, label=above:{\scalebox{1.2}{$\varpi_{30}$}}] (8) at (4.8,1.6){$-$};
				\draw[thick] (1)--(2)--(3)--(4)--(5)--(6);
				\draw[thick]  (4)--(8);
					\end{tikzpicture}}
\caption{Sign vector.}
\label{Fig:Ch4Sign}
\end{subfigure}
\begin{subfigure}[t]{0.45\textwidth}
\centering
\scalebox{.8}{
\begin{tikzpicture}
				\node[draw,circle,thick,scale=1,fill=black,label=below:{\scalebox{1.2}{ $C_0$}}] (0) at (0,0){$1$};
				\node[draw,circle,thick,scale=1,label=below:{\scalebox{1.2}{$C_{1}$}}] (2) at (1.2,0){$2$};
				\node[draw,circle,thick,scale=1,label=below:{\scalebox{1.2}{$C_{2}$}}] (3) at (2.4,0){$3$};
				\node[draw,circle,thick,scale=1,label=below:{\scalebox{1.2}{$C'_{3}$}}] (4) at (3.6,0){$4$};
				\node[draw,circle,thick,scale=1,label=below:{\scalebox{1.2}{$C_{36}$}}] (5) at (6,0){$6$};
				\node[draw,circle,thick,scale=1,label=below:{\scalebox{1.2}{$C_4$}}] (6) at (7.2,0){$4$};
				\node[draw,circle,thick,scale=1, label=below:{\scalebox{1.2}{$C_5$}}] (7) at (8.4,0){$2$};
				\node[draw,circle,thick,scale=1, label=above:{\scalebox{1.2}{$C_7$}}] (8) at (6,1.6){$3$};
				\draw[thick] (0)--(2)--(3)--(4)--(5)--(6)--(7);
				\draw[thick]  (5)--(8);
					\end{tikzpicture}}
					\caption{Singular fiber observed. }
					\label{Fig:Ch4Fib}
					\end{subfigure}
					
\caption{Chamber 4.}
\end{figure}
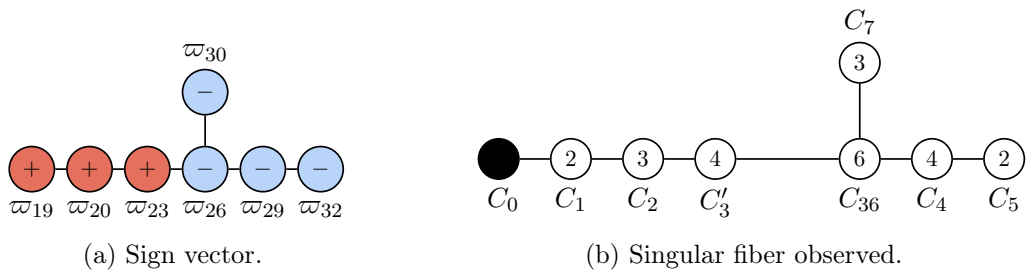

The interior walls are now $\varpi_{23}$ and $\varpi_{26}$ with
\begin{equation}
\varpi_{23}-\alpha_3=\varpi_{26}, \quad \varpi_{23}\cdot \phi>0 , \quad \varpi_{26}\cdot \phi<0.
\end{equation}
We conclude that $\varpi_{23}$ and $-\varpi_{26}$ will correspond to effective extremal curves in this chamber. 
We can also use the expressions for these weights in terms of simple roots: 
\begin{equation}
\begin{cases}
\varpi_{23}=\alpha_3+\frac{1}{2}\alpha_4-\frac{1}{2}\alpha_6+\frac{1}{2}\alpha_7,\\
\varpi_{26}=\frac{1}{2}\alpha_4-\frac{1}{2}\alpha_6+\frac{1}{2}\alpha_7.\\
\end{cases}
\end{equation}
Solving for $\alpha_3$ and $\alpha_6$, we have
\begin{equation}
\alpha_3= \varpi_{23}+(-\varpi_{26}),\quad \alpha_6=2(-\varpi_{20})+\alpha_4 +\alpha_7.
\end{equation}
There is an effective curve $C_{36}$ corresponding to $-\varpi_{26}$ and an effective cure $C'_3$ corresponding to $\varpi_{23}$. 
 We thus have 
\begin{equation}
\begin{cases}
C_3\to C_{36}+C'_3\\
C_6\to 2C_{36}+C_4+C_7. 
\end{cases}
\end{equation}
Our choice of notation is because $C_{36}$ shows up for both $\alpha_3$ and $\alpha_6$ while $C'_3$ only appears in $\alpha_3$, which we see as follows.
Since in  the basis of fundamental weights, we have $-(-\varpi_{26})= 
\boxed{\ $0$\ \ $0$\ $-1$ \ 1 \ 0  \ $-1$ \ \ 1 }$, we deduce  
\begin{equation}
D_3\cdot C_{36}=-1, \quad D_4\cdot C_{36}=1, \quad  D_6 \cdot C_{36}=-1, \quad  D_7 \cdot C_{36}=1, \quad D_i \cdot C_{26}=0, \quad i=1,2,5.
\end{equation}
We also note that by linearity $D_0 \cong - (2D_1+3 D_2 + 4 D_3 + 3 D_4+2D_5+D_6+2D_7)$, hence, 
\begin{equation}
D_0\cdot C_{36}=0.
\end{equation}
The negative intersection numbers imply both $D_3$ and $D_6$ contain $C_{36}$.  Meanwhile from $
-(\varpi_{23})= 
\boxed{\ $0$\ \ 1\ $-1$ \ 0 \ 0  \ $1$ \ \ 0 }$, we deduce the intersections of $C'_3$ which are the negative of those computed for $C_{26}$ above.  Namely,
\begin{equation}
D_2\cdot C'_{3}=-1, \quad D_3\cdot C'_{3}=1, \quad  D_6 \cdot C'_{3}=-1, \quad D_i \cdot C'_{3}=0, \quad i=0,1,4,5,7.
\end{equation}
Finally we find the degeneration
\begin{equation}
C_0+2 C_1 + 3 C_2 + 4C_3 +3 C_4+ 2 C_5 + C_6 + 2  C_7\to 
C_0+2 C_1 + 3 C_2 +4C'_3+ 6C_{36}+4 C_4+ 2 C_5 +3  C_7.
\end{equation}
We get a fiber whose dual graph is the affine Dynkin diagram $\tilde{\text{E}}_8$ with the node corresponding to $\alpha_4$ contracted to a point and the identification: 
\begin{equation}
(C_0, C_{1}, C_2, C'_3,C_{36}, C_4, C_5, C_7)\to (\alpha_0, \alpha_1, \alpha_2, \alpha_4, \alpha_5, \alpha_6, \alpha_7, \alpha_8), 
\end{equation}
with the respective multiplicities $(1,2,3,4,6,4,2,3)$.

\subsection{Ch$_5$}

\vspace{-2em}

\begin{figure}[H]
\centering 
\begin{subfigure}[t]{0.45\textwidth}
\centering
\scalebox{.8}{
\begin{tikzpicture}[scale=.8]
		\node[fill=myred,draw,circle,thick,scale=1,label=below:{\scalebox{1.2}{$\varpi_{19}$}}] (1) at (1.2,0){$+$};
				\node[fill=myred,draw,circle,thick,scale=1,label=below:{\scalebox{1.2}{$\varpi_{20}$}}] (2) at (2.4,0){$+$};
				\node[fill=myred,draw,circle,thick,scale=1,label=below:{\scalebox{1.2}{$\varpi_{23}$}}] (3) at (3.6,0){$+$};
				\node[fill=myred,draw,circle,thick,scale=1,label=below:{\scalebox{1.2}{$\varpi_{26}$}}] (4) at (4.8,0){$+$};
				\node[fill=myblue,draw,circle,thick,scale=1,label=below:{\scalebox{1.2}{$\varpi_{29}$}}] (5) at (6,0){$-$};
				\node[fill=myblue,draw,circle,thick,scale=1,label=below:{\scalebox{1.2}{$\varpi_{32}$}}] (6) at (7.2,0){$-$};
				\node[fill=myblue,draw,circle,thick,scale=1, label=above:{\scalebox{1.2}{$\varpi_{30}$}}] (8) at (4.8,1.6){$-$};
				\draw[thick] (1)--(2)--(3)--(4)--(5)--(6);
				\draw[thick]  (4)--(8);
					\end{tikzpicture}}
\caption{Sign vector.}
\label{Fig:Ch5Sign}
\end{subfigure}
\begin{subfigure}[t]{0.45\textwidth}
\centering
\scalebox{.8}{
\begin{tikzpicture}
				\node[draw,circle,thick,scale=1,fill=black,label=below:{\scalebox{1.2}{ $C_0$}}] (0) at (0,0){$1$};
				\node[draw,circle,thick,scale=1,label=below:{\scalebox{1.2}{$C_{1}$}}] (2) at (1.2,0){$2$};
				\node[draw,circle,thick,scale=1,label=below:{\scalebox{1.2}{$C_{2}$}}] (3) at (2.4,0){$3$};
				\node[draw,circle,thick,scale=1,label=below:{\scalebox{1.2}{$C_{3}$}}] (4) at (3.6,0){$4$};
				\node[draw,circle,thick,scale=1,label=below:{\scalebox{1.2}{$C_{47}$}}] (5) at (4.8,0){$5$};
				\node[draw,circle,thick,scale=1,label=below:{\scalebox{1.2}{$C_{46}$}}] (6) at (7.2,0){$4$};
				\node[draw,circle,thick,scale=1, label=below:{\scalebox{1.2}{$C_5$}}] (7) at (8.4,0){$2$};
				\node[draw,circle,thick,scale=1, label=above:{\scalebox{1.2}{$C_{67}$}}] (8) at (6,1.6){$3$};
				\draw[thick] (0)--(2)--(3)--(4)--(5)--(6)--(7);
				\draw[thick]  (6,0)--(8);
					\end{tikzpicture}}
					\caption{Singular fiber observed. }
					\label{Fig:Ch5Fib}
					\end{subfigure}
					
\caption{Chamber 5.}
\end{figure}
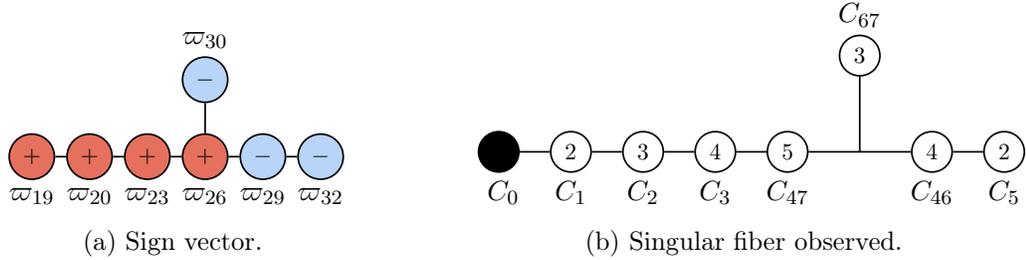

The interior walls are now $\varpi_{26}$, $\varpi_{29}$, and  $\varpi_{30}$  with
\begin{equation}
\varpi_{26}-\alpha_4=\varpi_{29},\quad \varpi_{26}-\alpha_7=\varpi_{30}, \quad \varpi_{26}\cdot \phi>0 , \quad \varpi_{29}\cdot \phi<0, \quad \varpi_{30}\cdot \phi<0.
\end{equation}
We conclude that $\varpi_{26}$, $-\varpi_{29}$, and $-\varpi_{30}$ will correspond to effective extremal curves in this chamber. 
We can also use the expressions for these weights in terms of simple roots: 
\begin{equation}
\begin{cases}
\varpi_{26}=\frac{1}{2}\alpha_4-\frac{1}{2}\alpha_6+\frac{1}{2}\alpha_7,\\
\varpi_{29}=-\frac{1}{2}\alpha_4-\frac{1}{2}\alpha_6+\frac{1}{2}\alpha_7,\\
\varpi_{30}=\frac{1}{2}\alpha_4-\frac{1}{2}\alpha_6-\frac{1}{2}\alpha_7.\\
\end{cases}
\end{equation}
Solving for $\alpha_4$, $\alpha_6$, and $\alpha_7$ we have
\begin{equation}
\alpha_4= \varpi_{26}+(-\varpi_{29}),\quad \alpha_7= \varpi_{26}+(-\varpi_{30}),\quad \alpha_6=(-\varpi_{29})+(-\varpi_{30}).
\end{equation}
There is an effective curve $C_{47}$ corresponding to $\varpi_{26}$, an effective curve $C_{46}$ corresponding to $-\varpi_{29}$, and an effective cure $C_{67}$ corresponding to $-\varpi_{30}$. 
 We thus have 
\begin{equation}
\begin{cases}
C_4\to C_{46}+C_{47}, \\
C_6 \to C_{46}+C_{67},\\
C_{7}\to C_{47}+C_{67}.
\end{cases}
\end{equation}
Our choice of notation is because $C_{46}$ shows up for both $\alpha_4$ and $\alpha_6$, $C_{47}$ shows up for both $\alpha_4$ and $\alpha_7$, and $C_{67}$ shows up for both $\alpha_6$ and $\alpha_7$, which we see as follows.
Since in  the basis of fundamental weights, we have $-(\varpi_{26})= 
\boxed{\ $0$\ \ $0$\ $1$ \ $-1$ \ 0  \ $1$ \ \ $-1$ }$, we deduce  
$$D_3\cdot C_{47}=1, \quad D_4\cdot C_{47}=-1, \quad  D_6 \cdot C_{47}=1, \quad  D_7 \cdot C_{47}=-1, \quad D_i \cdot C_{26}=0, \quad i=0,1,2,5.$$
The negative intersection numbers imply both $D_4$ and $D_7$ contain $C_{47}$.  Meanwhile, from 
$-(-\varpi_{29})= 
\boxed{\ $0$\ \ $0$\ $0$ \ $-1$ \ $1$  \ $-1$ \ \ $1$ }$, we deduce  
$$D_4\cdot C_{46}=-1, \quad D_5\cdot C_{46}=1, \quad  D_6 \cdot C_{46}=-1, \quad  D_7 \cdot C_{46}=1, \quad D_i \cdot C_{57}=0, \quad i=0,1,2,3.$$
The negative intersection numbers imply both $D_4$ and $D_6$ contain $C_{46}$.  Finally, from 
 $
-(-\varpi_{30})= 
\boxed{\ $0$\ \ 0\ $0$ \ $1$ \ $0$  \ $-1$ \ \ $-1$ }$, we deduce 
$$D_4\cdot C_{67}=1,  \quad  D_6 \cdot C_{67}=-1,\quad  D_7 \cdot C_{67}=-1, \quad D_i \cdot C'_{5}=0, \quad i=0,1,2,3,5$$
and the negative intersection numbers imply both $D_6$ and $D_7$ contain $C_{67}$.

In the end, we find the degeneration
$$
C_0+2 C_1 + 3 C_2 + 4C_3 +3 C_4+ 2 C_5 + C_6 + 2  C_7\to 
C_0+2 C_1 + 3 C_2 + 4C_3 +5C_{47}+4 C_{46}+ 2 C_5 +3C_{67}.
$$
We get a fiber whose dual graph is the affine Dynkin diagram $\tilde{\text{E}}_8$ with the node corresponding to $\alpha_5$ contracted to a point and the identification: 
$$
(C_0, C_{1}, C_2, C_3,C_{47}, C_{46}, C_5, C_{67})\to (\alpha_0, \alpha_1, \alpha_2, \alpha_3, \alpha_4, \alpha_6, \alpha_7, \alpha_8), 
$$
with the respective multiplicities $(1,2,3,4,6,4,2,3)$.

\subsection{Ch$_6$}

\vspace{-2em}

\begin{figure}[H]
\centering 
\begin{subfigure}[t]{0.45\textwidth}
\centering
\scalebox{.8}{
\begin{tikzpicture}[scale=.8]
		\node[fill=myred,draw,circle,thick,scale=1,label=below:{\scalebox{1.2}{$\varpi_{19}$}}] (1) at (1.2,0){$+$};
				\node[fill=myred,draw,circle,thick,scale=1,label=below:{\scalebox{1.2}{$\varpi_{20}$}}] (2) at (2.4,0){$+$};
				\node[fill=myred,draw,circle,thick,scale=1,label=below:{\scalebox{1.2}{$\varpi_{23}$}}] (3) at (3.6,0){$+$};
				\node[fill=myred,draw,circle,thick,scale=1,label=below:{\scalebox{1.2}{$\varpi_{26}$}}] (4) at (4.8,0){$+$};
				\node[fill=myred,draw,circle,thick,scale=1,label=below:{\scalebox{1.2}{$\varpi_{29}$}}] (5) at (6,0){$+$};
				\node[fill=myblue,draw,circle,thick,scale=1,label=below:{\scalebox{1.2}{$\varpi_{32}$}}] (6) at (7.2,0){$-$};
				\node[fill=myblue,draw,circle,thick,scale=1, label=above:{\scalebox{1.2}{$\varpi_{30}$}}] (8) at (4.8,1.6){$-$};
				\draw[thick] (1)--(2)--(3)--(4)--(5)--(6);
				\draw[thick]  (4)--(8);
					\end{tikzpicture}}
\caption{Sign vector.}
\label{Fig:Ch6Sign}
\end{subfigure}
\begin{subfigure}[t]{0.45\textwidth}
\centering
\scalebox{.8}{
\begin{tikzpicture}
				\node[draw,circle,thick,scale=1,fill=black,label=below:{\scalebox{1.2}{ $C_0$}}] (0) at (0,0){$1$};
				\node[draw,circle,thick,scale=1,label=below:{\scalebox{1.2}{$C_{1}$}}] (2) at (1.2,0){$2$};
				\node[draw,circle,thick,scale=1,label=below:{\scalebox{1.2}{$C_{2}$}}] (3) at (2.4,0){$3$};
				\node[draw,circle,thick,scale=1,label=below:{\scalebox{1.2}{$C_{3}$}}] (4) at (3.6,0){$4$};
				\node[draw,circle,thick,scale=1,label=below:{\scalebox{1.2}{$C_{4}$}}] (5) at (4.8,0){$5$};
				\node[draw,circle,thick,scale=1,label=below:{\scalebox{1.2}{$C_{57}$}}] (6) at (6,0){$6$};
				\node[draw,circle,thick,scale=1, label=below:{\scalebox{1.2}{$C'_5$}}] (7) at (8.4,0){$2$};
				\node[draw,circle,thick,scale=1, label=above:{\scalebox{1.2}{$C_6$}}] (8) at (6,1.6){$3$};
				\draw[thick] (0)--(2)--(3)--(4)--(5)--(6)--(7);
				\draw[thick]  (6)--(8);
					\end{tikzpicture}}
					\caption{Singular fiber observed.}
					\label{Fig:Ch6Fib}
					\end{subfigure}
					
\caption{Chamber 6.}
\end{figure}
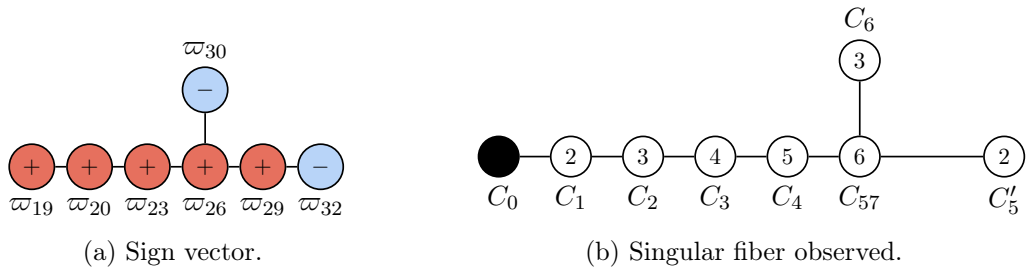

The interior walls are now $\varpi_{29}$ and $\varpi_{32}$ with
\begin{equation}
\varpi_{29}-\alpha_5=\varpi_{32}, \quad \varpi_{29}\cdot \phi>0 , \quad \varpi_{32}\cdot \phi<0.
\end{equation}
We conclude that $\varpi_{29}$ and $-\varpi_{32}$ will correspond to effective extremal curves in this chamber. 
We can also use the expressions for these in terms of simple roots: 
\begin{equation}
\begin{cases}
\varpi_{29}=-\frac{1}{2}\alpha_4-\frac{1}{2}\alpha_6+\frac{1}{2}\alpha_7,\\
\varpi_{32}=-\frac{1}{2}\alpha_4-\alpha_5-\frac{1}{2}\alpha_6+\frac{1}{2}\alpha_7.\\
\end{cases}
\end{equation}
Solving for $\alpha_5$ and $\alpha_7$, we have
\begin{equation}
\alpha_5= \varpi_{29}+(-\varpi_{32}),\quad \alpha_7=2\varpi_{29}+\alpha_4 +\alpha_6.
\end{equation}
There is an effective curve $C_{57}$ corresponding to $\varpi_{29}$ and an effective cure $C'_5$ corresponding to $-\varpi_{32}$. 
 We thus have 
\begin{equation}
\begin{cases}
C_5\to C_{57}+C'_5\\
C_7\to 2C_{57}+C_4+C_6. 
\end{cases}
\end{equation}
Since in  the basis of fundamental weights, we have $-(\varpi_{29})= 
\boxed{\ $0$\ \ $0$\ $0$ \ 1 \ $-1$  \ $1$ \ \ $-1$ }$, we deduce  
\begin{equation}
D_4\cdot C_{57}=1, \quad D_5\cdot C_{57}=-1, \quad  D_6 \cdot C_{57}=1, \quad  D_7 \cdot C_{57}=-1, \quad D_i \cdot C_{57}=0, \quad i=1,2,3.
\end{equation}
We also note that by linearity $D_0 \cong - (2D_1+3 D_2 + 4 D_3 + 3 D_4+2D_5+D_6+2D_7)$, hence, 
\begin{equation}
D_0\cdot C_{57}=0.
\end{equation}
The negative intersection numbers imply both $D_5$ and $D_7$ contain $C_{57}$.  Meanwhile from $
-(-\varpi_{32})= 
\boxed{\ $0$\ \ 0\ $0$ \ 0 \ $-1$  \ $0$ \ \ 1 }$, we deduce the intersection of $C'_5$, namely
\begin{equation}
D_5\cdot C'_{5}=-1,  \quad  D_7 \cdot C'_{5}=1, \quad D_i \cdot C'_{5}=0, \quad i=1,2,3,4,6,
\end{equation}
and, by linearity,
\begin{equation}
D_0\cdot C_{57}=0.
\end{equation}
Finally we find the degeneration
\begin{equation}
C_0+2 C_1 + 3 C_2 + 4C_3 +3 C_4+ 2 C_5 + C_6 + 2  C_7\to 
C_0+2 C_1 + 3 C_2 + 4C_3 +5 C_4+ 6C_{57}+2C'_5 + 3C_6. 
\end{equation}
We get a fiber whose dual graph is the affine Dynkin diagram $\tilde{\text{E}}_8$ with the node corresponding to $\alpha_6$ contracted to a point and the identification: 
\begin{equation}
(C_0, C_{1}, C_2, C_3,C_{4}, C_{57}, C'_5, C_6)\to (\alpha_0, \alpha_1, \alpha_2,\alpha_3, \alpha_4, \alpha_5, \alpha_7, \alpha_8), 
\end{equation}
with the respective multiplicities $(1,2,3,4,5,6,2,3)$.

\subsection{Ch$_7$}

\vspace{-2em}

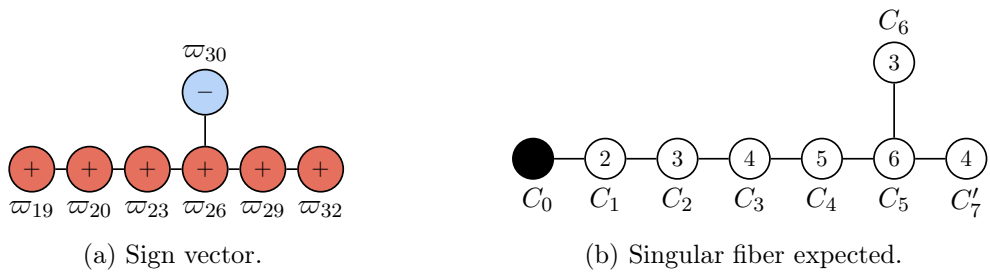
\begin{figure}[H]
\centering 
\begin{subfigure}[t]{0.45\textwidth}
\centering
\scalebox{.8}{
\begin{tikzpicture}[scale=.8]
		\node[fill=myred,draw,circle,thick,scale=1,label=below:{\scalebox{1.2}{$\varpi_{19}$}}] (1) at (1.2,0){$+$};
				\node[fill=myred,draw,circle,thick,scale=1,label=below:{\scalebox{1.2}{$\varpi_{20}$}}] (2) at (2.4,0){$+$};
				\node[fill=myred,draw,circle,thick,scale=1,label=below:{\scalebox{1.2}{$\varpi_{23}$}}] (3) at (3.6,0){$+$};
				\node[fill=myred,draw,circle,thick,scale=1,label=below:{\scalebox{1.2}{$\varpi_{26}$}}] (4) at (4.8,0){$+$};
				\node[fill=myred,draw,circle,thick,scale=1,label=below:{\scalebox{1.2}{$\varpi_{29}$}}] (5) at (6,0){$+$};
				\node[fill=myred,draw,circle,thick,scale=1,label=below:{\scalebox{1.2}{$\varpi_{32}$}}] (6) at (7.2,0){$+$};
				\node[fill=myblue,draw,circle,thick,scale=1, label=above:{\scalebox{1.2}{$\varpi_{30}$}}] (8) at (4.8,1.6){$-$};
				\draw[thick] (1)--(2)--(3)--(4)--(5)--(6);
				\draw[thick]  (4)--(8);
					\end{tikzpicture}}
\caption{Sign vector.}
\label{Fig:Ch7Sign}
\end{subfigure}
\begin{subfigure}[t]{0.45\textwidth}
\centering
\scalebox{.8}{
\begin{tikzpicture}
				\node[draw,circle,thick,scale=1,fill=black,label=below:{\scalebox{1.2}{ $C_0$}}] (0) at (0,0){$1$};
				\node[draw,circle,thick,scale=1,label=below:{\scalebox{1.2}{$C_{1}$}}] (2) at (1.2,0){$2$};
				\node[draw,circle,thick,scale=1,label=below:{\scalebox{1.2}{$C_{2}$}}] (3) at (2.4,0){$3$};
				\node[draw,circle,thick,scale=1,label=below:{\scalebox{1.2}{$C_{3}$}}] (4) at (3.6,0){$4$};
				\node[draw,circle,thick,scale=1,label=below:{\scalebox{1.2}{$C_{4}$}}] (5) at (4.8,0){$5$};
				\node[draw,circle,thick,scale=1,label=below:{\scalebox{1.2}{$C_{5}$}}] (6) at (6,0){$6$};
				\node[draw,circle,thick,scale=1,label=below:{\scalebox{1.2}{$C'_7$}}] (7) at (7.2,0){$4$};
				\node[draw,circle,thick,scale=1, label=above:{\scalebox{1.2}{$C_6$}}] (8) at (6,1.6){$3$};
				\draw[thick] (0)--(2)--(3)--(4)--(5)--(6)--(7);
				\draw[thick]  (6)--(8);
					\end{tikzpicture}}
					\caption{Singular fiber expected.}
					\label{Fig:Ch7Fib}
					\end{subfigure}
					
\caption{Chamber 7.}
\end{figure}

The interior wall is now $\varpi_{32}$ with
\begin{equation}
\quad \varpi_{32}\cdot \phi>0. 
\end{equation}
We conclude that $\varpi_{32}$ will correspond to an effective extremal curve in this chamber. 
Recalling its expression in terms of simple roots: 
\begin{equation}
\varpi_{32}=-\frac{1}{2}\alpha_4-\alpha_5-\frac{1}{2}\alpha_6+\frac{1}{2}\alpha_7,\\
\end{equation}
we have
\begin{equation}
 \alpha_7=2\varpi_{32}+\alpha_4+2\alpha_5 +\alpha_6.
\end{equation}
There is an effective curve $C'_{7}$ corresponding to $\varpi_{32}$, and we have
\begin{equation}
C_7\to C_4+2C_5+C_6+2C'_{7}. 
\end{equation}
Since in  the basis of fundamental weights, we have $
-(\varpi_{32})= 
\boxed{\ $0$\ \ 0\ $0$ \ 0 \ $1$  \ $0$ \ \ $-1$ }$, we deduce the intersections of $C'_7$, namely
\begin{equation}
{ D_5\cdot C'_{7}=1,  \quad  D_7 \cdot C'_{7}=-1, \quad D_i \cdot C'_{7}=0, \quad i=0,1,2,3,4,6,}
\end{equation}
which are the negative of those found for $C'_{5}$ in the previous chamber.

Finally, we find the degeneration
\begin{equation}
C_0+2 C_1 + 3 C_2 + 4C_3 +3 C_4+ 2 C_5 + C_6 + 2  C_7\to 
C_0+2 C_1 + 3 C_2 + 4C_3 +5 C_4+ 6 C_5 +4C'_{7}+ 3C_6.
\end{equation}
We get a fiber whose dual graph is the affine Dynkin diagram $\tilde{\text{E}}_8$ with the node corresponding to $\alpha_7$ contracted to a point and the identification: 
\begin{equation}
(C_0, C_{1}, C_2, C_3,C_{4}, C_{5}, C'_7, C_6)\to (\alpha_0, \alpha_1, \alpha_2,\alpha_3, \alpha_4, \alpha_5, \alpha_6, \alpha_8), 
\end{equation}
with the respective multiplicities $(1,2,3,4,5,6,4,3)$.

\subsection{Ch$_8$}

\vspace{-2em}

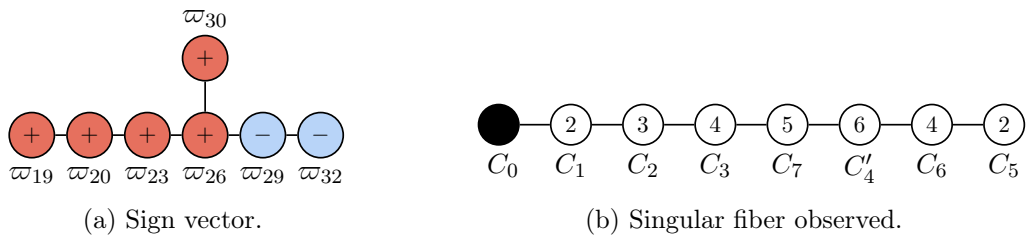
\begin{figure}[H]
\centering 
\begin{subfigure}[t]{0.45\textwidth}
\centering
\scalebox{.8}{
\begin{tikzpicture}[scale=.8]
		\node[fill=myred,draw,circle,thick,scale=1,label=below:{\scalebox{1.2}{$\varpi_{19}$}}] (1) at (1.2,0){$+$};
				\node[fill=myred,draw,circle,thick,scale=1,label=below:{\scalebox{1.2}{$\varpi_{20}$}}] (2) at (2.4,0){$+$};
				\node[fill=myred,draw,circle,thick,scale=1,label=below:{\scalebox{1.2}{$\varpi_{23}$}}] (3) at (3.6,0){$+$};
				\node[fill=myred,draw,circle,thick,scale=1,label=below:{\scalebox{1.2}{$\varpi_{26}$}}] (4) at (4.8,0){$+$};
				\node[fill=myblue,draw,circle,thick,scale=1,label=below:{\scalebox{1.2}{$\varpi_{29}$}}] (5) at (6,0){$-$};
				\node[fill=myblue,draw,circle,thick,scale=1,label=below:{\scalebox{1.2}{$\varpi_{32}$}}] (6) at (7.2,0){$-$};
				\node[fill=myred,draw,circle,thick,scale=1, label=above:{\scalebox{1.2}{$\varpi_{30}$}}] (8) at (4.8,1.6){$+$};
				\draw[thick] (1)--(2)--(3)--(4)--(5)--(6);
				\draw[thick]  (4)--(8);
					\end{tikzpicture}}
\caption{Sign vector.}
\label{Fig:Ch8Sign}
\end{subfigure}
\begin{subfigure}[t]{0.45\textwidth}
\centering
\scalebox{.8}{
\begin{tikzpicture}
				\node[draw,circle,thick,scale=1,fill=black,label=below:{\scalebox{1.2}{ $C_0$}}] (0) at (0,0){$1$};
				\node[draw,circle,thick,scale=1,label=below:{\scalebox{1.2}{$C_{1}$}}] (2) at (1.2,0){$2$};
				\node[draw,circle,thick,scale=1,label=below:{\scalebox{1.2}{$C_{2}$}}] (3) at (2.4,0){$3$};
				\node[draw,circle,thick,scale=1,label=below:{\scalebox{1.2}{$C_{3}$}}] (4) at (3.6,0){$4$};
				\node[draw,circle,thick,scale=1,label=below:{\scalebox{1.2}{$C_{7}$}}] (5) at (4.8,0){$5$};
				\node[draw,circle,thick,scale=1,label=below:{\scalebox{1.2}{$C'_{4}$}}] (6) at (6,0){$6$};
				\node[draw,circle,thick,scale=1,label=below:{\scalebox{1.2}{$C_6$}}] (7) at (7.2,0){$4$};
				\node[draw,circle,thick,scale=1, label=below:{\scalebox{1.2}{$C_5$}}] (8) at (8.4,0){$2$};
				\draw[thick] (0)--(2)--(3)--(4)--(5)--(6)--(7)--(8);
					\end{tikzpicture}}
					\caption{Singular fiber observed.}
					\label{Fig:Ch8Fib}
					\end{subfigure}
					
\caption{Chamber 8.}
\end{figure}

Chamber 8 is only adjacent to chamber 5 and the two chambers are separated by the hyperplane  $\varpi_{30}^\bot$ (as seen in Figure \ref{Figure:IG}). In particular, chamber 8 is characterized by: 
\begin{equation}
\quad \varpi_{30}\cdot \phi=\phi_4-\phi_6-\phi_7>0. 
\end{equation}
{We conclude that in this chamber, the weight $\varpi_{30}$}  will correspond to an effective extremal curve. 
Recalling its expression in terms of simple roots: 
\begin{equation}
\varpi_{30}=\frac{1}{2}\alpha_4-\frac{1}{2}\alpha_6-\frac{1}{2}\alpha_7,
\end{equation}
we have
\begin{equation}
 \alpha_4=2\varpi_{30} +\alpha_6+\alpha_7.
\end{equation}
There is an effective curve $C'_{4}$ corresponding to $\varpi_{30}$, and we have
\begin{equation}
C_4\to C_6+C_7+2C'_{4}. 
\end{equation}
Since in  the basis of fundamental weights, we have $
-(\varpi_{30})= 
\boxed{\ $0$\ \ 0\ $0$ \ $-1$ \ $0$  \ $1$ \ \ $1$ }$, we deduce the intersections of $C'_4$, namely
\begin{equation}
D_4\cdot C'_{4}=-1,  \quad  D_6 \cdot C'_{4}=1,\quad  D_7 \cdot C'_{4}=1, \quad D_i \cdot C'_{4}=0, \quad i=1,2,3,5,
\end{equation}
and by linearity
$D_0 \cong - (2D_1+3 D_2 + 4 D_3 + 3 D_4+2D_5+D_6+2D_7)$
hence, 
\begin{equation}
D_0\cdot C'_{4}=0.
\end{equation}
Finally we find the degeneration
 \begin{equation}
C_0+2 C_1 + 3 C_2 + 4C_3 +3 C_4+ 2 C_5 + C_6 + 2  C_7\to 
C_0+2 C_1 + 3 C_2 + 4C_3 +5C_7+6C'_{4} +4 C_6+ 2 C_5. 
\end{equation}
We get a fiber whose dual graph is the affine Dynkin diagram $\tilde{\text{E}}_8$ with the node corresponding to $\alpha_8$ contracted to a point and the identification: 
\begin{equation}
(C_0, C_{1}, C_2, C_3,C_{7}, C'_{4}, C_6, C_5)\to (\alpha_0, \alpha_1, \alpha_2,\alpha_3, \alpha_4, \alpha_5, \alpha_6, \alpha_7), 
\end{equation}
with the respective multiplicities $(1,2,3,4,5,6,4,2)$.

\section{Triple intersection computations}\label{intpoly}
Here we compute the triple intersection polynomials in each chamber for which we have an explicit resolution of singularities.

\subsection{Y$_4$}

$Y_4$ is  the proper transform of the Weierstrass model of equation 
\eqref{eq:E7} after the blowups leading to $X_7^{''}$ in~(\ref{eq:finalbl}).  The result is
\begin{equation}\label{eq:Y4}
Y_4:\quad\quad
e_3 e_5 e_6 y^2-e_1 e_2 e_4 (b e_1^2 e_3^2 e_4 e_6 e_7^2 s^5 + a e_1 e_3 s^3 x + e_2 e_5 x^3) =0,
\end{equation}
where the relative projective coordinates are
\begin{align}\label{eq:Y4projcoord}
\begin{aligned}
[e_2 e_4 e_5 e_6 e_7^2 x :e_2 e_3 e_4^2 e_5^2 e_6^3 e_7^5 y : s]\   [x: e_3 e_4 e_5 e_6^2 e_7^3 y:e_1 e_3 e_4 e_6 e_7^2 ]\\
   [e_5 e_6 e_7 y :e_1 ]\  
   [e_2 e_5 :e_3]\   [e_6 e_7 y :e_2]\  [y :e_4 e_7] \  [e_4:e_6 ].
  \end{aligned}
\end{align}
 The total transform of $s$ is $se_1 e_2 e_3 e_4^2 e_5 e_6^2 e_7^4 $ and we have the following fibral divisors 
\begin{equation}
\begin{cases}\label{div4}
1\   D_0: &\quad s=  e_3 e_6 y^2-e_1 e_2^2 e_4  x^3 =0,\\
2\  D_{1}: &\quad  e_1=e_3=0,\\
3\  D_{2}: &\quad e_3=e_4=0,\\
4\  D_{3}: &\quad e_7=e_3 e_5 e_6 y^2-e_1 e_2 e_4 (a e_1 e_3 s^3 x + e_2 e_5 x^3) =0,\\
3\  D_4: &\quad   e_4=e_5 =0,\\
2\  D_{5}: & \quad e_2=e_5=0, \\
    1\    D_6 : &\quad e_5=b e_1 e_3 e_4 e_6 e_7^2 s^2 + a x =0,\\
       2\   D_7 : &\quad   e_6=a e_1 e_3 s^3  + e_2 e_5 x^2=0.
\end{cases}
\end{equation}
The classes of the Cartier divisors defined by the zero loci of $s,x,y,e_i$ ($i=1,\cdots, 7$)  are 
\begin{equation}
\begin{cases}
[s]=S-E_1, \quad 
[x]=H+2L-E_1-E_2,\quad 
[y]=H+3L-E_1-E_2-E_3-E_5-E_6,\\
[e_1]=E_1-E_2-E_3, \quad  
[e_2]=E_2-E_4-E_5, \quad
[e_3]=E_3-E_4,\\
[e_4]=E_4-E_6-E_7, \quad 
[e_5]=E_5, \quad 
[e_6]=E_6-E_7, \quad 
[e_7]=E_7,
\end{cases}
\end{equation}
where E$_i$ is the total transform of the $i^{th}$ exceptional divisor and $[e_i]$ is the proper transform of the $i^{th}$ exceptional divisor. 

We have the linear relations
\be\begin{array}{c}
~[e_1]=D_1,~~[e_2]=D_5,~~[e_3]=D_1+D_2,~~[e_4]=D_2+D_4,\\
~[e_5]=D_4+D_5+D_6,~~[e_6]=D_7,~~[e_7]=D_3,
\end{array}\ee
and can thus solve for the $D_i$ in terms of the $E_i$ to get\footnote{Were we to only use the sequence of blowups described in~(\ref{Ch458}) and~(\ref{Ch456}) without the additional~(\ref{addbl}) one would not be able to invert the equations for $E_i$ in terms of the $D_i$.}
\begin{align}\label{eq:divclass}
\begin{cases}
D_0\quad  & =S-E_1, \\
D_1 \quad &  =  E_1-E_2-E_3, \\ 
D_2 \quad & = -E_1+E_2+2E_3-E_4, \\ 
D_3 \quad & =E_7, \\ 
D_4 \quad &  =E_1-E_2-2E_3+2E_4-E_6-E_7,  \\ 
D_5 \quad  & = E_2-E_4-E_5,\\
D_6 \quad & =-E_1+2E_3-E_4+2E_5+E_6+E_7,   \\ 
D_7 \quad & =E_6-E_7.
\end{cases}
\end{align}
Now that we have the  classes of the fibral divisors, the sequence of blowups and the pushforward theorems will be enough to compute the triple intersection numbers.

The triple intersection polynomial is by definition 
\begin{equation}
F= \int (\sum_{a=0}^7D_a \phi_a)^3 [Y]=\int_B \pi_* f_{1*} f_{2*} f_{3*} f_{4*} f_{5*} f_{6*} f_{7*} f_{8*} \Big[\big({\sum_{a=0}^7D_a \phi_a}\big)^3 [Y]\Big],
\end{equation}
where $f_i$ is the $i^{th}$ blowup and $\pi: X_0=\mathbb{P}[\mathscr{O}_B\oplus\mathscr{L}^{\otimes 2}\oplus \mathscr{L}^{\otimes 3}]\to B$ is the map defining the projective bundle. 
Noting that
\be
[Y_4]=3H+6L-2E_1-2E_2-E_3-E_4-E_5-E_6-E_7,
\ee
we can use the pushforward theorems from Section \ref{Sec:Intersection} to get:
\begin{equation}
\begin{aligned}
F_4(\phi)= &\phantom{+}   
 4 S(L-S)  (\phi _0^3 +  \phi _1^3+\phi _2^3+\phi _4^3+ \phi _5^3+ \phi _7^3)-S^2 \phi _3^3    +2 S(S-2 L)  \phi _6^3 \\
&+3S (-2 L + S) \phi _0^2 \phi _1+ 3 LS\phi _0\phi _1^2 +
3  S (-3 L + 2 S) \phi _1^2 \phi _2+3 S(2 L - S)  \phi_1 \phi_2^2\\
& +3S (3 S-4 L)( \phi _3 \phi _2^2 +\phi_3^2 \phi_6+\phi_3 \phi_6^2+2  \phi_4^2 \phi_6+2\phi_6\phi_7^2)\\
& +3S (3 L-2 S)\phi _3^2 \phi _2+3S(9 S-11 L)  \phi _5^2 \phi _6 +6 S(5 L-4 S) \phi _5  \phi _6^2\\
&+3 S (S-L) ( \phi _3^2\phi _4+\phi _3^2\phi _7+2 \phi _4^2 \phi _5- \phi _4 \phi _5^2)\\
& -6S (3 S-4 L)( \phi_3 \phi_4 \phi_6+\phi_4 \phi_5 \phi_6 +\phi_3 \phi_6 \phi_7).
\end{aligned}
\end{equation}

\subsection{Y$_5$}

$Y_5$ is  the proper transform of the Weierstrass model of equation 
\eqref{eq:E7} after the blowups leading to  $X_7^{+}$  in~(\ref{eq:finalbl}).  The result is
\begin{equation}\label{eq:Y5}
Y_5:\quad\quad
e_3 e_5 e_6 y^2-e_1 e_2 e_4 (b e_1^2 e_3^2 e_4 e_5 e_7 s^5 + a e_1 e_3 s^3 x + e_2 e_6 e_7 x^3) =0,
\end{equation}
where the relative projective coordinates are
\begin{align}\label{eq:Y5projcoord}
\begin{aligned}
[e_2 e_4 e_5 e_6 e_7^2 x :e_2 e_3 e_4^2 e_5^3 e_6^2 e_7^4 y : s]\   [x: e_3 e_4 e_5^2 e_6 e_7^2 y:e_1 e_3 e_4 e_5 e_7 ]\\
   [e_5 e_6 e_7 y :e_1 ]\  
   [e_2 e_6 e_7 :e_3]\   [e_6  y :e_4]\  [y :e_2 e_7] \  [e_2:e_5 ].
  \end{aligned}
\end{align}
 The total transform of $s$ is $se_1 e_2 e_3 e_4^2 e_5^2 e_6 e_7^3 $ and we have the following fibral divisors 
\begin{equation}
\begin{cases}\label{div5}
1\   D_0: &\quad s=  e_3 e_5 e_6 y^2-e_1 e_2^2 e_4 e_6 e_7 x^3 =0,\\
2\  D_{1}: &\quad  e_1=e_3=0,\\
3\  D_{2}: &\quad e_3=e_4=0,\\
4\  D_{3}: &\quad e_4=e_5 =0, \\
3\  D_4: &\quad  e_7=e_5 e_6 y^2-a e_1^2 e_2  e_4 s^3 x =0,  \\
2\  D_{5}: & \quad e_2=e_6=0, \\
    1\    D_6 : &\quad e_6=b e_1 e_3 e_4 e_5 e_7 s^2 + a  x =0,\\
       2\   D_7 : &\quad   e_5=a e_1 e_3 s^3 + e_2 e_6 e_7 x^2=0.

\end{cases}
\end{equation}
The classes of the Cartier divisors defined by the zero loci of the variables $s,x,y,e_i$  are 
\begin{equation}
\begin{cases}
[s]=S-E_1, \quad 
[x]=H+2L-E_1-E_2,\quad 
[y]=H+3L-E_1-E_2-E_3-E_5-E_6,\\
[e_1]=E_1-E_2-E_3, \quad  
[e_2]=E_2-E_4-E_6-E_7, \quad
[e_3]=E_3-E_4,\\
[e_4]=E_4-E_5, \quad 
[e_5]=E_5-E_7, \quad 
[e_6]=E_6, \quad 
[e_7]=E_7,
\end{cases}
\end{equation}
where E$_i$ is the total transform of the $i^{th}$ exceptional divisor and $[e_i]$ is the proper transform of the $i^{th}$ exceptional divisor.

We have the linear relations
\be\begin{array}{c}
~[e_1]=D_1,~~[e_2]=D_5,~~[e_3]=D_1+D_2,~~[e_4]=D_2+D_3,\\
~[e_5]=D_3+D_7,~~[e_6]=D_5+D_6,~~[e_7]=D_4,
\end{array}\ee
and can thus solve for the $D_i$ in terms of the $E_i$ to get
\begin{align}\label{eq:divclass5}
\begin{cases}
D_0\quad  & =S-E_1, \\
D_1 \quad &  =  E_1-E_2-E_3, \\ 
D_2 \quad & =-E_1+E_2+2E_3-E_4, \\ 
D_3 \quad & =E_1-E_2-2E_3+2E_4 -E_5,\\ 
D_4 \quad &  =E_7,  \\ 
D_5 \quad  & = E_2-E_4-E_6-E_7,\\
D_6 \quad & =-E_2+E_4+2E_6+E_7,   \\ 
D_7 \quad & =-E_1+E_2+2E_3-2E_4 +2E_5-E_7.
\end{cases}
\end{align}
Now that we have the  classes of the fibral divisors, the sequence of blowups and the pushforward theorems will be enough to compute the triple intersection numbers. 
Noting that
\be
[Y_5]=3H+6L-2E_1-2E_2-E_3-E_4-E_5-E_6-E_7,
\ee
we can use the pushforward theorems from Section \ref{Sec:Intersection} to get: 
\begin{align}
\begin{aligned}
F_5(\phi)=& 4 S(L-S)( \phi _0^3 + \phi _1^3 +\phi_2^3+ \phi _3^3+ \phi _5^3)-S^2 (\phi _4^3+ \phi _6^3 +\phi _7^3)\\
&
+3S(3 S-4 L)( \phi _3 \phi _2^2   -\phi _3 \phi _4^2+ \phi _4 \phi _6^2+ \phi _4 \phi _7^2+ \phi _6 \phi _7^2+ \phi _4^2 \phi _6+ \phi _6^2 \phi _7-\phi _3 \phi _7^2 +\phi _7 \phi _4^2)\\
&+
3 S \phi _1 \phi _0^2 (S-2 L)+3L S \phi _1^2 \phi _0+3S (3 L-2 S)( \phi _2 \phi _3^2- \phi _1^2 \phi _2)
\\
&+ 3S(5 L-4 S)( 2 \phi _5 \phi _6^2 - \phi _3^2 \phi _4 - \phi _3^2 \phi _7 )+3
S(2 L-S) \phi _1 \phi _2^2 +3  S  (9 S-11 L)\phi _5^2 \phi _6
\\
&  +3S(S-L)(2 \phi _4^2 \phi _5 - \phi _4 \phi _5^2)
 + 6 S(4L-3S) (\phi _4 \phi _5 \phi _6+\phi _3 \phi _4 \phi _7+\phi _4 \phi _6 \phi _7).
\end{aligned}
\end{align}

\subsection{Y$_6$}

$Y_6$ is  the proper transform of the Weierstrass model of equation 
\eqref{eq:E7} after the blowups leading to  $X_7^{-}$   in~(\ref{eq:finalbl}).  The result is
\begin{equation}\label{eq:Y6}
Y_6:\quad\quad
e_3 e_5 e_6 y^2-e_1 e_2 e_4 (b e_1^2 e_3^2 e_4 e_5 s^5 + a e_1 e_3 s^3 x + e_2 e_6 e_7^2 x^3) =0,
\end{equation}
where the relative projective coordinates are
\begin{align}\label{eq:Y4projcoord}
\begin{aligned}
[e_2 e_4 e_5 e_6 e_7^2 x :e_2 e_3 e_4^2 e_5^3 e_6^2 e_7^3 y : s]\   [x: e_3 e_4 e_5^2 e_6 e_7 y:e_1 e_3 e_4 e_5  ]\\
   [e_5 e_6 e_7 y :e_1 ]\  
   [e_2 e_6 e_7^2 :e_3]\   [e_6 e_7 y :e_4]\  [y :e_2 e_7] \  [e_2:e_6 ].
  \end{aligned}
\end{align}
 The total transform of $s$ is $se_1 e_2 e_3 e_4^2 e_5^2 e_6 e_7^2 $ and we have the following fibral divisors 
\begin{equation}
\begin{cases}\label{div4}
1\   D_0: &\quad s=  e_3 e_5 y^2-e_1 e_2^2 e_4 e_7^2 x^3 =0,\\
2\  D_{1}: &\quad  e_1=e_3=0,\\
3\  D_{2}: &\quad e_3=e_4=0,\\
4\  D_{3}: &\quad e_4=e_5 =0, \\
3\  D_4: &\quad  e_2=e_5=0,   \\
2\  D_{5}: & \quad  e_7=e_5 e_6 y^2-a e_1^2 e_2  e_4 s^3 x-be_1^3e_2e_3e_4^2e_5s^5 =0,\\
    1\    D_6 : &\quad e_6=b e_1 e_3 e_4 e_5  s^2 + a  x =0,\\
       2\   D_7 : &\quad   e_5=a e_1 e_3 s^3 + e_2 e_6 e_7^2 x^2=0.
\end{cases}
\end{equation}
The classes of the Cartier divisors defined by the zero loci of the variables $s,x,y,e_i$  are 
\begin{equation}
\begin{cases}
[s]=S-E_1, \quad 
[x]=H+2L-E_1-E_2,\quad 
[y]=H+3L-E_1-E_2-E_3-E_5-E_6,\\
[e_1]=E_1-E_2-E_3, \quad  
[e_2]=E_2-E_4-E_6-E_7, \quad
[e_3]=E_3-E_4,\\
[e_4]=E_4-E_5, \quad 
[e_5]=E_5, \quad 
[e_6]=E_6-E_7, \quad 
[e_7]=E_7,
\end{cases}
\end{equation}
where E$_i$ is the total transform of the $i^{th}$ exceptional divisor and $[e_i]$  is the proper transform of the $i^{th}$ exceptional divisor. 

We have the linear relations
\be\begin{array}{c}
~[e_1]=D_1,~~[e_2]=D_4,~~[e_3]=D_1+D_2,~~[e_4]=D_2+D_3,\\
~[e_5]=D_3+D_4+D_7,~~[e_6]=D_6,~~[e_7]=D_5,
\end{array}\ee
and can thus solve for the $D_i$ in terms of the $E_i$ to get
\begin{align}\label{eq:divclass}
\begin{cases}
[D_0]\quad  & =S-E_1, \\
[D_1] \quad &  =  E_1-E_2-E_3, \\ 
[D_2] \quad & =-E_1+E_2+2E_3-E_4, \\ 
[D_3] \quad & =E_1-E_2-2E_3+2E_4-E_5,\\ 
[D_4] \quad &  =E_2-E_4-E_6-E_7, \\ 
[D_5] \quad  & = E_7,\\
[D_6] \quad & =E_6-E_7,   \\ 
[D_7] \quad & =-E_1+2E_3-E_4+2E_5+E_6+E_7.
\end{cases}
\end{align}
Now that we have the  classes of the fibral divisors, the sequence of blowups and the pushforward theorems will be enough to compute the triple intersection numbers. 
Noting that
\be
[Y_6]=3H+6L-2E_1-2E_2-E_3-E_4-E_5-E_6-E_7,
\ee
we can use the pushforward theorems from Section \ref{Sec:Intersection} to get: 
\begin{equation}
\begin{aligned}
F_6(\phi) = &\phantom{+}     
4 S  (L-S)( \phi _0^3+ \phi _1^3 +\phi _2^3 + \phi _3^3 + \phi _4^3 + \phi _6^3)-S^2 \phi _5^3-2 S  (2 L-S)\phi _7^3\\
& +6S(4L-3S) ( \phi _3 \phi _4 \phi _7+ \phi _4 \phi _5 \phi _7 + \phi _5 \phi _6 \phi _7)+ 
3 L S \phi _0 \phi _1^2\\
& +3 S  (3 S-4 L) (\phi _3 \phi _2^2-\phi _3 \phi _4^2-\phi _3 \phi _7^2+\phi _5 \phi _7^2+2 \phi _4^2 \phi _7+\phi _5^2 \phi _7+2 \phi _6^2 \phi _7) \\
& +
3S  (S-2 L) (\phi _1 \phi _0^2- \phi _1 \phi _2^2)
+3S(3 L-2 S) (\phi _2 \phi _3^2- \phi _1^2 \phi _2)+3S  (6 S-7 L)\phi _5^2 \phi _6\\
& +3S(5 L-4 S)(\phi _4 \phi _5^2- \phi _3^2 \phi _7 - \phi _3^2 \phi _4)
+3S(6L-5S) (\phi _5 \phi _6^2- \phi _4^2 \phi _5). 
\end{aligned}
\end{equation}

\subsection{Y$_8$}

$Y_8$ is  the proper transform of the Weierstrass model of equation 
\eqref{eq:E7} after the blowups leading to  $X_7^{'}$   in~(\ref{eq:finalbl}).  The result is
\begin{equation}\label{eq:Y8}
Y_8:\quad\quad
e_3 e_5 e_7 y^2-e_1 e_2 e_4 (b e_1^2 e_3^2 e_4 e_5 e_6 s^5 + a e_1 e_3 s^3 x + e_2 e_6 e_7 x^3) =0,
\end{equation}
where the relative projective coordinates are
\begin{align}\label{eq:Y4projcoord}
\begin{aligned}
[e_2 e_4 e_5 e_6^2 e_7 x :e_2 e_3 e_4^2 e_5^3 e_6^4 e_7^2 y : s]\   [x: e_3 e_4 e_5^2 e_6^2 e_7 y:e_1 e_3 e_4 e_5 e_6 ]\\
   [e_5 e_6 e_7 y :e_1 ]\  
   [e_2 e_6 e_7 :e_3]\   [e_7  y :e_4]\ [e_2e_7:e_5]\ [y :e_2 ].
  \end{aligned}
\end{align}
 The total transform of $s$ is $se_1 e_2 e_3 e_4^2 e_5^2 e_6^3 e_7 $ and we have the following fibral divisors 
\begin{equation}
\begin{cases}\label{div4}
1\   D_0: &\quad s=  e_3 e_5  y^2-e_1 e_2^2 e_4 e_6  x^3 =0,\\
2\  D_{1}: &\quad  e_1=e_3=0,\\
3\  D_{2}: &\quad e_3=e_4=0,\\
4\  D_{3}: &\quad e_4=e_5 =0, \\
3\  D_4: &\quad  e_6=e_5 e_7 y^2-a e_1^2 e_2  e_4 s^3 x =0,  \\
2\  D_{5}: & \quad e_2=e_7=0, \\
    1\    D_6 : &\quad e_7=b e_1 e_3 e_4 e_5 e_6 s^2 + a  x =0,\\
       2\   D_7 : &\quad   e_5=a e_1 e_3 s^3 + e_2 e_6 e_7 x^2=0.

\end{cases}
\end{equation}
The classes of the Cartier divisors defined by the zero loci of the variables $s,x,y,e_i$  are 
\begin{equation}
\begin{cases}
[s]=S-E_1, \quad 
[x]=H+2L-E_1-E_2,\quad 
[y]=H+3L-E_1-E_2-E_3-E_5-E_7,\\
[e_1]=E_1-E_2-E_3, \quad  
[e_2]=E_2-E_4-E_6-E_7, \quad
[e_3]=E_3-E_4,\\
[e_4]=E_4-E_5, \quad 
[e_5]=E_5-E_6, \quad 
[e_6]=E_6, \quad 
[e_7]=E_7,
\end{cases}
\end{equation}
where E$_i$ is the total transform of the $i^{th}$exceptional divisor and $[e_i]$ is the proper transform of the $i^{th}$ exceptional divisor.

We have the linear relations
\be\begin{array}{c}
~[e_1]=D_1,~~[e_2]=D_5,~~[e_3]=D_1+D_2,~~[e_4]=D_2+D_3,\\
~[e_5]=D_3+D_7,~~[e_6]=D_4,~~[e_7]=D_5+D_6,
\end{array}\ee
and can thus solve for the $D_i$ in terms of the $E_i$ to get
\begin{align}\label{eq:divclass}
\begin{cases}
[D_0]\quad  & =S-E_1, \\
[D_1] \quad &  =  E_1-E_2-E_3, \\ 
[D_2] \quad & =-E_1+E_2+2E_3-E_4, \\ 
[D_3] \quad & =E_1-E_2-2E_3+2E_4 -E_5,\\ 
[D_4] \quad &  =E_6,  \\ 
[D_5] \quad  & = E_2-E_4-E_6-E_7,\\
[D_6] \quad & =-E_2+E_4+E_6+2E_7,  \\ 
[D_7] \quad & =-E_1+E_2+2E_3-2E_4 +2E_5 -E_6. 
\end{cases}
\end{align}
Now that we have the  classes of the fibral divisors, the sequence of blowups and the pushforward theorems will be enough to compute the triple intersection numbers. 
Noting that
\be
[Y_8]=3H+6L-2E_1-2E_2-E_3-E_4-E_5-E_6-E_7,
\ee
we can use the pushforward theorems from Section \ref{Sec:Intersection} to get: 
\begin{align}
\begin{aligned}
F_8(\phi)=&\    4 S (L-S)\phi _0^3 
+3 L S \phi _0 \phi _1^2 +3 S (S-2 L) \phi _0^2  \phi _1 \\
&+4 S  (L-S)(\phi _1^3+\phi _2^3+ \phi _3^3 +\phi _5^3 + \phi _6^3+ \phi _7^3) +2 S (S-2 L) \phi _4^3 \\
&+3 S  (2 L-S) \phi _1 \phi _2^2+3 S  (2 S-3 L) \phi _1^2 \phi _2  +3 S (3 L-2 S) \phi _2 \phi _3^2+3 S (3 S-4 L) \phi _2^2 \phi _3\\
&+6 S (5 L-4 S) \phi _5 \phi _6^2 +3 S (9 S-11 L)  \phi _5^2 \phi _6 +3 S (4 S-5 L)  \phi _3^2 \phi _7\\
&+3 S (L-S) \phi _4 \phi _5^2 +3 S (4 S-5 L)\phi _3^2 \phi _4 +6 S  (S-L) \phi _4^2 \phi _5\\
&+6S(3S-4L)\phi _4 \left( \phi _6^2 -\phi _5 \phi _6+\phi _7^2\right)+ 3S(4 L-3 S) \phi _3 ( \phi _4+\phi _7)^2.
\end{aligned}
\end{align}

Now that we have the triple intersection polynomial for each chamber, we conclude this appendix with a brief discussion of how to use this data to learn about the geometry of the fibral divisors. In particular, a necessary condition for a divisor D$_i$ to be a $\mathbb{P}^1$-bundle without singular fibers is that
\begin{equation}
D_i^3 =4S(L-S). 
\end{equation}
By looking at the  Fermat terms of  the triple intersection polynomials F$_8$, F$_6$, F$_5$, and F$_4$, we see that we recover the following information: 
\begin{enumerate}
\item In Ch$_4$, D$_3$ and D$_6$ are not  $\mathbb{P}^1$-bundles.
\item In Ch$_5$, D$_4$, D$_6$, and D$_7$ are not $\mathbb{P}^1$-bundles.
\item In Ch$_6$, D$_5$ and D$_7$ are not $\mathbb{P}^1$-bundles.
\item In Ch$_8$, D$_4$ is not a $\mathbb{P}^1$-bundle. 
 \end{enumerate} 
We note that these conclusions are consistent with the analysis in \cite{E7}. 

For example, in chamber 8, since the divisors D$_a$ for $a=0,1,2,3,5,6,7$ have fibers that do not degenerate, they are projective bundles. We can check that their triple intersection numbers are as expected: 
\begin{align}
D_0^3 =D_1^3=D_2^3=D_3^3=D_5^3=D_6^3=D_7^3=4 (L - S) S.
\end{align}
The divisor D$_4$ has a fiber that degenerates with the appearance of two new curves over $V(a,s)$. This is also reflected in its triple intersection: 
\begin{equation}
D_4^3 = 2S(-2L+S),
\end{equation}
which differs from that of a projective bundle over $S$ by 2 for each point of $V(a,s)$:
\begin{equation}
D_4^3=4(L-S) S-2(4L-3S)S=4(L-S) S - 2 [a]. S,
\end{equation}
where we used 
\begin{equation}
(4L-3S)S= [a]\cdot [s].
\end{equation} 
We see D$_4$ has the same self-triple intersection as a projective bundle with $2[a].S$ points blown-up.

\section{Fibral divisors from scaling}\label{fibdivpf}
In this appendix, we demonstrate an alternate route to Table~\ref{Table:Div} via scaling methods, using $D_3$ as an example.
The fibral divisor D$_3$ is not a projective bundle for Y$_4$ since there the curve C$_3$ can degenerate in codimension-two. For Y$_5$, Y$_6$, and Y$_8$, the fibral divisor D$_3$ is the same $\mathbb{P}^1$-bundle up to isomorphism and we will now determine its isomorphism class. 
It is enough to focus on the first 5 blowups these varieties have in common. Our divisor is defined by
\begin{equation}
D_3: e_4=e_5=0,
\end{equation}
and on this locus, we have the coordinates
\begin{equation}
[0:0:\ell_1 s]\  [\ell_1 \ell_2 x:0:0] [0:\ell_1^{-1} \ell_2 \ell_3 e_1] [\ell_2^{-1} \ell_4 e_2:\ell_4 \ell_3^{-1} e_3] [\ell_1 \ell_2 \ell_3 \ell_5 y:0],
\end{equation}
where we have included the relevant rescaling factors.
\begin{equation}
\begin{array}{|c|c|c|c|c|c|c|c|c|c|c|c|}
\hline
 & s & x& y & e_1 & e_2 & e_3 & e_4 & e_5 \\
\hline
\ell_1& 1 & 1 & 1 & -1 & 0 & 0 & 0 & 0  \\
 \hline
\ell_2 & 0 & 1 & 1 & 1 & -1 & 0 & 0 & 0  \\
 \hline
\ell_3 & 0 & 0 & 1 & 1 & 0 & -1 & 0 & 0  \\
 \hline
\ell_4 & 0 & 0 & 0 & 0 & 1 & 1 & -1 & 0  \\
 \hline
\ell_5 & 0 & 0 & 1 & 0 & 0 & 0 & 1 & -1  \\
 \hline
\end{array}
\end{equation}
We recall that the components of a given set of projective coordinates cannot be simultaneously zero.  
Thus, the fibral divisor  D$_3$ is defined in  the patch 
\begin{equation}
s x y e_1 \neq 0.
\end{equation}
To normalize the coordinates $[0:0: s] [x:0:0] [0:e_1] [y:0]$ to $[0:0:1] [1:0:0] [0:1][1:0]$, we take 
\begin{equation}
\ell_1= s^{-1},\quad  \ell_2=s x^{-1},\quad  \ell_3 = e_1^{-1}  s^{-2}  x,\quad  \ell_5=   e_1 s^2 y^{-1}.
\end{equation} 
This implies that the fiber is 
\begin{equation}
[e_2 \frac{x}{s}: e_1e_3 \frac{s^2}{x}]\cong[e_2 x^2: e_1e_3 s^3],
\end{equation}
and we deduce that 
\begin{equation}
D_3\cong \mathbb{P}_S(\mathscr{S}^{\otimes 3}\oplus \mathscr{L}^{\otimes 4}),
\end{equation}   
which agrees with the corresponding entries in Table~\ref{Table:Div}.

\section{Vertical rational surfaces  
}\label{highercodim}
In this appendix, we prove Theorem \ref{thm:Q} by analyzing the isomorphism class of the curve C$_6$ over the locus $V(a,b)\cap S$. 
This requires  a careful analysis of the projective space defined from $X_0$ by the sequence of blowups. 
\subsection{The vertical surface Q$_8$}

In the case of $Y_8$, the defining equation for C$_6$ is 
\begin{equation}
C_6: \quad e_7=b e_1 e_3e_4e_5e_6 s^2 + a x=0.
\end{equation}
The projective coordinates of the fiber of $X_7$ over $X_0$ are: 
\begin{equation}
[0:0:s][x:0: e_1 e_3 e_4e_5 e_6][0:e_1][0:e_3][0:e_4][0:e_5][y:e_2], 
\end{equation}
which shows C$_6$ is defined in the open patch 
\begin{equation}se_1 e_3 e_4 e_5 \neq 0.\end{equation}
The scaling symmetries due to the respective blowups from $X_0$ to $X'_7$ are:
\begin{equation}
\begin{array}{|c|c|c|c|c|c|c|c|c|c|c|}
\hline
X'_7& s & x& y& e_1 & e_2 & e_3 & e_4 & e_5 & e_6 & e_7 \\
\hline 
\ell_1 & 1 & 1 & 1 & -1 & 0 & 0 & 0 & 0 & 0 & 0 \\
\hline
\ell_2 & 0 & 1 & 1 & 1 & -1 & 0 & 0 & 0 & 0 & 0 \\
\hline
\ell_3 &0 &  0 &  1 & 1& 0& -1& 0& 0& 0& 0\\
\hline
\ell_4 &0 &  0 &  0 & 0& 1& 1& -1& 0& 0& 0\\
\hline
\ell_5 &0 &  0 &  1 & 0& 0& 0& 1& -1& 0& 0\\
\hline 
\ell_6 &0 &  0 &  0 & 0& 1& 0& 0& 1& -1& 0\\
\hline 
\ell_7 &0 &  0 &  1 & 0& 	1& 0& 0& 0& 0& -1\\
\hline 
\end{array}
\end{equation}
We introduce the following linear redefinitions:
\begin{equation}
\begin{array}{|l|c|c|c|c|c|c|c|c|c|c|}
\hline 
X'_7 & s & x & y &e_1& e_2 & e_3& e_4& e_5&e_6& e_7 \\
 \hline 
 \ell'_1=\ell_1+\ell_3+\ell_4+\ell_5+\ell_6
  & 1 & 1 & 3 & 0 & 2 & 0 & 0 & 0 & -1 & 0 \\
  \hline 
 \ell'_2=\ell_2-\ell_3-\ell_4-\ell_5-\ell_6+3\ell_7
 & 0 & 1 & 2 & 0 & 0 & 0 & 0 & 0 & 1 & -3 \\
 \hline 
\ell'_3= \ell_3+\ell_4+\ell_5+\ell_6
 & 0 & 0 & 2 & 1 & 2 & 0 & 0 & 0 & -1 & 0 \\
 \hline 
\ell'_4= \ell_4+\ell_5+\ell_6& 0 & 0 & 1 & 0 & 2 & 1 & 0 & 0 & -1 & 0 \\
 \hline 
\ell'_5=\ell_5+\ell_6& 0 & 0 & 1 & 0 & 1 & 0 & 1 & 0 & -1 & 0 \\
\hline 
\ell'_6= \ell_6& 0 & 0 & 0 & 0 & 1 & 0 & 0 & 1 & -1 & 0 \\
 \hline 
\ell'_7=\ell_7 & 0 & 0 & 1 & 0 & 1 & 0 & 0 & 0 & 0 & -1 \\
\hline 
\end{array}
\end{equation}
We fix $(s, e_1, e_3, e_4, e_5)$ by using $(\ell'_1, \ell'_3,\ell'_4, \ell'_5, \ell'_6)$, respectively.
   Then, after imposing $e_7=0$,  we are left with: 
\begin{equation}
\begin{array}{|l|c|c|c|c|}
\hline
Q_8 & x & y& e_2 & e_6 \\
\hline 
\ell'_2=\ell_2-\ell_3-\ell_4-\ell_5-\ell_6+3\ell_7& 1 &2 & 0 & 1 \\
\hline
\ell'_7=\ell_7 & 0 & 1 & 1 & 0\\
\hline
\end{array}
\end{equation}
which is the toric description of the Hirzebruch surface $\mathbb{F}_2$.

\subsection{The vertical surface Q$_6$}
In the case of $Y_6$, the defining equation for C$_6$ is 
\begin{equation}
 C_6: \quad e_6=b e_1 e_3 e_4 e_5  s^2 + a  x =0.
 \end{equation}
Imposing $e_6=0$ gives the following projective coordinates
\begin{align}
\begin{aligned}
[0 :0 : s]\   [x:  0:e_1 e_3 e_4 e_5  ]\ 
   [0 :e_1 ]\  
   [0 :e_3]\   [0 :e_4]\  [y :e_2 e_7] \  [e_2:0 ],
  \end{aligned}
\end{align}
which implies that 
\begin{equation}
s e_1e_2 e_3 e_4\neq 0.
\end{equation}
The defining equation of C$_6$ gives a full rational surface Q$_6$ when $a=b=0$. 
 The successive blowups that produced X$^-_7$ give the following scalings: 
\begin{equation}
\begin{array}{|c|c|c|c|c|c|c|c|c|c|c|c|}
\hline
X^-_7 & s & x& y & e_1 & e_2 & e_3 & e_4 & e_5 & e_6 & e_7\\
\hline
\ell_1& 1 & 1 & 1 & -1 & 0 & 0 & 0 & 0 & 0 & 0 \\
 \hline
\ell_2 & 0 & 1 & 1 & 1 & -1 & 0 & 0 & 0 & 0 & 0 \\
 \hline
\ell_3 & 0 & 0 & 1 & 1 & 0 & -1 & 0 & 0 & 0 & 0 \\
 \hline
\ell_4 & 0 & 0 & 0 & 0 & 1 & 1 & -1 & 0 & 0 & 0 \\
 \hline
\ell_5 & 0 & 0 & 1 & 0 & 0 & 0 & 1 & -1 & 0 & 0 \\
 \hline
\ell_6 & 0 & 0 & 1 & 0 & 1 & 0 & 0 & 0 & -1 & 0 \\
 \hline
\ell_7 & 0 & 0 & 0 & 0 & 1 & 0 & 0 & 0 & 1 & -1 \\
 \hline
\end{array}
\end{equation}
We conveniently redefined them as follows:
\begin{equation}
\begin{array}{|c|c|c|c|c|c|c|c|c|c|c|}
\hline
 X^-_7& s & x & y & e_1 & e_2 & e_3& e_4&e_5&e_6& e_7 \\
 \hline
 \ell_1+\ell_3+\ell_4+\ell_5-\ell_6
 & 1 & 1 & 2 & 0 & 0 & 0 & 0 & -1 & 1 & 0 \\
 \hline
 \ell_2-\ell_3-\ell_4-\ell_5+\ell_7 & 0 & 1 & -1 & 0 & -1 & 0 & 0 & 1 & 1 & -1 \\
 \hline
 \ell_3+\ell_4+\ell_5-\ell_6
  & 0 & 0 & 1 & 1 & 0 & 0 & 0 & -1 & 1 & 0 \\
 \hline
\ell_4+\ell_5-\ell_6
 & 0 & 0 & 0 & 0 & 0 & 1 & 0 & -1 & 1 & 0 \\
 \hline
 \ell_5-\ell_6+\ell_7
 & 0 & 0 & 0 & 0 & 0 & 0 & 1 & -1 & 2 & -1 \\
 \hline
 \ell_2-\ell_3-\ell_4-\ell_5+2\ell_6& 0 & 1 & 1 & 0 & 0 & 0 & 0 & 1 & -2 & 0 \\
 \hline
 \ell_6-\ell_7 & 0 & 0 & 1 & 0 & 0 & 0 & 0 & 0 & -2 & 1 \\
 \hline
\end{array}
\end{equation}
We can then fix $(s,e_1, e_2, e_3, e_4)$ by using $(\ell'_1,\ell'_2,\ell'_3,\ell'_4,\ell'_5)$,  respectively, and after imposing $e_6=0$,  we are left with: 
\begin{equation}
\begin{array}{|l|c|c|c|c|}
\hline
Q_6&  x & y & e_5 & e_7 \\
 \hline
\ell'_6=\ell_2-\ell_3-\ell_4-\ell_5+2\ell_6& 1 & 1 & 1 & 0 \\
 \hline
\ell'_7=  \ell_6-\ell_7 & 0 & 1 & 0 & 1 \\
 \hline
\end{array}
\end{equation}
which shows that Q$_6$ is isomorphic to a Hirzebruch surface $\mathbb{F}_1$.

\subsection{The vertical surface Q$_5$}
The surface Q$_5$ is defined by 
\begin{equation} C_6: \quad e_6=b e_1 e_3 e_4 e_5 e_7 s^2 + a  x =0, \end{equation}
which reduces to
$e_6=0$ over $V(a,b)\cap S$ in Y$_5$. 
The projective coordinates are: 
\begin{align}
\begin{aligned}
[0:0: s]\   [x: 0:e_1 e_3 e_4 e_5 e_7 ]
   [0 :e_1 ]\  
   [0 :e_3]\   [0:e_4]\  [y :e_2 e_7] \  [e_2:e_5 ],
  \end{aligned}
\end{align}
which imply that 
\begin{equation}
s e_1 e_3 e_4 \neq 0.
\end{equation}
The successive blowups defining X$^+_7$ give the scalings:
\begin{equation}
\begin{array}{|c|c|c|c|c|c|c|c|c|c|c|}
 \hline
X^+_7& s & x & y & e_1 &e_2 & e_3 &e_4 & e_5 & e_6 & e_7\\
 \hline
 \ell_1& 1 & 1 & 1 & -1 & 0 & 0 & 0 & 0 & 0 & 0 \\
 \hline
 \ell_2& 0 & 1 & 1 & 1 & -1 & 0 & 0 & 0 & 0 & 0 \\
 \hline
\ell_3&  0 & 0 & 1 & 1 & 0 & -1 & 0 & 0 & 0 & 0 \\
 \hline
\ell_4&  0 & 0 & 0 & 0 & 1 & 1 & -1 & 0 & 0 & 0 \\
 \hline
\ell_5&  0 & 0 & 1 & 0 & 0 & 0 & 1 & -1 & 0 & 0 \\
 \hline
\ell_6&  0 & 0 & 1 & 0 & 1 & 0 & 0 & 0 & -1 & 0 \\
 \hline
\ell_6&  0 & 0 & 0 & 0 & 1 & 0 & 0 & 1 & 0 & -1 \\
 \hline
\end{array}
\end{equation}
which we redefine as follows:
\begin{equation}
\begin{array}{| l |c|c|c|c|c|c|c|c|c|c|}
 \hline
X^+_7& s & x & y & e_1 & e_2& e_3 & e_4 &e_5 & e_6 & e_7 \\
  \hline
\ell'_1= \ell_1+\ell_3+\ell_4+\ell_5-\ell_6& 1 & 1 & 2 & 0 & 0 & 0 & 0 & -1 & 1 & 0 \\
  \hline
\ell'_2=\ell_2-\ell_3-\ell_4-\ell_5+2\ell_6 & 0 & 1 & 1 & 0 & 0 & 0 & 0 & 1 & -2 & 0 \\
 \hline
\ell'_3=\ell_3+\ell_4+\ell_5-\ell_6  & 0 & 0 & 1 & 1 & 0 & 0 & 0 & -1 & 1 & 0 \\
 \hline
\ell'_4= \ell_4+\ell_5-\ell_6 & 0 & 0 & 0 & 0 & 0 & 1 & 0 & -1 & 1 & 0 \\
  \hline
\ell'_5= \ell_5-\ell_6+\ell_7 & 0 & 0 & 0 & 0 & 0 & 0 & 1 & 0 & 1 & -1 \\
  \hline
\ell'_6= \ell_6 
& 0 & 0 & 1 & 0 & 1 & 0 & 0 & 0 & -1 & 0 \\
 \hline
\ell'_7=\ell_7 & 0 & 0 & 0& 0 & 1 & 0 & 0 & 1 & 0 & -1 \\
 \hline
\end{array}
\end{equation}
We can then fix $(s, e_1, e_3 ,e_4)$ using $(\ell'_1, \ell'_3, \ell'_4, \ell'_5)$. After imposing $e_6=0$, we are left with: 
\begin{equation}
\begin{array}{|l|c|c|c|c|c|}
\hline
Q_5 & x & y & e_2 & e_5 & e_7 \\
 \hline
 \ell'_2=\ell_2-\ell_3-\ell_4-\ell_5+2\ell_6&1 & 1 & 0 & 1 & 0 \\
 \hline
 \ell'_6=\ell_6& 0 & 1 & 1 & 0 & 0 \\
 \hline
\ell'_7=\ell_7& 0 & 0 & 1 & 1 & -1 \\
 \hline
\end{array}
\end{equation}
which shows that Q$_5$ is  a Hirzebruch surface $\mathbb{F}_1$ blown-up at a point (namely $e_2=e_5=0$) of its unique curve of self-intersection $-1$.

\subsection{The vertical surface Q$_4$}

The surface Q$_4$ is defined by
\begin{equation}
C_6:\quad e_5=b e_1 e_3 e_4 e_6 e_7^2 s^2 + a x =0,
\end{equation}
which reduces to $e_5=0$ over $V(a,b)\cap S$. 
The projective coordinates are
\begin{align}
\begin{aligned}
[0 :0 : s]\   [x: 0:e_1 e_3 e_4 e_6 e_7^2 ]\  
   [0 :e_1 ]\  
   [0 :e_3]\   [e_6 e_7 y :e_2]\  [y :e_4 e_7] \  [e_4:e_6 ],
  \end{aligned}
\end{align}
which means that we have 
\begin{equation}
s e_1 e_3 \neq 0.
\end{equation}
The successive blowups defining X$''_7$ give the scalings:
\begin{equation}
\begin{array}{|c|c|c|c|c|c|c|c|c|c|c|}
 \hline
X''_7  & s & x & y & e_1 & e_2 & e_3 & e_4 & e_5 & e_6& e_7\\
  \hline
 \ell_1 & 1 & 1 & 1 & -1 & 0 & 0 & 0 & 0 & 0 & 0 \\
  \hline
 \ell_2 & 0 & 1 & 1 & 1 & -1 & 0 & 0 & 0 & 0 & 0 \\
  \hline
 \ell_3 & 0 & 0 & 1 & 1 & 0 & -1 & 0 & 0 & 0 & 0 \\
  \hline
 \ell_4 & 0 & 0 & 0 & 0 & 1 & 1 & -1 & 0 & 0 & 0 \\
  \hline
 \ell_5 & 0 & 0 & 1 & 0 & 1 & 0 & 0 & -1 & 0 & 0 \\
  \hline
 \ell_6 & 0 & 0 & 1 & 0 & 0 & 0 & 1 & 0 & -1 & 0 \\
  \hline
 \ell_7 & 0 & 0 & 0 & 0 & 0 & 0 & 1 & 0 & 1 & -1 \\
  \hline
\end{array}
\end{equation}
We redefine them as follows:
\begin{equation}
\begin{array}{|l|c|c|c|c|c|c|c|c|c|c|}
  \hline
X''_7  & s & x & y &e_1 &e_2 & e_3& e_4 &e_5& e_6 & e_7 \\
   \hline
\ell'_1=\ell_1+\ell_2& 1 & 2 & 2 & 0 & -1 & 0 & 0 & 0 & 0 & 0 \\
  \hline
\ell'_2=\ell_2-\ell_3-\ell_4+2\ell_5 & 0 & 1 & 2 & 0 & 0 & 0 & 1 & -2 & 0 & 0 \\
  \hline
\ell'_3=\ell_3+\ell_4 & 0 & 0 & 1 & 1 & 1 & 0 & -1 & 0 & 0 & 0 \\
  \hline
\ell'_4= \ell_4 & 0 & 0 & 0 & 0 & 1 & 1 & -1 & 0 & 0 & 0 \\
   \hline
\ell'_5= \ell_5 & 0 & 0 & 1 & 0 & 1 & 0 & 0 & -1 & 0 & 0 \\
   \hline
\ell'_6= \ell_6 & 0 & 0 & 1 & 0 & 0 & 0 & 1 & 0 & -1 & 0 \\
   \hline
\ell'_7= \ell_7& 0 & 0 & 0 & 0 & 0 & 0 & 1 & 0 & 1 & -1 \\
   \hline
\end{array}
\end{equation}
which allows us to fix $(s,e_1, e_3)$ using $(\ell'_1, \ell'_2$, $\ell'_3)$. After imposing $e_5=0$, we are left with: 
\begin{equation}
\begin{array}{|l|c|c|c|c|c|c|}
   \hline
Q_4 & x & y &e_2 & e_4& e_6 & e_7 \\
    \hline
\ell_2'= \ell_2-\ell_3-\ell_4+2 \ell_5 & 1 & 2 & 0 & 1 & 0 & 0 \\
    \hline
\ell'_5= \ell_5 & 0 & 1 & 1 & 0 & 0 & 0 \\
    \hline
\ell'_6= \ell_6 & 0 & 1 & 0 & 1 & -1 & 0 \\
    \hline
\ell'_7= \ell_7 & 0 & 0 & 0 & 1 & 1 & -1 \\
    \hline
\end{array}
\end{equation}
which is a Hirzebruch surface $\mathbb{F}_2$ (parametrized by $(x,y,e_2, e_4)$), blown-up at a point $P: y=e_4=0$ of its curve of self-intersection $2$, followed by a blowup of the intersection  point $(e_4=e_5=0)$ of the resulting exceptional fiber and the proper transform of the fiber over the point $P$. 

\section*{Acknowledgements}
 M.E. is supported in part by the National Science Foundation (NSF) grant DMS-1701635 ``Elliptic Fibrations and String Theory.'' 
We would like to thank Patrick Jefferson and Monica Jinwoo Kang for  conversations.

\end{document}